\documentclass[11pt]{amsart}

\usepackage[babel,threshold=2]{csquotes}
\usepackage{amsfonts}
\usepackage{amssymb}
\usepackage{enumerate}
\usepackage{amsmath}
\usepackage{amsthm}
\usepackage{graphicx}
\usepackage{color,soul}
\usepackage[dvipsnames]{xcolor}
\usepackage[english]{babel}
\usepackage[doi=false,backend=biber,style=numeric,sorting=none]{biblatex}
\usepackage{comment}
\usepackage{tensor}
\usepackage{hyperref}
\usepackage{multirow}
\usepackage{subcaption}
\usepackage{array}
\usepackage{float}
\usepackage{tikz}
\usepackage{bbm}

\usetikzlibrary{positioning,fit,backgrounds,shapes.geometric}

\numberwithin{equation}{section}

\newtheorem{theorem}{Theorem}[section]

\newtheorem{claim}[theorem]{Claim}

\newtheorem{corollary}[theorem]{Corollary}

\newtheorem{fact}[theorem]{Fact}

\newtheorem*{T:equiv}{Theorem~\ref{t:equiv}}
\newtheorem*{C:prec}{Corollary~\ref{c:prec}}

\theoremstyle{definition}
\newtheorem{example}[theorem]{Example}
\newtheorem{remark}[theorem]{Remark}
\newtheorem{definition}[theorem]{Definition}

\DeclareMathOperator{\Tr}{Tr}

\DeclareMathOperator{\PG}{PG}

\newcommand{\RR}{\mathbb{R}}

\newcommand{\CC}{\mathbb{C}}

\newcommand*{\defeq}{\stackrel{\text{def}}{=}}
\newcommand{\RNum}[1]{\uppercase\expandafter{\romannumeral #1\relax}}

\renewcommand{\epsilon}{\varepsilon}

\begin{document}

\title[Beyond Gaussian Entanglement]{Beyond Gaussian Entanglement: Practical Criteria for Polynomially Excited Quantum States}

\author[G. Homa]{G\'abor Homa}
\address{HUN-REN Wigner Research Centre for Physics, P.~O.~Box 49, H-1525 Budapest, Hungary}
\email{homa.gabor@wigner.hun-ren.hu}

\author[R. Balka]{Rich\'ard Balka} 
\address{HUN-REN Alfr\'ed R\'enyi Institute of Mathematics, Re\'altanoda u.~13--15, H-1053 Budapest, Hungary, AND Institute of Mathematics and Informatics, Eszterh\'azy K\'aroly Catholic University, Le\'anyka u.~4, H-3300 Eger, Hungary}
\email{balkaricsi@gmail.com}

\author[N. N\'emet]{Nikolett N\'emet}
\address{HUN-REN Wigner Research Centre for Physics, P.~O.~Box 49, H-1525 Budapest, Hungary}
\email{nemet.nikolett@wigner.hun-ren.hu}

\author[A. Bodor]{Andr\'as Bodor}
\address{HUN-REN Wigner Research Centre for Physics, P.~O.~Box 49, H-1525 Budapest, Hungary, AND Institute of Mathematics and Informatics, Faculty of Science,
University of P\'ecs, H-7624 P\'ecs, Ifj\'us\'ag \'utja 6, Hungary}
\email{bodor.andras@wigner.hun-ren.hu}

\author[T. Kiss]{Tam\'as Kiss}
\address{HUN-REN Wigner Research Centre for Physics, P.~O.~Box 49, H-1525 Budapest, Hungary}
\email{kiss.tamas@wigner.hun-ren.hu}

\author[A. Csord\'as]{Andr\'as Csord\'as}
\address{Department of Physics of Complex Systems, E\"{o}tv\"{o}s Lor\'and University, P\'azm\'any P\'eter s\'et\'any 1/A, H-1117 Budapest, Hungary}
\email{csordas@tristan.elte.hu}


\keywords{  oscillator systems, physical rule, entanglement, trace-class, integral operators, density operators, quantum theory, polynomial excitation, experimental realization}

\begin{abstract}

 Polynomial excitations consisting of creation and annihilation operators are studied. Subsystem entanglement is proven produced and preserved by such manipulations.
NPT entanglement already present between any two subsystems of a general pure or mixed multipartite Gaussian state is preserved under all possible polynomial excitations. Nevertheless these excitations 
can create NPT entanglement between initially separable subsystems. Thus it is implied that  in this case the number of NPT entangled subsystem pairs cannot decrease.

Polynomial excitations of a wide class of initially separable, special, and physically important pure states are considered. For these states, the method introduced in this paper reduces the infinite-dimensional positivity problem to a finite-dimensional one. Specifically, after partial transposition, NPT entanglement is equivalent to a corresponding  bilinear representation matrix of the polynomial not being positive semidefinite.

Furthermore, by using symplectic diagonalization and spectral decomposition, an explicit transformation is constructed between the Fock-space representation  and position representation. It 
is then applied to a highly specific and nontrivial example involving quadratic excitations of a three-mode mixed Gaussian state. In a special case, the behavior of the NPT entanglement is tracked between the subsystems of the chosen system as a function of a continuous parameter $\kappa$, which tunes the non-locality of the excitation. Quantum entanglement negativity is evaluated as a function of the $\kappa$ parameter. Under a specifically chosen quadratic Hamiltonian, the resulting unitary time evolution maintains NPT entanglement for long intervals, interrupted only by short periods of PPT states. Finally, a closed analytical expression is derived for the covariance matrix of a general quadratic-polynomial Gaussian operator for an arbitrary number of degrees of freedom in phase space. 

In summary, our theoretical framework directly connects mathematical foundations, theoretical modeling, and concrete, experimentally and technologically feasible implementations.

\end{abstract}

\maketitle
\begin{center}
   \today 
\end{center} 
\section{Introduction}

Characterizing quantum entanglement in continuous-variable multipartite quantum systems is a highly non-trivial task. While the rigorous foundational framework for the positivity and entanglement of polynomial Gaussian integral operators was recently established in Ref.~\cite{japcsi}, classifying quantum correlations in  complex, multi-mode systems remains a long-standing challenge.  In this paper, we extend our previous approach into an exact, analytical methodology of entanglement detection valid for an arbitrary finite number of degrees of freedom and subsystems. We show that this infinite-dimensional spectral problem can be solved without any approximations in many important cases, providing a complete and rigorous description of continuous-variable quantum correlations.

Our results apply to all physical systems and dynamics that preserve the Gaussian or polynomial Gaussian nature of the density operator.
Such dynamics appear ubiquitously: atomic and nuclear vibrations, phonons in crystal lattices, molecular vibrations, quantized electromagnetic modes of optical cavities, and small fluctuations of Bose--Einstein condensates \cite{sakurai2017modern, walls2008quantum, weinberg1995quantum}. The harmonic approximation holds when deviations from equilibrium are small and nonlinear effects are negligible. Beyond this regime, anharmonicity, interactions, measurement back-action, or intrinsic nonlinearities give rise to non-Gaussian phenomena, including for example quantum chaos and higher-order correlations \cite{haake2010quantum, girardeau1960relationship}. Of fundamental importance, oscillators are also extensively studied within the framework of open quantum systems \cite{vernon,HPZ,H-Y,Bernád2018,HBCSCS,HCSB,japcsi2osc}, a context where they can serve to explore foundational aspects of quantum gravity (see for example, \cite{universe10080306}).

The oscillator paradigm extends to fundamental and effective field theories. In quantum field theory, a free scalar field is equivalent to an infinite set of independent harmonic oscillators, with each mode representing an oscillator degree of freedom \cite{peskin2018introduction, weinberg1995quantum}. Second quantization generalizes this concept to bosonic and fermionic modes, incorporating quantum statistics and linear response \cite{fetter1971quantum}. Low-energy effective theories in condensed matter often rely on oscillator-like collective modes, such as phonons as Goldstone bosons, or other emergent quasiparticles \cite{altland2010condensed, zinn2002quantum}. In experiments, these systems are realized with high precision using trapped ions, optical and magnetic traps, nanomechanical and superconducting resonators, and cavity QED setups \cite{leibfried2003quantum, bloch2008many, aspelmeyer2014cavity, gu2017microwave, haroche2006exploring}. Their well-defined spectra, controllability, and low noise make them ideal platforms for fundamental studies and quantum technologies, where our exact entanglement verification method can be directly utilized \cite{weedbrook2012gaussian, adesso2014continuous, caves1981quantum, braunstein2005quantum, genoni2010quantifying}.

\subsection{Experimental Platforms and Applications}

Gaussian quantum systems form a central framework in continuous-variable quantum physics, quantum optics, and quantum information theory \cite{weedbrook2012gaussian,adesso2014continuous}.  A laser incorporated in many experiments, produces Gaussian quantum states to a good approximation. Most established components of 
a given setup preserve the Gaussianity of the state in phase space (Table~\ref{tab:cv_operations}.):

\begin{enumerate}
\item With a suitable combination of beam splitters,
any orthogonal Bogoliubov transformation of the input modes becomes feasible (see eqs. \eqref{eq:gauss_operator},\eqref{eq:Shat_operator}, and \eqref{eq:tilde_creation_annihilation_def}).
\item  Optical parametric ampliﬁers (like nonlinear crystals) 
can squeeze the input states.
\item  Beamsplitters with a strong coherent field can realize the displacement operator ($\hat{\mathbf{D}}_\mathbf{r}$) \cite{serafini2024quantum}. 

\end{enumerate}

Beam splitter and squeezing operations are presented as symplectic transformations. When a combination of these transformations act non-locally (which is the typical case), they can modify the entanglement between the involved modes. In contrast, the displacement operation, which corresponds to a translation in phase space, does not change the entanglement. However, in experiments,
it is essential for handling coherent states.
Since the displacement operation leaves the entanglement and separability between modes or subsystems invariant, it can be safely omitted from the present analysis without loss of generality.

	\begin{table}[htbp]
		\centering
		\caption{Comparison of some important continuous-variable quantum operations}
		\label{tab:cv_operations}
		\resizebox{\textwidth}{!}{
			\begin{tabular}{|>{\raggedright\arraybackslash}p{3.5cm}|
			                 >{\raggedright\arraybackslash}p{3.7cm}
			                |>{\raggedright\arraybackslash}p{3.7cm}
			                |>{\raggedright\arraybackslash}p{2.2cm}
			                |>{\raggedright\arraybackslash}p{3.0cm}|}
				\hline
				\textbf{Physical operation, device} & \textbf{Geometric effect in phase space} & \textbf{Matrix/Operator type (phase space)} & \textbf{Photon number} & \textbf{Purity of $\hat{\rho}$ } \\ \hline
				\textbf{Displacement} & Displacement, linear shift & Affine transformation & Changes & \textbf{Purity preserved} ($\text{Tr}\hat{\rho}^2 = \text{Tr}\hat{\rho}^{'2}$) \\ \hline
				\textbf{beam splitter} & Rotation between modes & Symplectic (orthogonal) & Conserves & \textbf{Purity preserved} ($\text{Tr}\hat{\rho}^2 = \text{Tr}\hat{\rho}^{'2}$) \\ \hline
				\textbf{Squeezing} & Scaling (squeezing / stretching) & Symplectic  (hyperbolic) & Changes & \textbf{Purity preserved} ($\text{Tr}\hat{\rho}^2 = \text{Tr}\hat{\rho}^{'2}$) \\ \hline
                \textbf{Amplitude damping} &  Contraction towards the origin / scaling of fluctuations & Acts in Gaussian channel  & Decreases & \textbf{Can become mixed} ($\text{Tr}\hat{\rho}'^2 \le 1$) \\ \hline
                \textbf{Phase damping} & Phase randomization / angular blurring & Acts in decoherence channel & Conserves & \textbf{Can become mixed} ($\text{Tr}\hat{\rho}'^2 \le 1$) \\ \hline
               	\textbf{Polynomial excitation} & Non-linear distortion (Wigner function ripples / negative regions) & Non-Gaussian operation & Alters (depends on polynomial) & \textbf{Can become mixed} ($\text{Tr}\hat{\rho}'^2 \le 1$) \\ \hline
				\end{tabular}
		}
	\end{table}

Accounting for losses, each channel can be represented in its own way. Amplitude damping for example can be modeled using beam splitters with one output beam ignored.
In this case, the system is no longer closed and in general evolves into a mixed Gaussian state.
Phase damping describes the loss of quantum coherence due to phase noise. In quantum optics, it can be implemented using phase modulators or by coupling the system to a noisy environment. It is non-unitary and does not necessarily preserve Gaussianity 
\cite{Chen2006PhysRevASimultaneousAmplitudeAndPhaseDamping,Hiroshima2001PhysRevADecoherence,BreuerPetruccione2002OpenQuantumSystems}.

\href{}{}    

The processes listed above (except for phase damping and polynomial excitation) constitute the framework within which starting from the vacuum state of the input modes,
in principle, any Gaussian state (and only that) can be realized. 
A beneficial property of Gaussian states is that they can be completely characterized by their first and second statistical
moments, while all higher-order moments are fixed functions of these quantities \cite{serafini2024quantum,adesso2007entanglement}.
This structural simplicity also underlies well-known limitations of Gaussian resources
for quantum information processing; in particular, non-Gaussian resources are required for universal
continuous-variable quantum computation \cite{Lloyd1999QuantumComputation,Bartlett2002ClassicalSimulation}. This shows up for example in
the impossibility of entanglement distillation using only Gaussian resources and operations \cite{eisert2002distilling,fiurasek2002gaussian,giedke2002characterization}. Nevertheless, creation of generic Gaussian states forms the foundation of non-Gaussian state preparation.

Contrary to the need for non-linearity in discrete variable quantum computation, the need for non-Gaussianity does not come from the impossibility of entanglement in Gaussian states. For example, general squeezed states show entanglement between different optical modes. There is, however, an important distinction between bound and distillable entanglement. In continuous-variable systems, entanglement is often analyzed using the positive partial transpose (PPT) criterion \cite{Perescrit,horodecki1996separability}. In this case, the partial transpose of a bipartite density operator is taken in order to see whether the obtained operator is positive semidefinite or not. A PPT state may be separable or PPT entangled; in the latter case its entanglement is bound and cannot be distilled. A negative eigenvalue of the partial transpose is sufficient to establish that the state is NPT and hence entangled. NPT is a strong signature of operationally useful entanglement: in particular, every bipartite NPT Gaussian state is distillable~\cite{giedke2001distillability}. Beyond the Gaussian setting, however, whether every NPT state is distillable remains an open problem~\cite{Horodecki2009}. For Gaussian states, the PPT criterion provides a partial characterization of bipartite entanglement in several relevant settings \cite{SimonR,werner2001bound}.
In multimode systems, the distribution of PPT and non positive partial transpose (NPT) bipartitions depends sensitively on the structure of the state and on the operations applied \cite{adesso2014continuous, He2026}.

 Preparing and manipulating non-Gaussian states experimentally are much more difficult than
 Gaussian ones. However, there were significant advances
in this area
in the last 25 years. Reviews of this field can be found in
\cite{lvovsky2020productionapplicationsnongaussianquantum}, 
\cite{BIAGI2022100414} and \cite{Walschaers2021PRXQuantum.2.030204}.
Within quantum optics, there are several techniques for creating non-Gaussian states, such as
 warm atomic vapour \cite{Mika2022npjQuantumInfSingleModeQuantumNonGaussianLight},
 cavity QED \cite{Deleglise2008NatureReconstructionOfNonClassicalCavityFieldStates,Nogues2000PhysRevAMeasurementOfANegativeValueForTheWignerFunction},
nonlinear superconducting circuits \cite{Hofheinz2008NatureGenerationOfFockStates,Vlastakis2013ScienceDeterministicallyEncodingQuantumInformation,ChangPhysRevXObservationOfThreePhotonSpontaneousParametricDownConversion},
 optomechanics \cite{Rakhubovsky2024ProgressInQuantumElectronicsQuantumNonGaussianOptomechanics} and within these areas,
 polynomial excitations of phonons, i.e. motional degrees of freedom in trapped ion chain \cite{Meekhof1996,Schneider2012,Fluhmann2019encoding} and  quantum optomechanics \cite{Enzian2021}, magnon excitations in ultracold gases \cite{bloch2008many}  and solid-state magnetism  \cite{Noel2025nonlinear}. 
It is a common experience that experimentally prepared states can be expressed as a finite number
of creation and annihilation operators acting on a Gaussian state \cite{Takase2026,Thapliyal2024,NeergaardNielsen2011b} (more details in Appendix \ref{Appendix_photon_add&subtract}.
In these cases - even though the system is described by an infinite-dimensional Hilbert space -
the tools developed in this article may help determine the presence and type of quantum entanglement
between modes. The results in Section~\ref{sub:GKLC_excitations}
(cf. Eq.~\eqref{eq:quadratic_ploynomial_gaussian}-\eqref{eq:example_Fock_repr})
and Appendix~\ref{sec:appandix_A} help
express a state with photon additions and subtractions as
a Gaussian times a polynomial in $\mathbf x$ and $\mathbf y$
(in position representation).
Then with the help of theorems in Section~\ref{sec:theorems_for_entanglement}
it can be determined whether the state is NPT with respect to a chosen bipartition, allowing an experimenter to adjust the parameters when designing a new setup. (As an example an experimenter may wish
to create a scenario like in Figure~\ref{fig:experiment_scenario} where
a polynomial excitation creates entanglement between previously
separable subsystems.)

This Paper mostly focuses on non-Gaussianity generated by polynomial excitation (Eq.~\eqref{eq:op_symbolically_1} and Eq.~\eqref{eq:op_symbolically}). Polynomial excitations are  operations that generally produce non-Gaussian quantum states \cite{Filippov2019, Walschaers2017,Lee2010,Fiurasek2009,Olivares2006} and may either generate or preserve entanglement \cite{Cao2026,Lee2011}, depending on the form of the polynomial and the input state. Photon addition and subtraction have many useful applications \cite{Dodonov2009,Sharma2026}, therefore they represent a topical field of research.
Analyzing entanglement, a solid mathematical argument is given for choosing the positivity detection of the partial transpose, showing, among other things, that in case of polynomial excitation of separable pure states, this can be reduced to a linear algebraic problem.   
 This is especially beneficial when analyzing or designing a setup, as verifying NPT entanglement in an infinite dimensional Hilbert space is a tedious task.
 This approach simplifies the verification for polynomially excited special separable pure states by making it finite and simple.  To prove that these states are NPT entangled, 
 one only needs to calculate the eigenvalues of a finite matrix. If a negative value appears, it is a direct proof of entanglement. This gives a fast and analytical tool to test new experimental setups. 
 
 Therefore, the Paper responds to such non-trivial, but related questions as:  Are there universal rules governing quantum entanglement between subsystems under polynomial excitations? What happens with the entanglement already present in a Gaussian state as a result of polynomial excitation?  Does the state behave in the same way under local and nonlocal photon addition and subtraction? How can one efficiently detect entanglement in these settings? How can the entanglement structure evolve as a result of unitary time evolution? 

Polynomial excitations are experimentally accessible in several platforms. In quantum optics, they are commonly implemented using conditional schemes, where measurement outcomes herald the application 
of an effective non-Gaussian operation \cite{zavatta2004quantum,parigi2007probing}. In superconducting circuit systems, intrinsic nonlinear elements provide direct access to non-quadratic interactions
\cite{kirchmair2013observation}. Related mechanisms also arise in trapped-ion and optomechanical systems through effective higher-order couplings \cite{rabl2011photon,kienzler2015quantum}.
These realizations demonstrate that polynomial excitations are physically meaningful operations rather than purely theoretical constructs.
From a physical perspective this class also includes  short-time Kerr-type
nonlinear interactions \cite{kirchmair2013observation}, multi-photon processes, and certain forms of measurement-induced nonlinear back-action \cite{filip2005measurement,fiurasek2009engineering}.

\begin{figure}
\includegraphics[page=1, angle=0, width=0.95\textwidth]{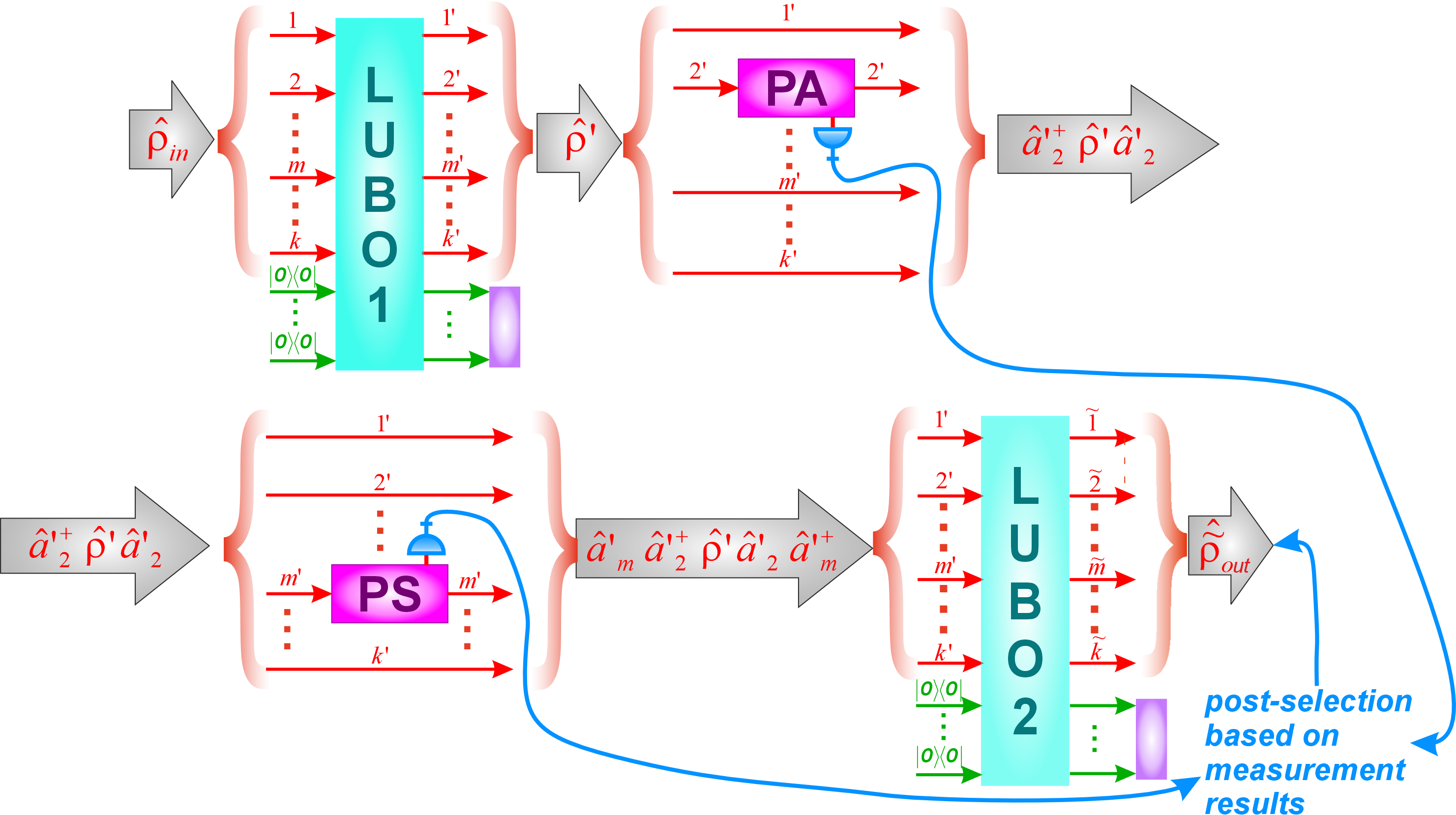}
\caption{
 Schematic block diagram of a possible experimental implementation of polynomial excitations. First, k input  modes  in Gaussian thermal states are subjected to a linear unitary Bogoliubov transformation (LUBO). To account for dissipative processes, optional vacuum modes can be added at the input and traced out at the output (green lines). Photons can then be added or subtracted (PA/PS) from selected modes, conditioned on the corresponding measurement outcomes. Finally, another Bogoliubov transformation can be performed.}
\label{fig:experiment_draft1}
\end{figure}

Figure \ref{fig:experiment_draft1}.~depicts  general schematics
of possible preparation of non-Gaussian states.
From the theoretical point of view it
can be modeled as a sequence of optical manipulations preserving the Gaussian character (linear unitary Bogoliubov transformation (LUBO) (e.g. beam splitters), squeezing transformation (parametric amplifiers), etc.)  \cite{Roh2026} 
and non-Gaussian excitations (usually
a photon addition or subtraction from a single mode \cite{Mogyorosi2019PhysRevA.100.013851,Liu_2015,Xu_2015,Tang_2015}).
With suitable components, a broad class of polynomial Gaussian
state can be realized, even experimental imperfections can be incorporated if they are kept sufficiently low.  See for example \cite{Tatsi2022PhysRevA.105.053701} for an experiment
and \cite{Saidi2026} for a possible metrology application.

As an example, let us review the experiment of Ra et al. \cite{Ra2020natureNonGaussianQuantum}. Instead of spatially separated beams, they manipulated time–frequency Hermite–Gaussian modes \cite{Ansari2018opticaTailoringNonlinearProcesses}. Initially, the states are prepared in a squeezed vacuum in a synchronously pumped optical parametric oscillator. A second beam, with its time–frequency mode engineered by a pulse shaper, is then used as a gate and recombined with the squeezed vacuum in a parametric upconversion process. When an upconverted photon is detected, it heralds the subtraction of one photon from the gate and vacuum beams. By carefully engineering the gate beam, a photon can be subtracted from an arbitrary, experimentally controlled superposition of modes. Finally, the Wigner function of the modes of the resulting state is measured with homodyne detection. The coherent superposition of up to four modes could be controlled; for example, a four-mode linear cluster state and a four-mode square cluster state were created \cite{Briegel2001PhysRevLettClusterStates}.

Building on the strong foundation of some of the author's previous work \cite{japcsi}, this paper analyzes entanglement in experimentally relevant polynomially excited states, including multipartite cases. It demonstrates, in a special but important case, the connection between the evolution of the Gaussian covariance matrix and the general covariance matrix used for entanglement detection. Furthermore, it provides a simple method for detecting entanglement based on linear algebra calculations for a wide class of   polynomial excited separable pure states.

Independently of the dynamics, the Paper identifies a physical invariance rule for polynomial excitations: if two Gaussian subsystems of an arbitrary mixed or pure Gaussian state are initially NPT  entangled, their NPT  entanglement is preserved under polynomial excitations (see Claim~\ref{c:mainfact1} based on \cite{japcsi}).  

A non-trivial example demonstrates that polynomial excitations can generate NPT entanglement between initially separable subsystems. Thus, it is shown that the number of NPT entangled pairs of subsystems cannot decrease under polynomial excitations  of Gaussian systems. This rule can be very important for example from the point of view  of quantum technology, including quantum computing and network setups \cite{weedbrook2012gaussian,braunstein2005quantum}. Every entangled pair of subsystems can serve as a fundamental resource or communication channel \cite{Lloyd1999QuantumComputation}. Since polynomial excitations cannot decrease the number of entangled pairs, they 
preserve all existing channels and can even build new ones between independent modes (see Figure~\ref{fig:experiment_scenario})   \cite[cf.][]{eisert2002distilling,Ourjoumtsev2007}. This property is also useful against environmental noise destroying quantum correlations \cite{serafini2024quantum}, as shown in Table~\ref{tab:cv_operations}. 
It enables robust information processing 
without losing the original quantum correlations.  The  Paper also presents nontrivial examples of this rules 
analyzed in detail.

The Paper aims to provide a complete study of non-Gaussianity, polynomial excitations and entanglement detection with concrete practical proposals, mathematical foundations and calculated specific examples. The remainder of it is organized as follows. Section~\ref{sec:theor_bg}. presents the most important mathematical and theoretical physical foundations for comprehensiveness. 
In Section~\ref{sec:pol_exc_pure}., the concept of bilinear representation matrix is introduced and applied for efficient entanglement detection for a  wide class of density operators. Section ~\ref{sec:theorems_for_entanglement}. discusses our universal results on the entanglement of arbitrary harmonic oscillator systems. Section~\ref{sec:examples}. presents example calculations illuminating the theoretical results. Additionally, we use analytical and numerical methods to explore scenarios beyond the scope of our theoretical framework. Finally, appendices for relevant sections can be found at the end of the Paper.

\section{Theoretical background}
\label{sec:theor_bg}

\subsection{Mathematical background.}

Let $L^2\left(\mathbb{R}^{N}\right)$ denote the Hilbert space of the complex-valued square integrable functions defined on $\RR^N$ with the scalar product $\langle f,g \rangle=\int_{\RR^N} f^{*}(\mathbf x)g(\mathbf x) \, \mathrm{d} \mathbf x$, where $z^{*}$ is the complex conjugate of $z$. A kernel $K \in L^2(\mathbb{R}^{2N})$ defines an integral operator 
\begin{equation*}
\widehat{K}\colon L^2\left(\mathbb{R}^{N}\right)\to L^2\left(\mathbb{R}^{N}\right) \quad 
\textrm{as} \quad\left( \widehat{K} f \right) (\mathbf x) = \int_{\mathbb{R}^{N}}\, K(\mathbf x,\mathbf y) f(\mathbf y)\, \mathrm{d}\mathbf y, 
\end{equation*}
see e.\,g.~\cite{Neumann,Neumann1927,deGosson_book,Stein}. We say that $\widehat{K}$ is \emph{self-adjoint} if $\widehat{K}=\widehat{K}^\dagger$; for continuous kernels $K$ this is equivalent to $K(\mathbf y,\mathbf x)=K^*(\mathbf x,\mathbf y)$ for all $\mathbf x,\mathbf y\in \RR^N$. We call $\widehat{K}$ \emph{positive semidefinite} if $\langle \widehat{K} f, f\rangle \geq 0$ for all $f\in L^2(\RR^N)$. If $\widehat{K}$ is positive semidefinite, it is necessarily self-adjoint. As a \emph{Hilbert--Schmidt} operator $\widehat{K}$ is \emph{compact} \cite[Proposition 9.12]{deGosson_book}, see \cite{deGosson_book} for the definitions. From now on we suppose that $\widehat{K}$ is self-adjoint. Since $\widehat{K}$ is compact and self-adjoint, the spectral theorem \cite[Theorem~3 in Chapter~28]{lax2002functional} yields that it has countably many eigenvalues $\{\lambda_i\}^\infty_{i=0}$, and there exists an orthonormal basis consisting of eigenvectors. Then $\widehat{K}$ satisfies a Fredholm-type integral equation
\begin{equation*}
\int_{\RR^N} \, K(\mathbf x,\mathbf y) \varphi_i(\mathbf y)\,\mathrm{d}\mathbf y=\lambda_i \varphi_i(\mathbf x), \quad \text{where }  \varphi_i\in L^2\left(\mathbb{R}^N\right). 
\end{equation*} 
We call $\widehat{K}$ \emph{trace-class} if $\sum_{i=0}^{\infty} |\lambda_i| < \infty$, and the \emph{trace} of $\widehat{K}$ is defined as
\begin{equation*}
 \Tr (\widehat{K})\defeq \sum_{i=0}^{\infty} \lambda_i. 
\end{equation*}
If $K$ is continuous, then by \cite{Duflo} we obtain the formula 
\begin{equation} \label{eq:trace}
\Tr (\widehat{K})=\int_{\RR^N} K(\mathbf x,\mathbf x) \, \mathrm{d}\mathbf x.
\end{equation}

Clearly, $\widehat{K}$ is positive semidefinite if and only if all of its eigenvalues are non-negative, equivalently
\begin{equation*} 
\int_{\mathbb{R}^{N}} \int_{\mathbb{R}^{N}} K(\mathbf{x},\mathbf{y}) f(\mathbf{x})f^{*}(\mathbf{y})  \, \mathrm{d} \mathbf{x} \, \mathrm{d} \mathbf{y} \geq 0 \quad \text{for all } f \in  L^2(\mathbb{R}^N).
\end{equation*}
The following theorem basically dates back to Mercer \cite{Mercer}.

\begin{theorem}[Mercer] \label{t:Mercer} Assume that $K\in L^2(\RR^{2N})$ is a continuous kernel. Then $\widehat{K}$ is positive semidefinite if and only if for all $\mathbf x_1,\dots, \mathbf x_k \in \RR^N$ and $c_1,\dots, c_k\in \CC$ we have 
\begin{equation} \label{eq:Mercer}  
\sum_{i,j=1}^k c_i c^{*}_j K(\mathbf x_i,\mathbf x_j)\geq 0. 
\end{equation} 
\end{theorem}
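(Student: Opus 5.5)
Both implications come from comparing the quadratic form $Q(f)=\iint K(\mathbf x,\mathbf y) f(\mathbf x) f^*(\mathbf y)\,\mathrm d\mathbf x\,\mathrm d\mathbf y$ with the finite sums in \eqref{eq:Mercer}, using approximate identities in one direction and Riemann sums in the other.

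\emph{Positive semidefiniteness implies \eqref{eq:Mercer}.} Fix $\mathbf x_1,\dots,\mathbf x_k$ and $c_1,\dots,c_k$. For $\varepsilon>0$ let $\phi_\varepsilon(\mathbf z)=\varepsilon^{-N}\phi(\mathbf z/\varepsilon)$, where $\phi\ge 0$ is continuous, compactly supported and has integral $1$. Put $f_\varepsilon=\sum_i c_i\,\phi_\varepsilon(\cdot-\mathbf x_i)\in L^2(\RR^N)$. Then
\[
Q(f_\varepsilon)=\sum_{i,j} c_i c_j^{*}\iint K(\mathbf x,\mathbf y)\,\phi_\varepsilon(\mathbf x-\mathbf x_i)\,\phi_\varepsilon(\mathbf y-\mathbf x_j)\,\mathrm d\mathbf x\,\mathrm d\mathbf y .
\]
Each double integral is an average of $K$ over a shrinking neighbourhood of $(\mathbf x_i,\mathbf x_j)$ against a probability density. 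By continuity of $K$ it tends to $K(\mathbf x_i,\mathbf x_j)$ as $\varepsilon\to 0$. Since $Q(f_\varepsilon)\ge 0$ for every $\varepsilon$, the limit $\sum_{i,j} c_i c_j^* K(\mathbf x_i,\mathbf x_j)$ is also $\ge 0$.

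\emph{\eqref{eq:Mercer} implies positive semidefiniteness.} The map $f\mapsto Q(f)=\langle \widehat K f^{*}, f^{*}\rangle$, up to the paper's conjugation convention, is continuous on $L^2(\RR^N)$ because $\widehat K$ is bounded (it is Hilbert--Schmidt). So it suffices to prove $Q(f)\ge 0$ for $f$ in a dense class, and we take continuous compactly supported $f$. Fix such an $f$ with support in a cube $C$. Partition $C$ into cubes $C_1,\dots,C_k$ of mesh $h$ with centres $\mathbf x_i$, and set $c_i=f(\mathbf x_i)|C_i|$. Then
\[
\sum_{i,j} c_i c_j^{*} K(\mathbf x_i,\mathbf x_j)
\]
is a Riemann sum for the continuous integrand $K(\mathbf x,\mathbf y)f(\mathbf x)f^*(\mathbf y)$ over the compact set $C\times C$. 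As $h\to 0$ it converges to $Q(f)$, by uniform continuity of the integrand on $C\times C$. Each Riemann sum is $\ge 0$ by hypothesis, so $Q(f)\ge 0$.

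\emph{Main obstacle.} The only delicate point is the density-and-continuity step. One must pass from compactly supported continuous $f$ to all of $L^2$ even though $K$ is merely continuous and in $L^2$, not bounded. This is handled by the boundedness of $\widehat K$, which gives $|Q(f)-Q(g)|\le \|\widehat K\|\,(\|f\|+\|g\|)\,\|f-g\|$. The Riemann-sum convergence itself uses continuity of $K$ only on the compact set $C\times C$, so no global bound on $K$ is required.
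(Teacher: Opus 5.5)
Your proof is correct in both directions. The paper does not prove this theorem; it only remarks that the result ``basically dates back to Mercer'' and cites Mercer's original paper. Mercer's classical setting is a continuous kernel on a compact interval. Your argument therefore supplies what the word ``basically'' leaves out: the extension to the unbounded domain $\RR^N$ under exactly the paper's hypotheses, with $K$ continuous and in $L^2(\RR^{2N})$ but not necessarily bounded.

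Both of your directions are the classical compact-domain arguments, used only locally. The approximate identity only needs continuity of $K$ near $(\mathbf x_i,\mathbf x_j)$. The Riemann sums only need continuity on $C\times C$ for compactly supported $f$. Non-compactness is handled by two facts: $C_c(\RR^N)$ is dense in $L^2(\RR^N)$, and $|Q(f)|\le \|K\|_{L^2(\RR^{2N})}\|f\|^2$, which follows from Cauchy--Schwarz in $L^2(\RR^{2N})$ and makes $Q$ continuous. Your operator-norm bound gives the same thing. The implication the paper actually relies on later, in the proof of Theorem~\ref{c:homa_tetel}, is your first direction.

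One small correction. With the paper's convention $\langle f,g\rangle=\int f^{*}g$, one has $Q(f)=\langle f^{*},\widehat K f^{*}\rangle=\overline{\langle \widehat K f^{*},f^{*}\rangle}$, not $\langle \widehat K f^{*},f^{*}\rangle$. This does not affect the proof, because you only use continuity of $Q$. Moreover, ``$Q(f)\ge 0$ for all $f\in L^2(\RR^N)$'' is equivalent to the paper's definition ``$\langle \widehat K g,g\rangle\ge 0$ for all $g$'': the map $f\mapsto f^{*}$ is a bijection of $L^2(\RR^N)$, and a nonnegative real number equals its own conjugate.
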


We say that $\widehat{\rho}$ is a \emph{density operator} if it is self-adjoint and positive semidefinite with $\Tr \big\{\widehat{\rho}\big\}=1$, that is, we have $\lambda_i \geq 0$ and $\sum_{i=0}^{\infty} \lambda_i=1$.

\begin{definition} \label{d:poly}
A self-adjoint kernel $K_G\in L^2(\RR^{2N})$ is \emph{Gaussian} if
\begin{align}
K_G(\mathbf{x},\mathbf{y})=&\exp \left\{-(\mathbf{x}-\mathbf{y})^{\top} \mathbf{A} (\mathbf{x}-\mathbf{y}) -i(\mathbf{x}-\mathbf{y})^{\top} \mathbf{B} (\mathbf{x}+\mathbf{y})\right.
\nonumber\\
&\quad \quad \left. -(\mathbf{x}+\mathbf{y})^{\top} \mathbf{C}(\mathbf{x}+\mathbf{y})-i\mathbf{D}^{\top}(\mathbf{x}-\mathbf{y})-\mathbf{E}^{\top}(\mathbf{x}+\mathbf{y})+F\right\},
\label{eq:gaussian_kernel}
\end{align} 
where $\mathbf{A},\mathbf{B},\mathbf{C}\in \RR^{N\times N}$ are real  matrices such that $\mathbf{A},\mathbf{C}$ are symmetric and positive definite, $\mathbf{E},\mathbf{D}\in \RR^N$, and $F\in \RR$. 

\end{definition}

\subsection{Quantum entanglement}
In this paper we consider density operators $\widehat{\rho}\colon L^2\left(\RR^N\right) \to L^2\left(\RR^N\right)$. For a summary of separability and entanglement in this model the readers might consult \cite{japcsi}. 
Now we only discuss separability and a reducing process in phase--space representation. We have position vector $\mathbf{x}\in \RR^N$ and momentum vector $\mathbf{p}\in \RR^N$ given as 
\begin{equation}
\mathbf{x}=(x_1,\dots,x_N)^{\top} \quad \textrm{and} 
\quad
\mathbf{p}=(p_1,\dots,p_N)^{\top}.
\end{equation}
Assume that $m$ subsystems are given with $\widehat{\rho}$, that partition the coordinates into $m$ pairwise disjoint sets. Thus we obtain a representation of the $k$th subsystem $(k=1,\dots,m)$ as
\begin{equation*}
\mathbf{x}^{(k)} = \bigl(x^{(k)}_1, x^{(k)}_2, \dots, x^{(k)}_{d_k}\bigr)^\top \in \mathbb{R}^{d_k}, 
\qquad
\mathbf{p}^{(k)} = \bigl(p^{(k)}_1, p^{(k)}_2, \dots, p^{(k)}_{d_k}\bigr)^\top \in \mathbb{R}^{d_k}.
\end{equation*}
If $\widehat{\rho}$ is separable with respect to this partition, then its Wigner function is of the form
\begin{equation}
W(\mathbf{x},\mathbf{p})
=\sum_{j=1}^{\infty} \pi_j
\prod_{k=1}^m
W^{(k)}_j\!\left(\mathbf{x}^{(k)},\mathbf{p}^{(k)}\right)
\label{eq:wigner_multi}
\end{equation}
with some $\pi_j\geq 0$ satisfying $\sum_{j=1}^{\infty} \pi_j=1$, where $W_j^{(k)}$ is a Wigner function of a density operator acting on the $k$th subsystem, see \cite{japcsi}.


We will single out two subsystems, and create a bipartite setup by integrating out the remaining subsystems and thus reducing the dimension of the model. Consider two subsystems with phase--space variables 
$\mathbf{x}^{(1)},\mathbf{p}^{(1)}\in\mathbb{R}^{d_1}$ and
$\mathbf{x}^{(2)},\mathbf{p}^{(2)}\in\mathbb{R}^{d_2}$, where $d_1+d_2\leq N$. Let $\mathbf{x}^{[3]},\mathbf{p}^{[3]}\in \RR^{N-d_1-d_2}$ denote the phase--space variables of the remaining subsystems. We restrict ourself to the first two subsystems and obtain the corresponding new Wigner function $W_r$ by taking the partial trace over the another subsystems, that is, 
\begin{equation}
W_r(\mathbf{x}^{(1)},\mathbf{p}^{(1)};\mathbf{x}^{(2)},\mathbf{p}^{(2)})
=
\int_{\mathbb{R}^{2(N-d_1-d_2)}} W(\mathbf{x},\mathbf{p})\,
\mathrm{d}\mathbf{x}^{[3]}\,\mathrm{d}\mathbf{p}^{[3]}.
\label{eq:partial_trace_phase_space}
\end{equation}

If $\widehat{\rho}$ is separable with respect to the original partition and \eqref{eq:wigner_multi} holds, then our reduced operator $\widehat{\rho}_r$ remains separable, and our new Wigner function satisfies
\begin{equation}
W_r(\mathbf{x}^{(1)},\mathbf{p}^{(1)};\mathbf{x}^{(2)},\mathbf{p}^{(2)})
=
\sum_{j=1}^{\infty} \pi_j
W^{(1)}_j(\mathbf{x}^{(1)},\mathbf{p}^{(1)})\cdot
W^{(2)}_j(\mathbf{x}^{(2)},\mathbf{p}^{(2)}).
\label{eq:red_wign_func}
\end{equation}

The partial transpose with respect to the first subsystem in this reduced model acts on the Wigner function as
\begin{equation}
\widetilde{W}_r\left(\mathbf{x}^{(1)},\mathbf{p}^{(1)};\mathbf{x}^{(2)},\mathbf{p}^{(2)}\right)
\defeq
W_r\left(\mathbf{x}^{(1)},-\mathbf{p}^{(1)};\mathbf{x}^{(2)},\mathbf{p}^{(2)}\right),
\label{eq:partial_transpose_phase_space}
\end{equation}
see e.\,g.~\cite{Perescrit,SimonR}.
Now the following well--known necessary condition connects separability to positive semidefiniteness.

\begin{theorem}[Peres--Horodecki criterion~\cite{Perescrit,horodecki1996separability}] \label{t:PH} If $\widehat{\rho}_r$ is separable, then $\widetilde{W}_{r}(\mathbf{x},\mathbf{p})$ defines a density operator.
\end{theorem}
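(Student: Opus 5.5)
The plan is to prove the statement at the level of operators and then translate it to phase space through the Weyl--Wigner correspondence. Separability of $\widehat{\rho}_r$ means, as in \eqref{eq:red_wign_func}, that
\begin{equation*}
\widehat{\rho}_r=\sum_{j=1}^{\infty}\pi_j\,\widehat{\rho}^{(1)}_j\otimes\widehat{\rho}^{(2)}_j ,
\end{equation*}
with $\pi_j\ge 0$, $\sum_j\pi_j=1$, and each $\widehat{\rho}^{(k)}_j$ a density operator on $L^2(\RR^{d_k})$. The sum converges in trace norm, since $\|\widehat{\rho}^{(1)}_j\otimes\widehat{\rho}^{(2)}_j\|_1=1$. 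The proof has three steps.

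\textbf{Step 1: transposition preserves density operators.} Let $T$ be transposition on the first factor in the position representation, $(T\widehat{K})$ having kernel $K(\mathbf{y},\mathbf{x})$. Note that $K(\mathbf{y},\mathbf{x})=K^*(\mathbf{x},\mathbf{y})$ for self-adjoint $\widehat{K}$. Transposition is complex conjugation of the kernel, so:
\begin{itemize}
\item $\langle (T\widehat{\rho})f,f\rangle=\langle \widehat{\rho}f^*,f^*\rangle^*\ge 0$, hence positivity is kept;
\item the eigenvalues are unchanged, since $\widehat{\rho}\varphi_i=\lambda_i\varphi_i$ gives $(T\widehat{\rho})\varphi_i^*=\lambda_i\varphi_i^*$, hence the trace is kept.
\end{itemize}
So $(\widehat{\rho}^{(1)}_j)^{\top}$ is again a density operator.

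\textbf{Step 2: the partially transposed sum is a density operator.} Each term $(\widehat{\rho}^{(1)}_j)^{\top}\otimes\widehat{\rho}^{(2)}_j$ is a tensor product of density operators. It is therefore positive semidefinite, which can be checked on product vectors and then extended, or via spectral decompositions. Its trace equals $1$. The partial transpose $T\otimes \mathrm{id}$ is linear and isometric in trace norm on each product term. Hence $\widehat{\rho}_r^{\,T_1}=\sum_j\pi_j(\widehat{\rho}^{(1)}_j)^{\top}\otimes\widehat{\rho}^{(2)}_j$ converges in trace norm. The limit is positive semidefinite, because positivity is closed under trace-norm limits: $\langle\cdot f,f\rangle$ is continuous. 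Its trace is $\sum_j\pi_j=1$.

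\textbf{Step 3: identify the Wigner function of $\widehat{\rho}_r^{\,T_1}$ with $\widetilde{W}_r$.} This is the step needing the most care. For a single factor with kernel $K$, the Wigner function is
\begin{equation*}
W(\mathbf{x},\mathbf{p})=(2\pi)^{-d}\int K(\mathbf{x}+\mathbf{s}/2,\mathbf{x}-\mathbf{s}/2)e^{-i\mathbf{p}^\top\mathbf{s}}\,\mathrm{d}\mathbf{s}.
\end{equation*}
Replacing $K(\mathbf{x},\mathbf{y})$ by $K(\mathbf{y},\mathbf{x})$ and substituting $\mathbf{s}\mapsto-\mathbf{s}$ shows that the Wigner function of $T\widehat{K}$ is $W(\mathbf{x},-\mathbf{p})$. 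This is \eqref{eq:partial_transpose_phase_space}. By linearity and continuity of the Weyl map, the Wigner function is bounded by the trace norm up to a constant. Applying this termwise to the convergent series gives that $\widehat{\rho}_r^{\,T_1}$ has Wigner function
\begin{equation*}
\sum_j\pi_jW^{(1)}_j(\mathbf{x}^{(1)},-\mathbf{p}^{(1)})W^{(2)}_j(\mathbf{x}^{(2)},\mathbf{p}^{(2)})=\widetilde{W}_r .
\end{equation*}
Therefore $\widetilde{W}_r$ is the Wigner function of a density operator.

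The main technical obstacle is the interchange of the infinite sum with the Weyl transform and the partial transpose. I would handle it by working in trace norm throughout, as described above.
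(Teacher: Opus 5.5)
Your proof is correct and is essentially the argument the paper relies on: the paper gives no proof of its own but cites Peres and Horodecki. Peres's necessity argument is exactly your Steps 1--2 (each transposed factor is again a density operator, so the partially transposed convex combination is one too). Your Step 3 is the mirror reflection $\mathbf{p}^{(1)}\to-\mathbf{p}^{(1)}$ of \eqref{eq:partial_transpose_phase_space}, which the paper cites from Simon. Your termwise trace-norm bookkeeping, together with the uniform bound on Wigner functions, properly sidesteps the fact that partial transposition is not trace-norm bounded in infinite dimensions.

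One small slip: positivity of $\widehat{A}\otimes\widehat{B}$ cannot be ``checked on product vectors and then extended''. Entanglement witnesses are nonnegative on every product vector yet fail to be positive semidefinite. Rely instead on your spectral-decomposition alternative, or on $\widehat{A}\otimes\widehat{B}=(\widehat{A}^{1/2}\otimes\widehat{B}^{1/2})^2$.
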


A bipartite density operator $\widehat{\rho}_r$ is called PPT (\emph{positive partial transpose}) if the Wigner function of its partial transpose $\widetilde{W}_r$ corresponds to a valid density operator. Otherwise, $\widehat{\rho}_r$ is called NPT (\emph{non-positive partial transpose}).

The partial transpose operation in position representation transforms 
the reduced kernel $\rho_r(\mathbf{x},\mathbf{y})=\rho_r\big(\mathbf{x}^{(1)},\mathbf{y}^{(1)};\mathbf{x}^{(2)},\mathbf{y}^{(2)})$ to 
\begin{equation*}
\widetilde{\rho}_r(\mathbf{x},\mathbf{y})\defeq \rho_r\big(\mathbf{y}^{(1)},\mathbf{x}^{(1)};\mathbf{x}^{(2)},\mathbf{y}^{(2)}\big),
\end{equation*}
see the inverse Wigner--Weyl transformation in \cite[Eq.~6]{japcsi}.

\section{Polynomial excitation of initially a specific separable pure state}
\label{sec:pol_exc_pure}

In this section, we focus on some important properties of pure states and the density operators obtained from their polynomial excitations.
Pure states are very important for quantum technology and experimental physics.
They hold the most information and also keep out messy environmental noise.
Ladder operators create polynomial excitations.
Polynomially excited states are not Gaussian and this special feature is crucial.
These non-Gaussian properties let us build universal quantum computers and  also help fix quantum errors \cite{serafini2024quantum}.

Let $P(\mathbf{x},\mathbf{y})$ be a polynomial of $2N$ variables, then $P$ can be written in the form
\begin{equation*}
P(\mathbf{x},\mathbf{y})=\mathbf{v}^\top(\mathbf{x}) \mathbf{M} \mathbf{v}(\mathbf{y}),
\end{equation*}
where $\mathbf{v}(\mathbf{x})$ is a vector of distinct monomials (with coefficients $1$) of the variable $\mathbf{x}$,
and $\mathbf{M}$ is a coefficient matrix. We will call $\mathbf{M}$ a \emph{bilinear representation matrix } of $P$ with respect to $\mathbf{v}$.

Assuming $\mathbf{x}=(\mathbf{x}_1,\mathbf{x}_2)^\top$ and  $\mathbf{y}=(\mathbf{y}_1,\mathbf{y}_2)^\top$, after the partial transpose $\mathbf{x}_1 \leftrightarrow \mathbf{y}_1$ we obtain
\begin{equation*}
\widetilde{P}(\mathbf{x},\mathbf{y})= \mathbf{w}^\top(\mathbf{x})\widetilde{\mathbf{M}} \mathbf{w}(\mathbf{y}),
\end{equation*}
where $\mathbf{w}(\mathbf{x})$ is a vector of monomials (with coefficients $1$) of the variable $\mathbf{x}$,
and $\widetilde{\mathbf{M}}$ is a bilinear representation matrix of $\widetilde{P}$ with respect of $\mathbf{w}$.

\begin{example}
Let $P(\mathbf{x},\mathbf{y})=(x_1+i\kappa x_2)(y_1-i\kappa y_2)$, and
let us choose $\mathbf{v}(\mathbf{x})=(x_1,x_2)^\top$ and 
$\mathbf{M}=\begin{pmatrix}
1 & -i\kappa\\
i\kappa & \kappa^2
\end{pmatrix}$. Then we obtain 
\begin{equation*}
P(\mathbf{x},\mathbf{y})=(x_1+i\kappa x_2)(y_1-i\kappa y_2) = \mathbf{v}^\top(\mathbf{x}) \mathbf{M} \mathbf{v}(\mathbf{y}), 
\end{equation*}
and $\mathbf{M}$ is clearly a positive semidefinite Hermitian matrix. It is easy to see that partial transposition in first mode $x_1\leftrightarrow y_1$ yields
\begin{equation*}
\widetilde{P}(\mathbf{x},\mathbf{y})=(y_1+i\kappa x_2)(x_1-i\kappa y_2)= \mathbf{w}^\top(\mathbf{x})\widetilde{\mathbf{M}} \mathbf{w}(\mathbf{y}),
\end{equation*}
where 
\begin{equation*} \mathbf{w}(\mathbf{x})=(x_1,x_2,x_1x_2,1)^\top \quad \textrm{and} \quad  \widetilde{\mathbf{M}}= \begin{pmatrix}
1 & 0 & 0 & 0 \\
0 & \kappa^2 & 0  & 0  \\
0 & 0 & 0 &  i\kappa  \\
0 & 0 & -i\kappa & 0
\end{pmatrix}. 
\end{equation*}
It is easy to see that $\widetilde{\mathbf{M}}$ is a positive semidefinite Hermitian matrix if and only if $\kappa=0$.
\end{example}

\begin{fact} \label{f:pol}
Let $Q(\mathbf{x})$ be a nonzero polynomial of $n$ variables. Then the set of roots $\{\mathbf{x}: Q(\mathbf{x})=0\}$ has zero $n$-dimensional Lebesgue measure.     
\end{fact}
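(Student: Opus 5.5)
I will argue by induction on the number of variables $n$, combining it with Fubini's theorem. This is the standard route, and it avoids any appeal to the structure of algebraic varieties. Write $Z(Q)=\{\mathbf{x}\in\RR^n: Q(\mathbf{x})=0\}$. First, $Z(Q)$ is closed because $Q$ is continuous, so it is Borel and its Lebesgue measure $\lambda_n(Z(Q))$ makes sense. Fubini's theorem therefore applies to its indicator function.

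For the base case $n=1$, a nonzero univariate polynomial of degree $d$ has at most $d$ real roots. A finite set has one-dimensional Lebesgue measure zero. If $Q$ is a nonzero constant, the zero set is empty.

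For the inductive step, assume the claim for $n-1$ variables. Write $\mathbf{x}=(\mathbf{x}',x_n)$ with $\mathbf{x}'\in\RR^{n-1}$, and expand
\begin{equation*}
Q(\mathbf{x}',x_n)=\sum_{j=0}^{d} q_j(\mathbf{x}')\,x_n^{j},
\end{equation*}
where $q_d\not\equiv 0$ is a polynomial in $n-1$ variables. (If $Q$ does not depend on $x_n$ at all, take $d=0$.) Let $E=Z(q_d)\subset\RR^{n-1}$. By the induction hypothesis, $\lambda_{n-1}(E)=0$. Fix $\mathbf{x}'\notin E$. Then $x_n\mapsto Q(\mathbf{x}',x_n)$ is a nonzero univariate polynomial, so its zero set in $\RR$ is finite and has $\lambda_1$-measure zero. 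Fubini (Tonelli) now gives
\begin{equation*}
\lambda_n(Z(Q))=\int_{\RR^{n-1}}\lambda_1\bigl(\{x_n: Q(\mathbf{x}',x_n)=0\}\bigr)\,\mathrm{d}\mathbf{x}'
=\int_{E}(\cdots)\,\mathrm{d}\mathbf{x}'+\int_{\RR^{n-1}\setminus E}0\,\mathrm{d}\mathbf{x}'=0.
\end{equation*}
The first integral vanishes because $E$ is a null set. This holds even though the integrand can equal $+\infty$ on $E$, which happens when all the $q_j$ vanish simultaneously there. This completes the induction.

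The only point needing care is this possibly infinite integrand on the exceptional set $E$. Tonelli's theorem handles it, since the integral of any nonnegative measurable function over a null set is zero. Measurability of the slice function follows from the closedness of $Z(Q)$. If one wanted the statement for complex-coefficient polynomials, which is how it will be used with the kernels above, I would apply the real case to $|Q|^2=(\operatorname{Re}Q)^2+(\operatorname{Im}Q)^2$. This is a nonzero real polynomial with the same zero set.
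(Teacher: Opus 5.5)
Your proof is correct and follows essentially the same route as the paper: induction on $n$ combined with Fubini, with the fibres over an exceptional null set in $\RR^{n-1}$ handled separately. The only difference is cosmetic. The paper takes the exceptional set to be the common zero set of all coefficient polynomials and splits $Z$ into two pieces, whereas you use the zero set of the leading coefficient and integrate directly via Tonelli. Your remarks on measurability and on complex coefficients are fine, though the reduction to $|Q|^2$ is unnecessary since the induction runs verbatim over $\CC$.
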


\begin{proof} 
We prove by induction on the number of variables $n$. If $n=1$ then we have finitely many roots in $\RR$ which has measure zero. Assume that the statement is true for $n-1$ and we prove it for $n$. Let $Z=\{\mathbf{x}: Q(\mathbf{x})=0\}$. We may assume that $x_1$ is in a monomial of $Q$ with nonzero coefficient. We can write 
\begin{equation*} Q(\mathbf{x})=\sum_{i=1}^M x_1^{i}P_i(x_2,\dots,x_n)
\end{equation*} 
with some polynomials $P_i$ of $n-1$ variables. Define 
\begin{equation*} S=\{(x_2,\dots,x_n): P_i(x_2,\dots,x_n)=0 \textrm{ for all } i\}.
\end{equation*}
Clearly, 
\begin{equation*} Z=\{\mathbf{x}\in Z: (x_2,\dots,x_n)\notin S\}\cup (\RR\times S),
\end{equation*}
so it is enough to prove that the sets $\{\mathbf{x}\in Z: (x_2,\dots,x_n)\notin S\}$ and $\RR\times S$ have zero $n$-dimensional Lebesgue measure.  First, if $(x_2,\dots,x_n)\notin S$ is fixed, then $Q(\mathbf{x})$ is a nontrivial polynomial in $x_1$, so it has at most $M$ roots. Therefore, Fubini's theorem implies that $\{\mathbf{x}\in Z: (x_2,\dots,x_n)\notin S\}$ has zero $n$-dimensional Lebesgue measure. By induction, $S$ has zero $(n-1)$-dimensional Lebesgue measure, so Fubini's theorem implies that $\RR\times S$ has zero $n$-dimensional Lebesgue measure.
\end{proof}

\begin{theorem}\label{c:homa_tetel}
Let $P(\mathbf{x},\mathbf{y})$ be a self-adjoint polynomial, and let $\mathbf{M}$ be a corresponding bilinear representation matrix with respect to $\mathbf{v}$, so $P(\mathbf{x},\mathbf{y})=\mathbf{v}^\top(\mathbf{x}) \mathbf{M} \mathbf{v}(\mathbf{y})$. Let $\rho(\mathbf{x},\mathbf{y})$ be the kernel of a pure density operator given in position representation. Assume that $P\rho\in L^2\left(\RR^{2N}\right)$. Then $P(\mathbf{x},\mathbf{y})\rho(\mathbf{x},\mathbf{y})$ is a positive semidefinite kernel if and only if $\mathbf{M}$ is a positive semidefinite matrix.
\end{theorem}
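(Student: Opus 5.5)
Since $\rho$ is the kernel of a pure density operator, we can write $\rho(\mathbf{x},\mathbf{y})=\psi(\mathbf{x})\psi^{*}(\mathbf{y})$ for some unit vector $\psi\in L^2(\RR^N)$. Then
\begin{equation*}
P(\mathbf{x},\mathbf{y})\rho(\mathbf{x},\mathbf{y})=\big(\psi(\mathbf{x})\mathbf{v}(\mathbf{x})\big)^{\top}\mathbf{M}\big(\psi^{*}(\mathbf{y})\mathbf{v}(\mathbf{y})\big).
\end{equation*}
The plan is to test the quadratic form of $P\rho$ against suitable $f$ and reduce everything to the finite-dimensional Hermitian form of $\mathbf{M}$ evaluated on the vector of moments $\mathbf{u}_f=\int \psi(\mathbf{x})\mathbf{v}(\mathbf{x}) f(\mathbf{x})\,\mathrm{d}\mathbf{x}\in\CC^{n}$, where $n$ is the length of $\mathbf{v}$. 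Formally we get $\int\!\!\int P\rho\, f(\mathbf{x})f^{*}(\mathbf{y})\,\mathrm{d}\mathbf{x}\,\mathrm{d}\mathbf{y}=\mathbf{u}_f^{\top}\mathbf{M}\,\mathbf{u}_f^{*}$. I would first record that the self-adjointness of $P$ together with the linear independence of distinct monomials forces $\mathbf{M}$ to be Hermitian in the appropriate sense ($M_{ab}=M_{ba}^{*}$). This lets us speak of $\mathbf{M}$ being positive semidefinite and identify $\mathbf{u}^{\top}\mathbf{M}\mathbf{u}^{*}$ with $\langle \mathbf{M}^{\top}\mathbf{u}^*,\mathbf{u}^*\rangle$ up to the transpose/conjugation convention.

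\emph{Sufficiency.} Suppose $\mathbf{M}\succeq 0$. Factor $\mathbf{M}=\sum_k \mathbf{a}_k\mathbf{a}_k^{\dagger}$ in the convention matching $\mathbf{v}^{\top}\mathbf{M}\mathbf{v}$. Then $P\rho(\mathbf{x},\mathbf{y})=\sum_k g_k(\mathbf{x})g_k^{*}(\mathbf{y})$, where each $g_k=\psi\,(\mathbf{a}_k^{\dagger}\mathbf{v})^{*}$ is a polynomial times $\psi$. To avoid integrability problems with $f\in L^2$ alone, I would not integrate directly; I would instead use Theorem~\ref{t:Mercer} if $\psi$ is continuous, or more robustly test against the dense class of compactly supported bounded $f$. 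For such $f$ each $\int g_k f^{*}$ is finite and the form equals $\sum_k|\langle g_k,f\rangle|^2\ge 0$. Since $P\rho\in L^2$ gives a bounded Hilbert--Schmidt operator, positivity extends by density to all of $L^2$.

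\emph{Necessity.} Suppose $P\rho$ is positive semidefinite. For bounded compactly supported $f$ we get $\mathbf{u}_f^{\top}\mathbf{M}\mathbf{u}_f^{*}\ge 0$. It therefore suffices to show that the set $\{\mathbf{u}_f\}$ spans $\CC^n$ (it is a linear subspace, so it is then all of $\CC^n$), which yields $\mathbf{M}\succeq 0$. This spanning step is where Fact~\ref{f:pol} enters. If the span were a proper subspace, there would be a nonzero $\mathbf{c}$ with $\int \psi(\mathbf{x})\,\mathbf{c}^{\top}\mathbf{v}(\mathbf{x})\,f(\mathbf{x})\,\mathrm{d}\mathbf{x}=0$ for all such $f$. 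That would force $\psi\cdot Q=0$ almost everywhere, where $Q=\mathbf{c}^{\top}\mathbf{v}$ is a nonzero polynomial, because distinct monomials are linearly independent. By Fact~\ref{f:pol}, $Q\neq0$ almost everywhere, so $\psi=0$ almost everywhere, which contradicts $\|\psi\|=1$.

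The main obstacle is the analytic bookkeeping rather than the algebra. One must justify exchanging the finite sum with the double integral, and choose a test class on which every $\psi\,v_a$ is integrable against $f$, for which compactly supported bounded $f$ suffice since $\psi\in L^2_{\mathrm{loc}}$. One must also make sure the density argument legitimately passes positivity to all of $L^2$, using boundedness of $\widehat{P\rho}$. A secondary point is fixing the transpose/conjugate convention so that "positive semidefinite" for $\mathbf{M}$ matches the sign of $\mathbf{u}^{\top}\mathbf{M}\mathbf{u}^*$.
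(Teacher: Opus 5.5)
Your proof is correct, and its overall structure matches the paper's. For sufficiency, both proofs write the quadratic form of $P\rho$ as the Hermitian form of $\mathbf{M}$ on a moment vector; your Gram factorization $\sum_k|\langle g_k,f\rangle|^2$ is the same computation. For necessity, both show that the relevant vectors span, using Fact~\ref{f:pol} to rule out a nonzero annihilating polynomial.

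The genuine difference is how $\mathbf{M}$ is probed in the necessity direction. The paper uses point evaluations. It shows that the vectors $u(\mathbf{x})\mathbf{v}(\mathbf{x})$, $\mathbf{x}\in\RR^N$, span, writes $\mathbf{z}=\sum_i c_i u(\mathbf{x}_i)\mathbf{v}(\mathbf{x}_i)$, and obtains $\mathbf{z}^\dagger\mathbf{M}\mathbf{z}=\sum_{i,j}c_i^*c_jK(\mathbf{x}_i,\mathbf{x}_j)\ge 0$ from Mercer's criterion, Theorem~\ref{t:Mercer}. That route is short and connects directly to the Mercer quantity used numerically in Subsection~\ref{sub:GKLC_excitations}. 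However, it tacitly needs $u$, and hence $K$, to be continuous. Theorem~\ref{t:Mercer} is stated only for continuous kernels, and point values of a general $L^2$ function carry no meaning. Continuity holds in the pure Gaussian case emphasized after the theorem, but it is not among the hypotheses.

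Your smeared moments $\mathbf{u}_f$, taken over bounded compactly supported $f$, need only $\psi\in L^2_{\mathrm{loc}}$ and bypass Mercer entirely. The same test class, together with boundedness of the Hilbert--Schmidt operator $\widehat{P\rho}$, also makes the sufficiency step rigorous. There the paper's vector $\mathbf{w}=\int f u\mathbf{v}$ need not be defined for every $f\in L^2$, since an individual $u v_a$ need not be square integrable.

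The convention issue you flag resolves immediately: $\mathbf{u}_f^\top\mathbf{M}\mathbf{u}_f^*=\mathbf{w}^\dagger\mathbf{M}\mathbf{w}$ with $\mathbf{w}=\mathbf{u}_f^*$, and $\{\mathbf{u}_f^*\}$ spans $\CC^n$ exactly when $\{\mathbf{u}_f\}$ does.
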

\begin{proof}
First assume that $\mathbf{M}$ is positive semidefinite. As $\hat{\rho}$ is a pure density operator, we can write  
\begin{equation} \rho(\mathbf{x},\mathbf{y})=u^{*}(\mathbf{x})u(\mathbf{y}).
\label{eq:pure_density_matrix}
\end{equation}

Then we obtain 
\begin{equation*}
K(\mathbf{x},\mathbf{y})\defeq P(\mathbf{x},\mathbf{y})\rho(\mathbf{x},\mathbf{y})= 
u^{*}(\mathbf{x}) \mathbf{v}^\top(\mathbf{x}) \mathbf{M} \mathbf{v}(\mathbf{y}) u(\mathbf{y}). 
\end{equation*}
Therefore, for all $f\in L^2(\mathbb{R}^N)$ we have 
\begin{align*}\label{eq:homa_tetel_1}
\langle \hat{K}f,f \rangle&=
 \int_{\mathbb{R}^{2N}} f^*(\mathbf{x})u^{*}(\mathbf{x})\mathbf{v}^\top(\mathbf{x}) \mathbf{M} \mathbf{v}(\mathbf{y}) u(\mathbf{y}) f(\mathbf{y}) \, \mathrm{d}\mathbf{x} \,\mathrm{d}\mathbf{y} \\
 &=\left(\int_{\mathbb{R}^{N}} f(\mathbf{x})u(\mathbf{x})\mathbf{v}(\mathbf{x}) \, \mathrm{d}\mathbf{x}\right)^{\dagger} \mathbf{M} \left(\int_{\mathbb{R}^{N}} f(\mathbf{y})u(\mathbf{y})\mathbf{v}(\mathbf{y}) \,  \mathrm{d} \mathbf{y}\right) \\
 &=\mathbf{w}^{\dagger} \mathbf{M} \mathbf{w} \geqslant 0, 
\end{align*}
where we used the positivity of $\mathbf{M}$ for the vector 
$\mathbf{w} \defeq \int_{\mathbb{R}^{N}} f(\mathbf{x})u(\mathbf{x})\mathbf{v}(\mathbf{x})\,  \mathrm{d}\mathbf{x}$. Thus $\hat{K}$ is positive semidefinite.

For the other direction, assume that $\rho(\mathbf{x},\mathbf{y})=u^{*}(\mathbf{x})u(\mathbf{y})$, and 
\begin{equation} K(\mathbf{x},\mathbf{y})\defeq P(\mathbf{x},\mathbf{y})\rho(\mathbf{x},\mathbf{y}) 
\label{eq:pure_dens_mat_with_polynom}
\end{equation}
is the kernel of the positive semidefinite operator $\hat{K}$. 

First we prove that the set of vectors 
\begin{equation*} \left\{u(\mathbf{x})\mathbf{v}(\mathbf{x}): \mathbf{x} \in \RR^N\right\}
\end{equation*}
spans $\CC^N$.
Assume to the contrary that this is not the case, then there is a nonzero vector $\mathbf{c}\in \CC^N$ orthogonal to all vectors $u(\mathbf{x})\mathbf{v}(\mathbf{x})$, that is,  
\begin{equation*}
\sum_{i=1}^N c_i^{*} u(\mathbf{x}) v_i(\mathbf{x})=u(\mathbf{x}) \sum_{i=1}^N c_i^{*} v_i(\mathbf{x})=0
\end{equation*}
for all $\mathbf{x}\in \RR^N$, where $v_i(\mathbf{x})$ denotes the $i$th coordinate of $\mathbf{v}(\mathbf{x})$. Note that $v_i(\mathbf{x})$ are different monomials of $\mathbf{x}$, so 
\begin{equation*} 
Q(\mathbf{x})\defeq \sum_{i=1}^N c_i^{*}  v_i(\mathbf{x})
\end{equation*}
is a nonzero polynomial. As $u(\mathbf{x})Q(\mathbf{x})=0$ for all $\mathbf{x}\in \RR^N$, we obtain that 
\begin{equation*} 
\{\mathbf{x}: u(\mathbf{x})\neq 0\}\subset \{\mathbf{x}: Q(\mathbf{x})=0\}.
\end{equation*}
By Fact~\ref{f:pol} the set $\{\mathbf{x}: Q(\mathbf{x})=0\}$ has zero $N$-dimensional Lebesgue measure, so $u(\mathbf{x})=0$ for Lebesgue almost every $\mathbf{x}\in \RR^N$. However, this implies that the operator $\hat{\rho}$ is the zero operator, which is a contradiction. 

Now let $\mathbf{z}\in \mathbb{C}^N$ be an arbitrarily fixed vector. In order to prove that $\mathbf{M}$ is positive semidefinite, we need to show that $\mathbf{z}^{\dagger} \mathbf{M} \mathbf{z}\geq 0$. Since the set $\left\{u(\mathbf{x})\mathbf{v}(\mathbf{x}): \mathbf{x} \in \RR^N\right\}$ spans $\CC^N$, there exist $c_1,\dots,c_N\in \CC$ and $\mathbf{x}_1,\dots,\mathbf{x}_N\in \RR^N$ such that $\mathbf{z}=\sum_{i=1}^N c_i u(\mathbf{x}_i) \mathbf{v}(\mathbf{x}_i)$. 
Now we have 
\begin{equation*} \mathbf{z}^{\dagger} \mathbf{M} \mathbf{z}= \sum_{i,j=1}^N c_i^{*} u^{*}(\mathbf{x}_i) \mathbf{v}^{\top} (\mathbf{x_i}) \mathbf{M} \mathbf{v}(\mathbf{x_j}) u(\mathbf{x}_j)
c_j=\sum_{i,j=1}^{N} c_i^{*} c_j K(\mathbf{x}_i,\mathbf{x}_j)\geq 0,
\end{equation*}
where the last inequality follows from Mercer's criterion (Theorem~\ref{t:Mercer}) applied to the positive semidefinite kernel $K$. This shows that $\mathbf{M}$ is positive semidefinite, and the proof is complete. 
\end{proof}

Theorem~\ref{c:homa_tetel} and Fact~\ref{t:PH} imply the following corollary. 
\begin{corollary} 
Let $P(\mathbf{x},\mathbf{y})$ and $\mathbf{M}$ be  as in Theorem \ref{c:homa_tetel} and let $\widehat{\rho}$ be a pure separable bipartite density operator  such that $P(\mathbf{x},\mathbf{y})\rho(\mathbf{x},\mathbf{y})$ is a bipartite density operator.
Let $\widetilde{\mathbf{M}}$ be a bilinear representation matrix after taking the partial transpose. Then $P(\mathbf{x},\mathbf{y})\rho(\mathbf{x},\mathbf{y})$ is NPT  if and only if $\widetilde{\mathbf{M}}$ is not positive semidefinite.
\end{corollary}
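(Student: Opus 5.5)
The plan is to reduce the corollary to Theorem~\ref{c:homa_tetel} applied to the partially transposed kernel. By definition, $P\rho$ is NPT exactly when the partial transpose $\widetilde{P\rho}$ fails to define a positive semidefinite operator. The trace and self-adjointness are automatic: the partial transpose preserves $\int K(\mathbf x,\mathbf x)\,\mathrm d\mathbf x$, and the problem already fixes the operator class. So the whole task is to identify positivity of the kernel $\widetilde{P\rho}$ with positive semidefiniteness of $\widetilde{\mathbf M}$.

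The first step is to write the kernel in the right form. Since $\widehat\rho$ is pure and separable, $u(\mathbf x)=u_1(\mathbf x_1)u_2(\mathbf x_2)$, so
\begin{equation*}
\rho(\mathbf x,\mathbf y)=u_1^*(\mathbf x_1)u_2^*(\mathbf x_2)u_1(\mathbf y_1)u_2(\mathbf y_2).
\end{equation*}
After the swap $\mathbf x_1\leftrightarrow\mathbf y_1$ we get
\begin{equation*}
\widetilde\rho(\mathbf x,\mathbf y)=\tilde u^*(\mathbf x)\tilde u(\mathbf y), \qquad \tilde u(\mathbf x)\defeq u_1^*(\mathbf x_1)u_2(\mathbf x_2).
\end{equation*}
Here $\tilde u\in L^2$ is nonzero with the same norm, so $\widetilde\rho$ is again the kernel of a pure density operator. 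Since the partial transpose acts on the product $P\rho$ variable-wise, we have $\widetilde{P\rho}=\widetilde P\,\widetilde\rho$, with
\begin{equation*}
\widetilde P(\mathbf x,\mathbf y)=\mathbf w^\top(\mathbf x)\widetilde{\mathbf M}\mathbf w(\mathbf y)
\end{equation*}
by the definition of $\widetilde{\mathbf M}$.

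Next, check the hypotheses of Theorem~\ref{c:homa_tetel} for the pair $(\widetilde P,\widetilde\rho)$.
\begin{enumerate}
\item Square integrability: $\widetilde P\widetilde\rho\in L^2(\RR^{2N})$, because the partial transpose is a measure-preserving change of variables applied to $P\rho\in L^2$.
\item Self-adjointness of $\widetilde P$: this needs $\widetilde P(\mathbf y,\mathbf x)=\widetilde P^*(\mathbf x,\mathbf y)$. It follows from the self-adjointness of $P$, since $P(\mathbf y,\mathbf x)=P^*(\mathbf x,\mathbf y)$ is preserved by swapping the first-block variables consistently on both sides.
\end{enumerate}
With these in hand, Theorem~\ref{c:homa_tetel} states that $\widetilde P\widetilde\rho$ is a positive semidefinite kernel if and only if $\widetilde{\mathbf M}$ is positive semidefinite. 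Negating both sides gives the claim, once we invoke the PPT definition in the paragraph following Theorem~\ref{t:PH}.

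The main obstacle is a subtlety in the proof of Theorem~\ref{c:homa_tetel}: it uses that the monomials in $\mathbf v$ are distinct, which gives the spanning argument via Fact~\ref{f:pol}. I will choose $\mathbf w$ as a vector of distinct monomials, as the definition of a bilinear representation matrix requires. I will also note that $\widetilde{\mathbf M}$ may be taken Hermitian, since self-adjointness of $\widetilde P$ plus linear independence of distinct monomials forces $\widetilde{\mathbf M}=\widetilde{\mathbf M}^\dagger$. A second point to verify is that the reverse direction of Theorem~\ref{c:homa_tetel} uses Mercer's Theorem~\ref{t:Mercer}, which needs a continuous kernel. I would either assume $u_1,u_2$ continuous, as for polynomial Gaussian states, or remark that the first direction (PSD $\widetilde{\mathbf M}$ implies PPT) needs no continuity.
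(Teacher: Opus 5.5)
Your proposal is correct and follows the paper's route. The paper only states that the corollary follows from Theorem~\ref{c:homa_tetel} (citing also the Peres--Horodecki criterion), and you supply the omitted step: the partial transpose of a pure product kernel is again a pure kernel, $\widetilde\rho(\mathbf x,\mathbf y)=\tilde u^*(\mathbf x)\tilde u(\mathbf y)$ with $\tilde u(\mathbf x)=u_1^*(\mathbf x_1)u_2(\mathbf x_2)$, so the theorem applies to $\widetilde P\,\widetilde\rho$ with matrix $\widetilde{\mathbf M}$; your checks of self-adjointness of $\widetilde P$, square integrability and trace preservation are sound, and your continuity caveat concerns the proof of Theorem~\ref{c:homa_tetel} itself rather than this reduction.
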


\begin{remark} A very important case in the theoretical and experimental physics and the quantum information theory is if $\rho(\mathbf{x},\mathbf{y})=u^{*}(\mathbf{x})u(\mathbf{y})$ is a pure Gaussian kernel of the form 
\begin{equation} \label{rho2}
u(\mathbf{x})=\sqrt{N_{G}}e^{-\mathbf{x^{\top}(A-\mathit{i}B)x+(E-\mathit{i}D)^{\top}x}},
\end{equation} 
where $\mathbf{A}, \mathbf{B}$ are symmetric real matrices such that $\mathbf{A}$ is positive, $\mathbf{D}, \mathbf{E}$ are real vectors, and $N_{G}$ is a normalizing factor.
\end{remark}

\begin{remark}
One direction of Theorem~\ref{c:homa_tetel} is easy to generalize: if $P(\mathbf{x},\mathbf{y})$ is a self adjoint polynomial with a positive semidefinite bilinear representation matrix, and $\rho(\mathbf{x},\mathbf{y})$ is a kernel of a density operator, then $P(\mathbf{x},\mathbf{y})\rho(\mathbf{x},\mathbf{y})$ is the kernel of a positive semidefinite operator. Indeed, by the spectral theorem $\rho(\mathbf{x},\mathbf{y})$ can be written as a positive linear combination of pure density kernels, so applying Theorem~\ref{c:homa_tetel} for the summands yields 
the positivity of the kernel $P(\mathbf{x},\mathbf{y})\rho(\mathbf{x},\mathbf{y})$. 
\end{remark}

The following example shows that the opposite direction of Theorem~\ref{c:homa_tetel} is almost sharp. Namely, even in dimension one, we cannot replace $\rho(x,y)$ by a mixed Gaussian kernel in Theorem~\ref{c:homa_tetel}. 

\begin{example}\label{ex:wrong_with_mixed}
Let 
\begin{equation*} P(x,y)=\frac 23 (x+y)^2-(x-y)^2+1 \quad \textrm{and} \quad   
K(x,y)=\exp\left(-\frac 52(x^2+y^2)+xy\right).
\end{equation*}
Clearly, $K(x,y)$ is the kernel of a mixed, positive semidefinite Gaussian operator. Our polynomial satisfies $P(x,y)=\mathbf{v}^\top(x) \mathbf{M}_P \mathbf{v}(y)$, where 
\begin{equation*} 
\mathbf{v}(x)=(x^2,x,1)^\top \quad \textrm{and} \quad   \mathbf{M}_P=\frac{1}{3}\begin{pmatrix} 0 & 0 & -1 \\ 0 & 10 & 0 \\ -1 & 0 & 3 \end{pmatrix},
\end{equation*} 
and $\mathbf{M}_P$ is not positive semidefinite as $\det(\mathbf{M}_P)=-\frac{10}{27}<0$. 

Finally, we show that $P(x,y)K(x,y)$ is the kernel of a positive semidefinite operator. We refer to \cite{Newton}, where a family of kernels $P_{\gamma} (x,y)K(x,y)$ with unit trace was considered as
\begin{equation*} 
P_{\gamma} (x,y)=(1+\gamma/2)^{-1}\left(\gamma (x+y)^2-(x-y)^2+1\right).
\end{equation*}
We need to prove that for $\gamma=\frac 23$ this is a kernel of a density operator. More general kernels are expanded in \cite[3.3]{Newton} in base $\phi_n$, giving an infinite band matrix with coefficients $\rho_{m,n}$, see formulas $(15)$, $(22)$, and $(23)$. Note that there is an $\frac 1N$ factor missing from $\rho_{m,n}$ in $(22)$, and there is a factor $\epsilon_0$ missing from $a_0$. Fix $\gamma=\frac 23$, in this special case we can check that all non-diagonal elements are zero, that is, $\rho_{m,n}=0$ if $m\neq n$. Then the diagonal elements $\rho_{n,n}$ give the eigenvalues, and we obtain the formula \begin{equation*} 
\rho_{n,n}=\frac{3}{4}(a_0+b_0n)\epsilon^n=\left(8-3\sqrt{6}+\left(12-4\sqrt{6}\right)n \right)\left(\frac{\sqrt{6}-2}{\sqrt{6}+2} \right)^n.
\end{equation*} 
Hence all eigenvalues are positive, showing that indeed $P(x,y)K(x,y)$ is the kernel of a positive semidefinite operator. 
\end{example} 

The above example shows that pureness is an important assumption in  Theorem~\ref{c:homa_tetel}.

\section{Polynomial excitations in harmonic oscillator systems}
\label{sec:theorems_for_entanglement}

In this section the polynomial-Gaussian kernel framework is applied to harmonic oscillator systems. A connection is established between arbitrary mixed or pure polynomial-Gaussian states and the corresponding  covariance matrices. Then the consequent universal physical rules for multipartite NPT  entanglement under polynomial excitation are demonstrated. 

\subsection{Polynomial-Gaussian Wigner functions}
 Let us consider the most general Gaussian multiplied with a general, quadratic polynomial in $\mathbf{r}=(\mathbf{x},\mathbf{p})^\top$:
\[
W_Q(\mathbf{r})=N_p \left(Z_0+ \mathbf{Z}_1^\top \mathbf{r}+\mathbf{r}^\top \mathbf{Z}_2 \mathbf{r} \right)\exp\Bigl(-(\mathbf{r}-\bar{\mathbf{r}})^\top \pmb{\sigma}_G^{-1} (\mathbf{r}-\bar{\mathbf{r}})\Bigr),
\]
where the proper normalization fixes $N_p$ as
\[
1=\int_{\mathbb{R}^{2N}} d\mathbf{r} \,W_Q(\mathbf{r})=N_p \left( Z_0+ \mathbf{Z}_1^\top \bar{\mathbf{r}}+\bar{\mathbf{r}}^\top \mathbf{Z}_2 \bar{\mathbf{r}}+\frac12 \Tr{\left(\mathbf{Z}_2 \pmb{\sigma}_G\right)} \right).
\]
The average $\mathbf{d}$ defined in \eqref{eq:def_average} with $W(\mathbf{r},t) \rightarrow W_Q(\mathbf{r})$ is
\begin{equation}
\mathbf{d}=\bar{\mathbf{r}}+N_p \left(\frac12 \pmb{\sigma}_G\mathbf{Z}_1+\pmb{\sigma}_G \mathbf{Z}_2 \bar{\mathbf{r}} \right) \equiv \bar{\mathbf{r}} +\pmb{\delta},
\end{equation}
and the covariance matrix $\pmb{\sigma}$ defined in \eqref{eq:def_cov_mat} with $W(\mathbf{r},t) \rightarrow W_Q(\mathbf{r})$ can expressed as
\[
\pmb{\sigma}_{i,j} = \left(\pmb{\sigma}_G \right)_{i,j}+N_p \left( \pmb{\sigma}_G\mathbf{Z}_2 \pmb{\sigma}_G \right)_{i,j} -2 \pmb{\delta}_i \pmb{\delta}_j,
\]

\begin{equation}
\pmb{\sigma} = \pmb{\sigma}_G + N_p \left( \pmb{\sigma}_G\mathbf{Z}_2 \pmb{\sigma}_G \right) -2\pmb{\delta} \pmb{\delta}^\top.
\label{eq:sigma_for_polygauss}
\end{equation}
The last term is a dyadic tensor.
These two expressions clearly show that the average and the covariance-matrix get contributions from the polynomial parts as well, and only for Gaussians ($\mathbf{Z}_1=\mathbf{0}$, $\mathbf{Z}_2={0}$) will the equalities  $\mathbf{d}=\bar{\mathbf{r}}$ and $\pmb{\sigma}=\pmb{\sigma}_G$ hold. 
\label{remark_1}

Why are these results relevant for quantum entanglement? In general, the partial transposition is implemented at the covariance matrix level by a sign inversion of the momentum components of the transposed subsystem. In this process the covariance matrix $\pmb{\sigma}$, defined in Eq.~\eqref{eq:def_of_cov_mat} is transformed. The covariance matrix corresponding to polynomial-Gaussian kernel $\pmb{\sigma}$ is different from that of the Gaussian part $\pmb{\sigma}_G$, see for example Eq.~\ref{eq:sigma_for_polygauss}. The partial transposition should be performed for the complete $\pmb{\sigma}$:
\begin{equation}
	\tilde{\pmb{\sigma}}=\pmb{\tau}\pmb{\sigma}\pmb{\tau},
\end{equation}
where $\pmb{\tau}$ is a diagonal matrix containing $+1$ entries for unchanged phase-space coordinates and $-1$ entries at the positions corresponding to the momentum components of the subsystem with respect to which the partial transposition is performed.

In case of a general CV density operator, the symplectic eigenvalues of the partially transposed covariance matrix provide a necessary condition for separability; a separable state must have all symplectic eigenvalues of the partially transposed covariance matrix larger than or equal to one. The covariance matrix of the general CV density operator contains some additional information, i.e., necessary conditions for separability/entanglement can be varified, see e.g. \cite{serafini2024quantum}. 
Taking all this into account, the results \eqref{eq:time_evol_Sigma(t)}, \eqref{eq:spec_symp_mat} and \eqref{eq:sigma_for_polygauss} provide a very useful computational apparatus, as it simplifies calculations of the time-dependent covariance matrix and, consequently, facilitates entanglement detection via the covariance matrix.

In this subsection 
we presented the properties of a polynomial (a quadratic) Gaussian Wigner function. In the next subsection we discuss how one obtains such a system by exciting Gaussian systems.

\subsection{Polynomial excitations.}

They are indispensable tools for manipulating polynomial Gaussian states. We introduce the usual annihilation and creation operators as linear combinations of the canonical position and momentum operators,
\begin{equation}
\label{eq:ladder}
\hat{\mathbf a}
= \frac{1}{\sqrt{2}}
\left(
\hat{\mathbf x}
+ i\hat{\mathbf p}
\right),
\qquad
\hat{\mathbf a}^\dagger
= \frac{1}{\sqrt{2}}
\left(
\hat{\mathbf x}
- i\hat{\mathbf p}
\right).
\end{equation}

The inverse expressions are given by
\begin{equation}
\label{eq:ladder_inv}
\hat{\mathbf x}
= \frac{1}{\sqrt{2}}
\left(\hat{\mathbf a} + \hat{\mathbf a}^\dagger\right),
\qquad
\hat{\mathbf p}
= -\frac{i}{\sqrt{2}}\,
\left(\hat{\mathbf a} - \hat{\mathbf a}^\dagger\right).
\end{equation}

With these definitions, the bosonic commutation relations
\begin{equation}
[\hat a_i,\hat a_j^\dagger] = \delta_{ij},
\qquad
[\hat a_i,\hat a_j] = [\hat a_i^\dagger,\hat a_j^\dagger] = 0
\end{equation}
are satisfied.

In position representation

\begin{equation}
		\hat{a}_j=\frac{1}{\sqrt{2}}\left(x_j+\frac{\partial}{\partial x_j}\right), \qquad
        \hat{a}_j^\dagger=\frac{1}{\sqrt{2}}\left(x_j-\frac{\partial}{\partial x_j}\right).
\end{equation}

Repeatedly applying creation operators to a Gaussian kernel will generate 
polynomials multiplied by the Gaussian kernel (the application is a multiplication
from the left at the quantum state level,
which becomes a sandwich
($\hat{a}_{j_1}^\dagger\cdots\hat{a}_{j_k}^\dagger\hat{\rho}_{G}\hat{a}_{j_k}\cdots\hat{a}_{j_1}$) at the density matrix level). If the original kernel was a
density operator, the generated polynomial Gaussian kernels and all
convex combinations of them (after normalization) are also guaranteed
to be physical with the same Gaussian part. 
All polynomial
Gaussian kernels can be written in the following form:
\begin{equation}
		\hat{K}_{PG}=\sum_{j=1}^{M}\frac{P^\dagger_j(\hat{\mathbf{a}},
		 \hat{\mathbf{a}}^\dagger)\hat{K}_G P_{M-j}(\hat{\mathbf{a}},
	 \hat{\mathbf{a}}^\dagger)}{\Tr\Bigl(\sum_{j=1}^{M}{P^\dagger_j(\hat{\mathbf{a}}, \hat{\mathbf{a}}^\dagger)\hat{K}_G P_{M-j}(\hat{\mathbf{a}}, \hat{\mathbf{a}}^\dagger)\Bigr)}}, \,\,\,\hat{\mathbf{a}}=\left( \hat{a}_1, \ldots, \hat{a}_N \right)^\top.
        \label{eq:op_symbolically_1}
\end{equation}
where $\hat{K}_G$ represents the Gaussian part. As a special case, if $\hat{K}_G=\hat{\rho}_G$ is a density operator, and when  $ P_j = P_{M-j}$  for all $j$, the above expression reduces to the Kraus representation of a density operator, which plays a fundamental role in physics.  The Kraus representation characterizes all physically realizable deterministic quantum operations, i.e., completely positive trace preserving (CPTP) maps or, in case of post-selected (heralded) experiments, completely positive trace decreasing maps. Consequently, it forms one of the cornerstones of quantum information theory \cite{Choi}.

In a simple special case, a quadratic Hamiltonian, diagonal in the normal-mode basis, takes the form
\begin{equation}
\hat H
= \sum_{k=1}^N \hbar\omega_k
\left(
\hat a_k^\dagger \hat a_k + \frac{1}{2}
\right),
\end{equation}
where $\omega_k$ are the normal-mode frequencies.

Let $P(\mathbf{x},\mathbf{p})$ be a polynomial such that
$P(\mathbf{x},\mathbf{p})W^{(G)}(\mathbf{x},\mathbf{p})$ is also a Wigner function. 
Now tracing out subsystems $3,\dots,k$ as in  Eq.~\eqref{eq:red_wign_func} we obtain
\begin{equation}
Q (\mathbf{x}^{(1)},\mathbf{p}^{(1)};
          \mathbf{x}^{(2)},\mathbf{p}^{(2)})
W^{(G)}_2(\mathbf{x}^{(1)},\mathbf{p}^{(1)};
          \mathbf{x}^{(2)},\mathbf{p}^{(2)})
\label{eq:poly_after_tracing3}
\end{equation}
with a new polynomial $Q$.
Note that the Gaussian part remains the same as in \eqref{eq:red_wign_func}, see e.g. \cite{Abramowitz1964}.

\begin{theorem} [Balka--Csord\'as--Homa~\cite{japcsi}]  \label{c:NPT}
If the bipartite Gaussian $\widehat{\rho}_G$ is NPT, then $\widehat{\rho}_{\PG}$ is NPT for any self-adjoint polynomial $P$, too.  
\end{theorem}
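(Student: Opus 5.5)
Here $\widehat{\rho}_{\PG}$ denotes the normalized kernel $P(\mathbf{x},\mathbf{y})\rho_G(\mathbf{x},\mathbf{y})$, assumed to be a density operator. The plan is to work with partially transposed kernels and exploit Gaussian decay. Partial transposition $\mathbf{x}^{(1)}\leftrightarrow\mathbf{y}^{(1)}$ acts pointwise on kernels, so the partial transpose of $P\rho_G$ is $\widetilde{P}\,\widetilde{\rho}_G$, where $\widetilde{P}$ is again a polynomial. Moreover, $\widetilde{\rho}_G$ is again a (self-adjoint) Gaussian kernel of the form \eqref{eq:gaussian_kernel}. Since $\widehat{\rho}_G$ is NPT, $\widetilde{\rho}_G$ is not positive semidefinite. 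By Mercer's criterion (Theorem~\ref{t:Mercer}) there are then points $\mathbf{x}_1,\dots,\mathbf{x}_k$ and coefficients $c_i$ with $\sum c_ic_j^*\widetilde{\rho}_G(\mathbf{x}_i,\mathbf{x}_j)<0$. The goal is to produce such a violating configuration for $\widetilde{P}\widetilde{\rho}_G$ as well, and then apply Theorem~\ref{t:Mercer} in the other direction.

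\textbf{Step 1: an explicit negative direction.} I would replace the finite configuration by a test function adapted to the Gaussian. A non-positive Gaussian kernel has a negative eigenvalue. Using the symplectic/Williamson normal form, one can conjugate $\widetilde{\rho}_G$ by a Gaussian unitary, which maps polynomial-Gaussian kernels to polynomial-Gaussian kernels. Its diagonalization then reduces to one-mode factors, each of which is either a thermal-type kernel or an \enquote{unphysical} one whose eigenvalues have the form $c\,q^n$ with $q<0$, or with $|q|>1$ excluded by $L^2$. So there is a whole family of eigenfunctions $\varphi_n$, Hermite functions times Gaussians, whose eigenvalues $\lambda_n$ alternate in sign.

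\textbf{Step 2: the polynomial as a perturbation at high excitation.} Next I would evaluate $\langle \widetilde{P}\widetilde{\rho}_G\,\varphi_n,\varphi_n\rangle$ for large odd $n$. Multiplication of the kernel by a polynomial in $(\mathbf{x},\mathbf{y})$ corresponds to sandwiching $\widetilde{\rho}_G$ between polynomials in the position operators. In the eigenbasis these operators act as finite-band matrices (ladder operators) whose entries grow at most polynomially in $n$. The quadratic form therefore becomes a finite sum $\sum_{|j|,|l|\le d}\alpha_{jl}(n)\lambda_{n+j}$, with the coefficients $\alpha_{jl}(n)$ polynomial in $n$. Because $\lambda_n=c\,q^n$ decays geometrically, this sum is $q^n$ times a finite trigonometric-type expression in $n$. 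Choosing a suitable combination of neighbouring $\varphi_n$, a finite-dimensional test space, should make the sign of the top-degree term in $n$ negative. One convenient way to arrange this is to take the leading monomial of $\widetilde{P}$ along the negative direction and use that a nonzero polynomial does not vanish identically (Fact~\ref{f:pol}).

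\textbf{Main obstacle.} The hard step is Step 2: ruling out cancellation, that is, showing the leading coefficient cannot vanish for every choice of test combination. If this fails, I would fall back on a scaling argument. Translate and dilate the test points from Step 1 along the negative directions and let the configuration go to infinity. There the Gaussian factor dominates the ratios $\widetilde{\rho}_G(\mathbf{x}_i,\mathbf{x}_j)$, and the polynomial contributes only a factor of order $|t|^{\deg}$, uniform up to a generic choice of direction avoiding the zero set of the top homogeneous part of $\widetilde{P}$ (Fact~\ref{f:pol}). This would keep the Mercer sum negative, and Theorem~\ref{t:Mercer} would then show that the partial transpose of $\widehat{\rho}_{\PG}$ is not positive semidefinite, i.e.\ $\widehat{\rho}_{\PG}$ is NPT.
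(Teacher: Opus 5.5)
Your reduction is the right one, and it matches how the paper's appeal to \cite{japcsi} works. Partial transposition acts pointwise, so the partial transpose of $P\rho_G$ is $\widetilde P\widetilde\rho_G$, a polynomial Gaussian kernel with Gaussian part $\widetilde\rho_G$. The key input is that a positive semidefinite polynomial Gaussian operator must have a positive semidefinite Gaussian part.

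That implication ($\widetilde P\neq 0$ and $\widetilde P\widetilde\rho_G\succeq 0$ imply $\widetilde\rho_G\succeq0$) is the entire content of the theorem, and it is exactly what your proposal leaves open. In Step~2 you only assert that a combination of the $\varphi_{\mathbf n}$ \emph{should} make the leading term negative, with the sign read off from the leading monomial of $\widetilde P$ via Fact~\ref{f:pol}. That mechanism fails as stated, because terms cancel on the diagonal. Take $\widetilde P(\mathbf x,\mathbf y)=c+i\bigl(\ell(\mathbf x)-\ell(\mathbf y)\bigr)$ with $\ell$ a real linear form. The operator is $c\,\widehat{\widetilde\rho}_G+i[\ell(\hat{\mathbf x}),\widehat{\widetilde\rho}_G]$, and the commutator has zero diagonal in every eigenbasis of $\widehat{\widetilde\rho}_G$. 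So the diagonal entries are $c\lambda_{\mathbf n}$: they are governed by the constant term, not the top monomial, and vanish identically for $c=0$ although $\widetilde P\neq0$.

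To make Step~2 a proof, work in the normal-mode frame where $\widehat{\widetilde\rho}_G=\hat T=c'\prod_j q_j^{\hat n_j}$ with $q_1<0$. You would need:
(i) the square-root factors pair up, so the diagonal entries are $\lambda_{\mathbf n}h_0(\mathbf n)$ with $h_0$ a polynomial; the sign alternation of $\lambda_{\mathbf n}$ in $n_1$ then gives a negative entry whenever $h_0\not\equiv0$;
(ii) when $h_0\equiv0$, a $2\times2$-minor argument using a nonzero off-diagonal entry;
(iii) an injectivity statement: these entries cannot all vanish for large $\mathbf n$ unless $\widetilde P=0$. For example, write the operator as $\hat T\hat Z$ using $\hat a_j\hat T=q_j\hat T\hat a_j$ and use linear independence of falling factorials; modes with $q_j=0$ need separate care.
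None of this is in the proposal.

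The fallback does not close the gap. Translating the Mercer points, $\mathbf x_i\mapsto\mathbf x_i+t\mathbf a$, is harmless for the Gaussian: the new factors have the form $e^{-g(\mathbf x_i)-g(\mathbf x_j)-i\phi(\mathbf x_i)+i\phi(\mathbf x_j)}$ and can be absorbed into the $c_i$. But it only shifts $\mathbf x+\mathbf y$ and leaves the dependence of $\widetilde P$ on $\mathbf x-\mathbf y$ untouched. So the coefficient of the top power of $t$ is a polynomial $R(\mathbf a;\mathbf x_i-\mathbf x_j)$ that still varies with the pair $(i,j)$ and does not factor out of the Mercer sum. For $\widetilde P(\mathbf x,\mathbf y)=|\mathbf x-\mathbf y|^2$ nothing grows at all, and the top homogeneous part vanishes identically on the diagonal, so Fact~\ref{f:pol} yields no admissible direction. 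Dilating does not help, since it does not map violating configurations of $\widetilde\rho_G$ to violating ones. You also ignore the sign: a negative leading coefficient would make the sum positive, and that case needs the one-point test with $\widetilde\rho_G(\mathbf x,\mathbf x)>0$.

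The translation idea does work in phase space. The Weyl symbol of $\widetilde P\widetilde\rho_G$ is $\Pi\,\widetilde W_G$ with $\Pi\neq0$ a polynomial on $\RR^{2N}$. Conjugate by a displacement by $t\mathbf s$, then apply the positivity-preserving congruence $\hat O\mapsto e^{\mathbf a^\top\hat{\mathbf r}}\hat O e^{\mathbf a^\top\hat{\mathbf r}}$, which multiplies the Weyl symbol by $e^{2\mathbf a^\top\mathbf r}$. With $\mathbf a$ chosen to re-centre the Gaussian, this leaves the symbol $\Pi(\mathbf r+t\mathbf s)\widetilde W_G(\mathbf r)$ up to a positive factor. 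Divide by $t^{\deg\Pi}$ and let $t\to\infty$; positivity passes to the pointwise limit of kernels by Theorem~\ref{t:Mercer}. This gives $\Pi_{\mathrm{top}}(\mathbf s)\,\widehat{\widetilde\rho}_G\succeq0$. Choosing $\mathbf s$ with $\Pi_{\mathrm{top}}(\mathbf s)\neq0$ (Fact~\ref{f:pol}), together with $\widetilde\rho_G(\mathbf x,\mathbf x)>0$, forces $\widehat{\widetilde\rho}_G\succeq0$, a contradiction. In position representation the momentum part of this re-centring is an imaginary shift of the kernel's arguments, which is precisely what real Mercer points cannot reproduce.
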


Theorem~\ref{c:NPT} implies that if $W^{(G)}_2$ is NPT, then so is \eqref{eq:poly_after_tracing3}.

\begin{claim}\label{c:mainfact1}
For an arbitrary multipartite quantum  system characterized by a Gaussian density operator, any two NPT  entangled subsystems remain NPT  entangled after arbitrary polynomial excitations \eqref{eq:op_symbolically_1}. This physical rule of invariance is independent of dynamics.
\end{claim}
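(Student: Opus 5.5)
The plan is to reduce the multipartite statement to the bipartite Theorem~\ref{c:NPT} by tracing out, using the fact that a polynomial excitation of a Gaussian state stays polynomial-Gaussian with the \emph{same} Gaussian part after tracing.

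\textbf{Step 1 (the excited state as a polynomial-Gaussian state).} Let $\widehat{\rho}_G$ be the multipartite Gaussian density operator and let $\widehat{\rho}_{PG}$ be its excitation of the form \eqref{eq:op_symbolically_1}, assumed to be a density operator. Each creation or annihilation operator acts in position representation as a first-order differential operator with linear coefficients. Hence applying finitely many of them on the left and right of the Gaussian kernel $\rho_G(\mathbf x,\mathbf y)$ yields $P(\mathbf x,\mathbf y)\rho_G(\mathbf x,\mathbf y)$ for some polynomial $P$. Because $\widehat{\rho}_{PG}$ is self-adjoint, $P$ is self-adjoint. Equivalently, in phase space the Wigner function has the form $P(\mathbf x,\mathbf p)W^{(G)}(\mathbf x,\mathbf p)$, with $W^{(G)}$ the Wigner function of $\widehat{\rho}_G$.

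\textbf{Step 2 (reduce to the chosen pair).} Fix two subsystems, labelled $1$ and $2$, whose reduced Gaussian state $W^{(G)}_2$, obtained via \eqref{eq:partial_trace_phase_space}, is NPT. Integrating the remaining variables $\mathbf x^{[3]},\mathbf p^{[3]}$ out of $PW^{(G)}$ gives \eqref{eq:poly_after_tracing3}, namely $Q\,W^{(G)}_2$ with a new polynomial $Q$. This is the step needing care, and I would argue it as follows. Write the Gaussian in block form and complete the square in the traced variables. The result is a Gaussian in the traced variables, with mean depending affinely on $(\mathbf x^{(1)},\mathbf p^{(1)},\mathbf x^{(2)},\mathbf p^{(2)})$, times the marginal Gaussian $W^{(G)}_2$. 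Integrating a polynomial against a Gaussian with affine mean gives a polynomial in the affine mean, by the moment formulas (cf.~\cite{Abramowitz1964}). So $Q$ is a polynomial, and the Gaussian factor is exactly the reduced Gaussian. $Q$ is real-valued, since it is the marginal of a real Wigner function divided by a positive Gaussian, so the reduced kernel is self-adjoint. The reduced operator is also a density operator, since the partial trace of a density operator is one. Thus the reduced state $\widehat{\rho}_{r}$ equals $\widehat{\rho}_{PG}$ for the bipartite Gaussian state $\widehat{\rho}_{G,2}$ with the self-adjoint polynomial $Q$ (in the kernel picture, after converting back by the inverse Wigner--Weyl transform).

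\textbf{Step 3 (apply the bipartite result).} By Theorem~\ref{c:NPT} applied to $\widehat{\rho}_{G,2}$ and $Q$, the reduced excited state is NPT. This holds for every pair of subsystems that was initially NPT and for every excitation of the form \eqref{eq:op_symbolically_1}, which proves the invariance rule. The statement that it is independent of dynamics is then immediate: the argument uses only the form of the state at a given time (a Gaussian times a polynomial), not how that Gaussian was produced.

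The main obstacle I expect is Step 2. One must check that the \enquote{same Gaussian part} claim holds for the \emph{reduced NPT Gaussian}, that is, that no residual Gaussian factor from the traced modes modifies $W^{(G)}_2$. One must also check that $Q$ is not identically zero. The latter follows because the reduced operator has trace one, so $Q$ cannot vanish identically.
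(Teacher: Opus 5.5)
Your proposal is correct and takes the paper's route. The paper also writes the excited state as a polynomial times the Gaussian Wigner function, traces out the other subsystems to get $Q\,W^{(G)}_2$ with the reduced Gaussian part unchanged, and then applies Theorem~\ref{c:NPT}. Your completing-the-square argument for the reduction step, the conversion back to a kernel polynomial, and the observation that $Q\not\equiv 0$ by unit trace just fill in details the paper leaves to \cite{Abramowitz1964}.
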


Example~\ref{sub:GKLC_excitations} shows that the converse of Claim~\ref{c:mainfact1}
is not true:

\begin{fact}
There exists a multipartite quantum system characterized by a Gaussian density operator, for which two initially separable subsystems can become NPT  entangled after polynomial excitations \eqref{eq:op_symbolically_1}. 
\end{fact}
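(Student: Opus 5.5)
This is an existence statement, so I will exhibit an explicit example and check it with the finite criterion of Theorem~\ref{c:homa_tetel} and its corollary. The simplest choice is the two-mode ($N=2$) product vacuum, a pure Gaussian state,
\begin{equation*}
u(\mathbf{x})=\pi^{-1/2}e^{-(x_1^2+x_2^2)/2}, \qquad \rho(\mathbf{x},\mathbf{y})=u^*(\mathbf{x})u(\mathbf{y}).
\end{equation*}
This is a degenerate instance of the multipartite setting: each subsystem is a single mode, and with more modes one adds further vacuum modes and traces them out, which leaves the Gaussian part a product. The state is manifestly separable with respect to the bipartition $\{x_1\}\,|\,\{x_2\}$. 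For the excitation I take the nonlocal single-photon addition $\hat b^\dagger=\hat a_1^\dagger+i\kappa\,\hat a_2^\dagger$, so the excited state is $\hat b^\dagger\hat\rho\,\hat b$ up to normalization. This is of the form \eqref{eq:op_symbolically_1} with $M=1$ and the sandwich structure, hence a genuine density operator.

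The first step is to compute the kernel in position representation. Since $\hat a_j^\dagger e^{-x_j^2/2}=\sqrt2\,x_je^{-x_j^2/2}$, the excited wave function is proportional to $(x_1+i\kappa x_2)u(\mathbf{x})$. The kernel is therefore proportional to
\begin{equation*}
P(\mathbf{x},\mathbf{y})\rho(\mathbf{x},\mathbf{y}), \qquad P(\mathbf{x},\mathbf{y})=(x_1+i\kappa x_2)(y_1-i\kappa y_2),
\end{equation*}
up to the conjugation convention for $u$, which only flips the sign of $\kappa$. Here $P\rho\in L^2(\RR^4)$ trivially, and $P$ is exactly the polynomial of the Example in Section~\ref{sec:pol_exc_pure}.

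The second step is to take the partial transpose $x_1\leftrightarrow y_1$. The Gaussian factor $e^{-(|\mathbf{x}|^2+|\mathbf{y}|^2)/2}$ is invariant, so the transposed kernel is $\widetilde P(\mathbf{x},\mathbf{y})\rho(\mathbf{x},\mathbf{y})$ with the same pure product $\rho$. That Example already computed
\begin{equation*}
\widetilde P=\mathbf{w}^\top(\mathbf{x})\widetilde{\mathbf{M}}\mathbf{w}(\mathbf{y}), \qquad \mathbf{w}(\mathbf{x})=(x_1,x_2,x_1x_2,1)^\top,
\end{equation*}
and $\widetilde{\mathbf M}$ contains the block $\begin{pmatrix}0&i\kappa\\-i\kappa&0\end{pmatrix}$, whose eigenvalues are $\pm\kappa$. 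So $\widetilde{\mathbf M}$ fails to be positive semidefinite precisely when $\kappa\neq0$.

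The third step applies the ``only if'' direction of Theorem~\ref{c:homa_tetel} to the pure kernel $\rho$ and the polynomial $\widetilde P$. It shows that the partially transposed operator is not positive semidefinite, so the excited state is NPT, hence entangled, for every $\kappa\neq0$, while the initial state was separable. For $\kappa=0$ the excitation is local and no entanglement arises, which shows that the nonlocality of the excitation is essential.

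The main point needing care is applying Theorem~\ref{c:homa_tetel} to $\widetilde P$. It requires $\widetilde P$ to be self-adjoint, which holds because partial transposition preserves hermiticity of the kernel. It also requires the entries of $\mathbf{w}$ to be distinct monomials, which $x_1,x_2,x_1x_2,1$ are, so the spanning argument of that proof goes through. Normalization constants are irrelevant to positivity.
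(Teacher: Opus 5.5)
Your proof is correct, and it takes a different route from the paper. The paper supports this Fact only through the example of Subsection~\ref{sub:GKLC_excitations}. It starts from the three-mode \emph{mixed} GKLC state at the parameters \eqref{eq:standard_pars}, which is bound entangled as a tripartite state while the split $\{1,2\}\,|\,\{3\}$ is separable. It applies the nonlocal excitation \eqref{eq:spec_pol} coupling modes $2$ and $3$, and then detects NPT numerically: it exhibits negative Mercer sums after the partial reflection $p_3\to -p_3$ (Fig.~\ref{fig:measuring_with_mercer_3}) and computes the negativity as a function of $\kappa$ (Fig.~\ref{fig:negativity}). Because that Gaussian is mixed, Theorem~\ref{c:homa_tetel} is unavailable there; Example~\ref{ex:wrong_with_mixed} shows the matrix criterion genuinely fails for mixed kernels, so explicit Mercer witnesses are needed. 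You instead start from the pure product vacuum, which is invariant under $x_1\leftrightarrow y_1$. This makes Theorem~\ref{c:homa_tetel} and its corollary apply directly to $\widetilde P\rho$, and the block of $\widetilde{\mathbf M}$ with eigenvalues $\pm\kappa$ gives an exact analytic certificate for every $\kappa\neq 0$, with no numerics. Your excited state is just the single-photon state proportional to $|1,0\rangle+i\kappa|0,1\rangle$, so its NPT character also follows from its Schmidt decomposition. Your route gives full rigor and simplicity, and it isolates the nonlocality of the excitation as the mechanism. The paper's route gives a physically richer instance — a genuinely tripartite, mixed, bound-entangled Gaussian input — together with a quantitative negativity curve, at the price of a numerical rather than analytic certificate.
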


Claim~\ref{c:mainfact1} implies the following.

\begin{fact}\label{fact:mainfact2}
In a multipartite quantum  system characterized by a Gaussian density operator, the number of NPT  entangled pairs of subsystems cannot decrease under polynomial excitations \eqref{eq:op_symbolically_1}.
\end{fact}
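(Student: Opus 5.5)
The plan is to derive Fact~\ref{fact:mainfact2} as a counting consequence of Claim~\ref{c:mainfact1}, which in turn rests on Theorem~\ref{c:NPT} applied to each reduced bipartite state.

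Fix the multipartite Gaussian state $\widehat{\rho}_G$ with its partition into $m$ subsystems. Let $\mathcal{E}_G$ be the set of unordered pairs $\{k,l\}$, $1\le k<l\le m$, whose reduced bipartite state is NPT; the reduced state is obtained by tracing out the remaining subsystems as in \eqref{eq:partial_trace_phase_space}. Let $\widehat{\rho}_{PG}$ be the state after a polynomial excitation \eqref{eq:op_symbolically_1}, and let $\mathcal{E}_{PG}$ be the analogous set for it. The goal is the inclusion $\mathcal{E}_G\subseteq\mathcal{E}_{PG}$, from which $|\mathcal{E}_G|\le|\mathcal{E}_{PG}|$ follows immediately.

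To prove the inclusion, fix $\{k,l\}\in\mathcal{E}_G$ and relabel so that $k=1$ and $l=2$.
\begin{enumerate}
\item The Wigner function of $\widehat{\rho}_{PG}$ has the form $P\,W^{(G)}$ for some polynomial $P$. This holds because each term $P_j^\dagger \hat K_G P_{M-j}$ in \eqref{eq:op_symbolically_1} acts on a Gaussian by differentiation and multiplication by coordinates, so it produces a polynomial times the same Gaussian.
\item Tracing out subsystems $3,\dots,m$ gives \eqref{eq:poly_after_tracing3}: a polynomial $Q$ times the reduced Gaussian $W^{(G)}_2$. This reduced Gaussian is exactly the reduced state of $\widehat{\rho}_G$ on the pair $\{1,2\}$.
\item By assumption $W^{(G)}_2$ is NPT. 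Theorem~\ref{c:NPT} then shows that $Q\,W^{(G)}_2$ is NPT.
\end{enumerate}
Hence $\{1,2\}\in\mathcal{E}_{PG}$. This argument is precisely the content of Claim~\ref{c:mainfact1}, so one may cite the claim directly.

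The one point needing care is checking the hypotheses of Theorem~\ref{c:NPT} for the reduced kernel. The polynomial $Q$ must be self-adjoint, and $Q\,W^{(G)}_2$ must be a genuine density operator. Both follow from the facts that partial trace preserves positivity, self-adjointness and unit trace, and that the excited state is a normalized density operator. A second subtlety is that partial tracing keeps the Gaussian factor unchanged while modifying only the polynomial; the paper records this after \eqref{eq:poly_after_tracing3}. Finally, the Fact about creation of new NPT pairs shows that the inclusion can be strict, so the statement "cannot decrease" is the strongest monotonicity one can claim.
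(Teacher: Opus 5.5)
Your proposal is correct and follows the paper's own route. The paper derives this Fact directly from Claim~\ref{c:mainfact1}, which comes from applying Theorem~\ref{c:NPT} to the reduced polynomial-Gaussian \eqref{eq:poly_after_tracing3}; your inclusion $\mathcal{E}_G\subseteq\mathcal{E}_{PG}$ is that same argument written out, with the hypotheses of Theorem~\ref{c:NPT} checked explicitly.
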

The total number of bipartite splits in a system of $N$ degrees of freedom (modes), when NPT entanglement can be detected,  including also such splits that can be obtained after tracing out some degrees of freedoms,  is:
\begin{equation*}
T(N) = 
\sum_{k=2}^{N} \binom{N}{k} \left( 2^{k-1} - 1 \right) = \frac{3^N - 2^{N+1} + 1}{2}.
\end{equation*}
As the number of degrees of freedom $N$ increases, the total number of splits $T(N)$ explodes exponentially. Nevertheless the theorems outlined above simultaneously hold for the entire set of subsystems. There is no need for a case-by-case verification of entanglement preservation across these subsystems. Our universal theorems prove that these polynomial excitations robustly preserve the NPT entanglement, maintaining quantum correlations simultaneously across the full complex system.

\section{Examples}
\label{sec:examples}

To complement the next subsections, Appendices \ref{Appendix_time_evolution} and \ref{sec:appendix_A} present the necessary derivations and technical framework for determining the unitary time evolution of the Wigner function and the covariance matrix. Additionally, they include proofs for the conservation of energy and the time-invariance of the symplectic eigenvalues. The comprehensive presentation in these two appendices serves a valuable didactic purpose. 

\subsection{Unitary time evolution of a three-mode mixed Gaussian GHZ like state}
\label{unitary_dyn}
This first example is relevant as it shows how unitary time-evolution can create longer periods of NPT entanglement in a non-trivial setting. According to Claim~\ref{c:mainfact1} and Eq.~\eqref{eq:sol_for_W}, if naturally the initial state is an arbitrary polynomially excited state of an $N$-mode Gaussian density operator, its subsystems will also become NPT entangled during the same time intervals when the unexcited state. In most cases, these time intervals may be extended as a result of excitation, but never shortened.
Firstly, let us observe the time evolution from $t=0$ of a Gaussian state studied by Giedke, Kraus, Lewenstein, and Cirac \cite{Giedke}. This special, three-particle, locally symmetric Gaussian state (called GHZ-like by Giedke et al., but here
referred to as GKLC-state) has the property that in some parameter region, decomposing it into subsystems of $1+2$ particles results in separable subsystems, but the $1+1+1$ decomposition is PPT-entangled. This behaviour may be changed by unitary dynamics causing the state to be NPT entangled at later times.

\begin{figure}
\centering
\begin{tikzpicture}[
    node/.style={
        circle,
        fill=black,
        inner sep=2pt
    },
    label/.style={
        font=\footnotesize
    },
    redbox/.style={
        draw=red,
        thick,
        rounded corners
    },
    blueconn/.style={
        draw=red,
        thick,
        dashed
        },
     redconn/.style={
        draw=red,
        thick
    },
greenconn/.style={
        draw=green,
        thick
    },
        bluebox/.style={
        draw=blue!70!black,
        thick,
        dashed,
        rounded corners
    }
]

\node[shape=circle,draw=black] (n1) at (0,0) {1 };
\node[shape=circle,draw=black] (n2) at (3,0) {2 };
\node[shape=circle,draw=black] (n3) at (6,0) {3 };
\node[] (n4) at (0,1) {{\color{blue} \RNum{1} }};
\node[] (n5) at (3,1) {{\color{blue} \RNum{2} }};
\node[] (n6) at (6,1) {{\color{blue} \RNum{3} }};

\draw[greenconn] (n2) -- (n3);


\draw[redconn] (n1) -- (n2);


\draw[greenconn] (0,-0.31)--(0,-1);
\draw[greenconn] (0,-1)--(6,-1);
\draw[greenconn] (6,-1)--(6,-0.31);

\node[bluebox, fit=(n1), inner sep=8pt] {};
\node[bluebox, fit=(n2), inner sep=8pt] {};
\node[bluebox, fit=(n3), inner sep=8pt] {};

\end{tikzpicture}
\caption{A possible scenario where the physical system contains $N=3$ oscillator modes and $3$ subsystems. Subsystem I consists of mode $1$, subsystem II of mode $2$, subsystem III of mode $3$.  In the unexcited state, mode $1$ of the quantum oscillator system is entangled with mode $2$. In the figure these entanglements are represented by solid red lines connecting the corresponding modes. In the unexcited multipartite Gaussian state, subsystems I and II, are NPT-entangled pairwise; subsystems I and III, and II and III are not entangled pairwise, i.e., they are separable. Regardless of excitation in the oscillator system, all pairwise NPT-type entanglements, depicted by solid red lines, are preserved. The separability between subsystems I and III and subsystems II and III  vanish, and these subsystems become NPT  entangled.  Entanglements generated by the excitation are shown as solid green lines. We will not address this scenario in this paper.} 
\label{fig:experiment_scenario}
\end{figure}

Let us introduce the GKLC Gaussian state.
It is a phase-space Gaussian with zero mean $\langle \hat{\mathbf{r}} \rangle=\mathbf{0}$ and with a covariance matrix:
\begin{equation}
\pmb \sigma_{GKLC}':= \begin {pmatrix} a+\alpha&0&c&0&c&0\\ 0
&b+\alpha&0&-c&0&-c\\ c&0&a+\alpha&0&c&0
\\ 0&-c&0&b+\alpha&0&-c\\ c&0&c&0&
a+\alpha&0\\ 0&-c&0&-c&0&b+\alpha\end {pmatrix},
\label{eq:GKLC_covariance_matrix}
\end{equation}
taking the ordering $\mathbf{r}'=(x_1,p_1,x_2,p_2,x_3,p_3)^\top$ and describing a GKLC-state with an added $\alpha \mathbbm{1}_6$ \cite{Giedke}, with the expressions
\begin{equation}
b=\frac14 \left(5a-\sqrt{9a^2-8} \right), \quad c=\frac14 \left(a-\sqrt{9a^2-8} \right).
\end{equation}
The real parameters of the description are $a \geq1$ and $\alpha \geq 0$. As $b=a+c$, a simpler description is available using
\begin{equation}
p=a-c+\alpha, \quad q=a+2c+\alpha.
\end{equation}
In case $\alpha=0$, the product $pq$ is equal to 1. Standard parameters studied in \cite{Giedke} are the following:
\begin{equation}
a=1.2, \quad \alpha=0.3 \quad \Longrightarrow \quad p=p_s= 1.75678, \quad q=q_s = 0.986447.
\label{eq:standard_pars}
\end{equation}
 We note that for $a=1.2$  the state is bound entangled for $0.29756\leq \alpha \leq 0.31355$, so $\alpha=0.3$ is well chosen.

The properly normalized Gaussian Wigner function with zero means and covariance matrix is described by \eqref{eq:GKLC_covariance_matrix}
\begin{equation}
W(\mathbf{r}')=\frac{1}{\pi^3}\frac{1}{\sqrt{\det{\pmb \sigma_{GKLC}'}}}\exp \left(-\mathbf{r}'^{\top} (\pmb \sigma_{GKLC}')^{-1}\mathbf{r}'\right).
\label{eq:}
\end{equation}

The corresponding kernel in position representation is described by reordering the covariance matrix $\mathbf{r}=(x_1,x_2,x_3,p_1,p_2,p_3)^\top$  with the permutation matrix $\mathbf{P}$
\begin{equation}
 \mathbf{P}:= \begin {pmatrix} 1&0&0&0&0&0\\ 0&0&1&0
&0&0\\ 0&0&0&0&1&0\\ 0&1&0&0&0&0
\\ 0&0&0&1&0&0\\ 0&0&0&0&0&1
\end {pmatrix},
 \end{equation}
which acts as $\mathbf{r}=\mathbf{P}\mathbf{r}'$ and $\pmb \sigma_{GKLC}:=\mathbf{P} \pmb \sigma_{GKLC}' \mathbf{P}^{\top}$, with
 \begin{equation}
\pmb \sigma_{GKLC} =
\frac13 \begin{pmatrix} 
2p+q   & -(p-q) & -(p-q) & 0 & 0 & 0 \\
-(p-q) & 2p+q   & -(p-q) & 0 & 0 & 0 \\
-(p-q) & -(p-q) & 2p+q   & 0 & 0 & 0 \\
 0 & 0 & 0 & p+2q & p-q  & p-q \\
 0 & 0 & 0 & p-q  & p+2q & p-q \\
 0 & 0 & 0 & p-q  & p-q  & p+2q 
\end{pmatrix}.
\label{eq:neq_GKLC_cov_mat}
 \end{equation}
The inverse Wigner--Weyl transformation \cite[Eq.~6]{japcsi}  leads to such a Gaussian density matrix where $\mathbf{B}=\mathbf{0}$, $\mathbf{D}=\mathbf{E}=\mathbf{0}$ and the matrix $\mathbf{A}$ and $\mathbf{C}$ can be read from the blocks of the covariance matrix $\pmb \sigma_{GKLC}$:
\begin{equation}
\pmb \sigma_{GKLC}=\begin{pmatrix} \displaystyle\frac14 \mathbf{C}^{-1} & \mathbf{0} \\ \mathbf{0} & 4\mathbf{A} \end{pmatrix}.
\label{eq:GKLC_cov_mat}
\end{equation}
Both $\mathbf{A}$ and $\mathbf{C}$ are locally symmetric, which means that diagonal elements are the same and off-diagonal elements are the same with diagonal  $A_d$, $C_d$ and off-diagonal $A_o$, $C_o$ elements as
\begin{equation}
A_d=\frac{1}{12}(p+2q), \quad C_d=\frac{1}{12pq}(p+2q),\quad A_o= \frac{1}{12}(p-q), \quad C_o=\frac{1}{12pq}(p-q).
\label{eq:GKLC_A_and_C}
\end{equation}
The Gaussian density matrix of the GKLC-state with unit trace is of the form
\begin{equation}
\rho_G^{(GKLC)}(\mathbf{x},\mathbf{y})=\frac{1}{\pi^{3/2}\sqrt{p^2 q}}\exp \Bigl(-(\mathbf{x}-\mathbf{y})^\top \mathbf{A}(\mathbf{x}-\mathbf{y})-(\mathbf{x}+\mathbf{y})^\top \mathbf{C}(\mathbf{x}+\mathbf{y}) \Bigr).
\label{eq:GKLC_G_density_matrix}
\end{equation}

To prescribe the dynamics let us choose a locally symmetric $\mathbf{H}$ matrix in \eqref{eq:quadratic_hamiltonian_matrix}, which governs the dynamics in \eqref{eq:dynamics_of_rho_op} of the density operator as
\begin{equation}
\mathbf H := \begin{pmatrix}
 3&0&0&0&1&1\\ \noalign{\medskip}0&3&0&1
&0&1\\ \noalign{\medskip}0&0&3&1&1&0\\ \noalign{\medskip}0&1&1&2&0&0
\\ \noalign{\medskip}1&0&1&0&2&0\\ \noalign{\medskip}1&1&0&0&0&2
\end{pmatrix}.
 \end{equation}
This choice gives a relatively simple form for the symplectic matrices in \eqref{eq:spec_symp_mat}, all the four main blocks of them are locally symmetric (the diagonal elements are the same and the off diagonals have the same values, too), and furthermore, all the elements are linear combinations of $\cos$ and $\sin$ of $\sqrt{5}t$ and $\sqrt{2}t$. After partially reflecting the covariance matrix ($p_3 \to -p_3$) the symplectic eigenvalues (i.e., the positive eigenvalues of $i \pmb{\Omega}\pmb{\tilde{\sigma}} $) are calculated numerically. These symplectic eigenvalues can be seen in Fig.~\ref{fig:gaussian_entanglement_later}. If one of them goes below the horizontal line equal to one, then the system is NPT  entangled. This happens  close to $t=0$, it remains for a while NPT  entangled, however, its behavior changes back to separable around $t=4.5$ for a short time and then goes back to NPT  entangled states and later times similar changements can be seen back and forth.

 \begin{equation}
\rho(\mathbf x,\mathbf y,0):=\rho_G^{(GKLC)}(\mathbf x,\mathbf y),
\end{equation}
where


 \begin{figure}
\includegraphics[angle=270, width=0.9\textwidth]{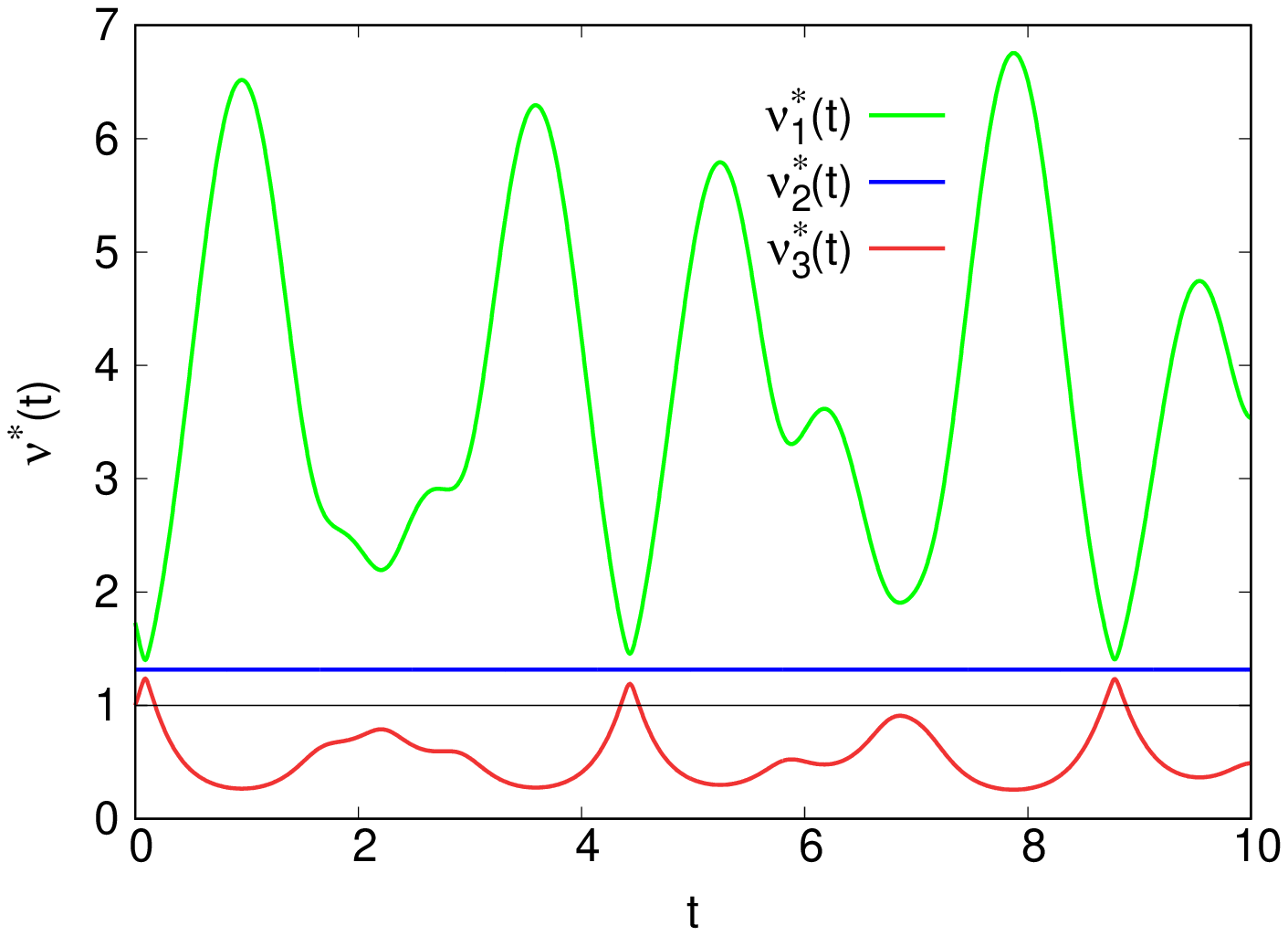}
\caption{Symplectic eigenvalues $\nu_i^*$, $i=1,2,3$ after partial transpose at time $t$ as a function of $t$. Thin black line indicates the value 1.}
\label{fig:gaussian_entanglement_later}
\end{figure}

\subsection{Excitation of a three-mode mixed Gaussian GKLC-state}
\label{sub:GKLC_excitations}
In this example, we consider polynomial excitations of a three-mode mixed Gaussian GKLC-state. The position representation is particularly useful here, since our theoretical results and entanglement criteria are formulated in this representation. At the same time, the Fock representation is of direct experimental relevance, and provides a natural description in terms of creation and annihilation operators. It is shown how the polynomially excited state can be described in both representations and how to relate the corresponding parameters. For the example considered below, the excitation leads to a non-Gaussian state with NPT  entanglement. Let us study the effect of multiplying the previously investigated GKLC-state by some polynomials. 
We will see that, similarly to the dynamics followed in the previous section, the polynomial excitations can also bring the system to an NPT entangled non-Gaussian state. 

To understand better what is going on, let us enumerate a few facts. The covariance matrix \eqref{eq:neq_GKLC_cov_mat} can be diagonalized symplectically \cite{serafini2024quantum}, i.e., 
\begin{equation}
\pmb \sigma_{GKLC} =\mathbf{S}_H \mathrm{diag}\,\left( \mathbf{D}_0, \mathbf{D}_0\right) \mathbf{S}_H^\top,
\end{equation}
where
\[
\mathbf{S}_H=\begin{pmatrix} 
\frac{1}{3w} +\frac{2w}{3} & \frac{1}{3w}-\frac{w}{3} & \frac{1}{3w}-\frac{w}{3} & 0 & 0 & 0 \\
\noalign{\medskip}
\frac{1}{3w}-\frac{w}{3} & \frac{1}{3w} +\frac{2w}{3} & \frac{1}{3w}-\frac{w}{3} & 0 & 0 & 0 \\
\noalign{\medskip}
\frac{1}{3w}-\frac{w}{3} & \frac{1}{3w}-\frac{w}{3} & \frac{1}{3w} +\frac{2w}{3} & 0 & 0 & 0 \\
\noalign{\medskip} 0 & 0 & 0 & \frac{2}{3w} +\frac{w}{3} & -\frac{1}{3w}+\frac{w}{3} & - \frac{1}{3w}+\frac{w}{3} \\
\noalign{\medskip} 0 & 0 & 0 &  -\frac{1}{3w}+\frac{w}{3} & \frac{2}{3w} +\frac{w}{3} & -\frac{1}{3w}+\frac{w}{3} \\
\noalign{\medskip} 0 & 0 & 0 & -\frac{1}{3w}+\frac{w}{3} & -\frac{1}{3w}+\frac{w}{3} & \frac{2}{3w} +\frac{w}{3}
\end{pmatrix}
\]
with
\[
w=\left(\frac{p}{q} \right)^{1/4},
\]
and $\mathbf{D}_0$ is a diagonal matrix which contains the three positive symplectic eigenvalues $\nu_1,\nu_2,\nu_3$ as
\[
\mathbf{D}_0=\mathrm{diag}\, (\nu_1,\nu_2,\nu_3)=\mathrm{diag}\, \left( \sqrt{pq},\sqrt{pq},\sqrt{pq}\right).
\]
In the example we study, all three symplectic eigenvalues are the same and, furthermore, in the special case of $\alpha=0$, they are all equal to one, which means that the Gaussian state is pure.  

The matrix $\mathbf{S}_H$ is symplectic, which means that
\[
\mathbf{S}_H \mathbf \Omega \mathbf{S}_H^\top = \mathbf \Omega,
\]
where $\mathbf{\Omega}$ is the standard symplectic matrix
 \begin{equation}
\mathbf \Omega := \begin{pmatrix}
 0&0&0&1&0&0\\ 0&0&0&0
&1&0\\ 0&0&0&0&0&1\\ -1&0&0&0&0&0
\\ 0&-1&0&0&0&0\\ 0&0&-1&0&0&0
\end{pmatrix}.
 \end{equation}
The symplectic matrix $\mathbf{S}_H$ can be written as
\[
\mathbf{S}_H=\exp \left( \pmb\Omega \mathbf{H}_d\right),
\]
where $\mathbf{H}_d$ is a real, symmetric matrix explicitly given as
\begin{equation}
\mathbf{H}_d=\frac{\log{w}}{3} 
\begin{pmatrix}
 0 & 0 & 0 & 1 & -2 & -2  \\
 0 & 0 & 0 & -2 & 1  & -2  \\
 0 & 0 & 0 & -2 & -2 & 1  \\
 1 & -2 & -2 & 0 & 0 & 0 \\
 -2 & 1 & -2  & 0 & 0 & 0 \\
  -2 & -2 & 1 & 0 & 0 & 0 \\
\end{pmatrix}
\label{eq:hmat}
\end{equation}
Let us define  
\[
\lambda_i=\frac{\nu_i -1}{\nu_i+1}, \quad i=1,2,3.
\]
In our example all the symplectic eigenvalues $\nu_i$ are the same, and bigger than one. 
Consequently, all the $\lambda_i$ are the same and positive
\[
\lambda=\lambda_i=\frac{\sqrt{pq}-1}{\sqrt{pq}+1}, \quad i=1,2,3.
\]
The Gaussian operator for which the covariance matrix is $\pmb \sigma_{GKLC}$, is a density operator.
This density operator can be expressed as \cite{serafini2024quantum}
\begin{equation}
\hat{\rho}_{GKLC}=\hat{S}^\dagger_H \!\!\left(\sum_{m_1,m_2,m_3=0}^\infty |m_1,m_2,m_3 \rangle (1-\lambda)^3 \lambda^{m_1+m_2+m_3}\langle m_1,m_2,m_3 | \right)\!\!\hat{S}_H.
\label{eq:gauss_operator}
\end{equation}

Here the unitary operator $\hat{S}_H$ is defined as
\begin{equation}
\hat{S}_H = \exp \left( \frac{i}{2} \hat{\mathbf{r}}^\top \mathbf{H}_d\hat{\mathbf{r}}\right), \quad \hat{S}_H \hat{\mathbf{r}}\hat{S}^{\dagger}_H =\mathbf{S}_H \hat{\mathbf{r}},
\label{eq:Shat_operator}
\end{equation}
where $\mathbf{H}_d$ is given in \eqref{eq:hmat}. The states appearing in \eqref{eq:gauss_operator} $|m_1,m_2,m_3 \rangle$ are the direct product of the standard harmonic oscillator states with the properties:
\begin{eqnarray}
|m_1,m_2,m_3 \rangle & = & |m_1 \rangle_1 \otimes  |m_2 \rangle _2 \otimes  |m_3 \rangle_3, \\
\hat{a}_j |m \rangle_j &=& \sqrt{m} | m-1 \rangle_j, \quad j=1,2,3, \\
\hat{a}_j^\dagger |m \rangle_j &=& \sqrt{m+1} | m+1 \rangle_j, \quad j=1,2,3,
\end{eqnarray}
It can be checked that the Gaussian density matrix of the GKLC-state in \eqref{eq:GKLC_G_density_matrix} indeed has the form
\[
\rho_G^{(GKLC)}(\mathbf{x},\mathbf{y})=\langle \mathbf{x} | \hat{\rho}_{GKLC} | \mathbf{y}\rangle,
\]
with $\hat{\rho}_{GKLC}$  given in \eqref{eq:gauss_operator}.

We call expression

\begin{equation}
\rho_{PG}(\mathbf x,\mathbf y):=P(\mathbf x,\mathbf y) \rho_G(\mathbf x,\mathbf y),
\end{equation}
polynomial self-adjoint Gaussian kernel, if 
\[
P(\mathbf{x},\mathbf{y})=P(\mathbf{y},\mathbf{x})^*.
\]
Furthermore, if the operator $\hat{\rho}_{PG}$ is positive-semidefinite and properly normalized to  $\mathrm{Tr}\, \hat{\rho}_{PG}=1$ we call it a polynomial Gaussian density operator.

Next, we turn to polynomial Gaussian, where the Gaussian is chosen to be a GKLC Gaussian
and the polynomial is a general, self-adjoint quadratic one:
\begin{eqnarray}
\rho'_{PG}(\mathbf{x},\mathbf{y})&=& \rho_G^{(GKLC)}(\mathbf{x},\mathbf{y})\Bigg[ (\mathbf{x}-\mathbf{y})^\top \mathbf{A}^{(p)}(\mathbf{x}-\mathbf{y})+i(\mathbf{x}-\mathbf{y})^\top \mathbf{B}^{(p)}(\mathbf{x}+\mathbf{y}) \Biggr. \nonumber \\
&&\Biggl. +(\mathbf{x}+\mathbf{y})^\top \mathbf{C}^{(p)}(\mathbf{x}+\mathbf{y})+i\mathbf{D}^{(p)\top} (\mathbf{x}-\mathbf{y})+\mathbf{E}^{(p)\top} (\mathbf{x}+\mathbf{y}) +\mathrm{F}^{(p)}\Biggr].
\label{eq:quadratic_ploynomial_gaussian}
\end{eqnarray}
Here $\mathbf{A}^{(p)}$ and $\mathbf{C}^{(p)}$  are  real symmetric matrices, $\mathbf{B}^{(p)}$ is a real matrix, $\mathbf{E}^{(p)}$ and $\mathbf{D}^{(p)}$ are  real vectors, and $F^{(p)}$  is a real constant. In Ref. \cite{Newton,japcsi} the positivity of the kernel is studied and useful criterions are given to ensure that $\hat{\rho}'_{PG}$ is a density operator. The coefficients  $\mathbf{A}^{(p)}$, $\mathbf{B}^{(p)}$, $\mathbf{C}^{(p)}$, $\mathbf{D}^{(p)}$, $\mathbf{E}^{(p)}$ and $\mathrm{F}^{(p)}$ are called polynomial parameters.

A possible situation for a positive semi-definite kernel is if the polynomial is chosen as $P(\mathbf x,\mathbf y)=p(\mathbf{x})^* p(\mathbf{y})$ (See Ref.~\cite{Newton}).
In our numerics we present results for the polynomial
\begin{eqnarray}
P(\mathbf x,\mathbf y)&:=&\frac{6}{(2p+q)(1+\kappa^2)}\left(x_2+i\kappa x_3\right)\left(y_2-i\kappa y_3\right), \nonumber \\
\rho_{GP}(\mathbf{x},\mathbf{y})&=&P(\mathbf x,\mathbf y) \rho_{G}^{(GKLC)}(\mathbf{x},\mathbf{y})
\label{eq:spec_pol}
\end{eqnarray}
and choosing the Gaussian to $\hat{\rho}_G^{(GKLC)}$.
The prefactor ensures that the trace of $\hat{\rho}_{GP}$ is unity. We show two different choices for Mercer weights $c_k$ and positions $(x_k,y_k,z_k)$ in Fig.~\ref{fig:mercer_wo_reflections}. Both sub-figures show only positive Mercer quantity $M(\kappa)$ (See Eq.~\eqref{eq:Mercer}), and correspondingly  they support that $\rho_{GP}(\mathbf{x},\mathbf{y})$ defined in Eq.~\eqref{eq:spec_pol} is a density matrix. 

\begin{figure}
\includegraphics[angle=270, width=0.45\textwidth]{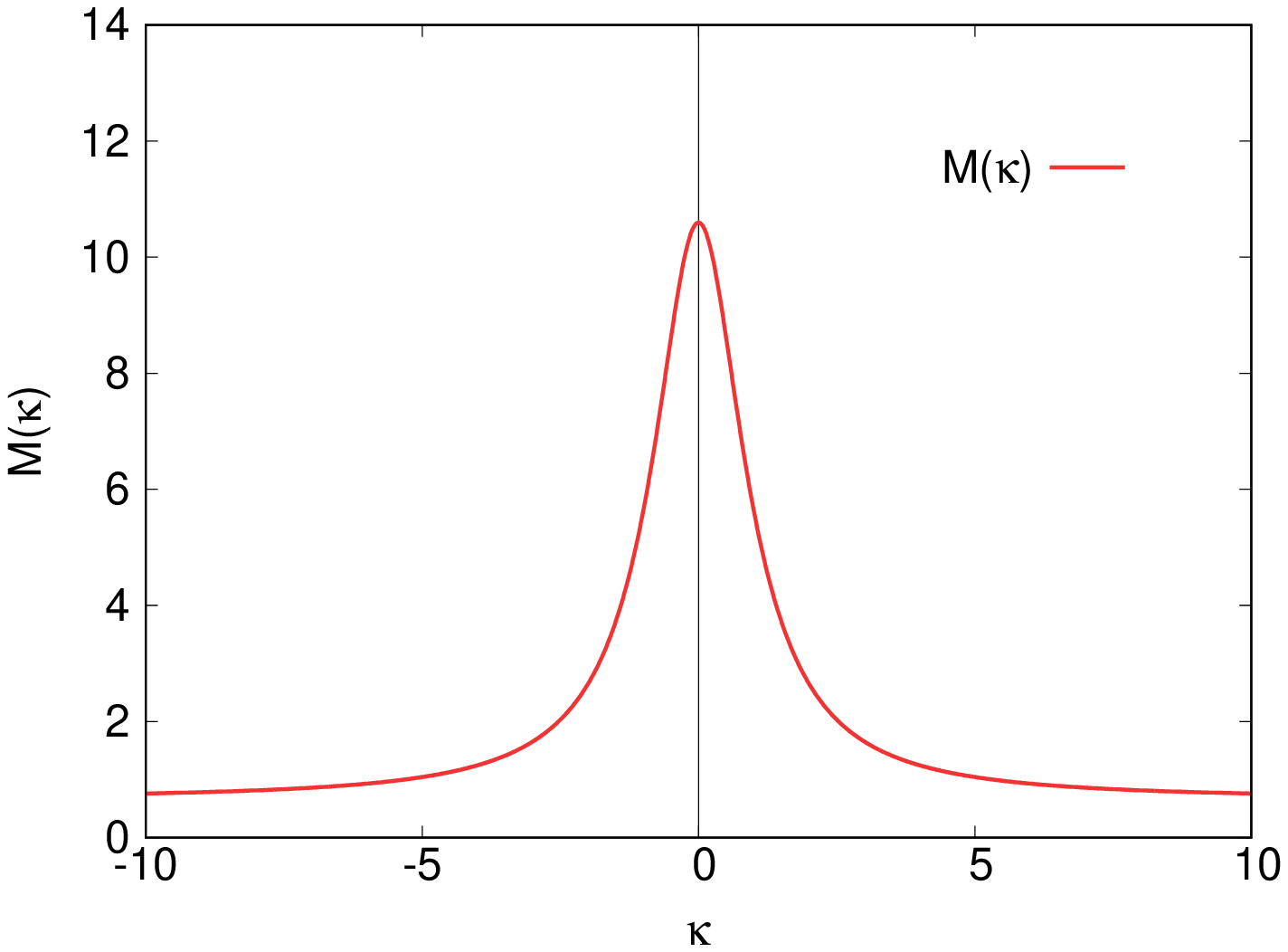} \quad
\includegraphics[angle=270, width=0.45\textwidth]{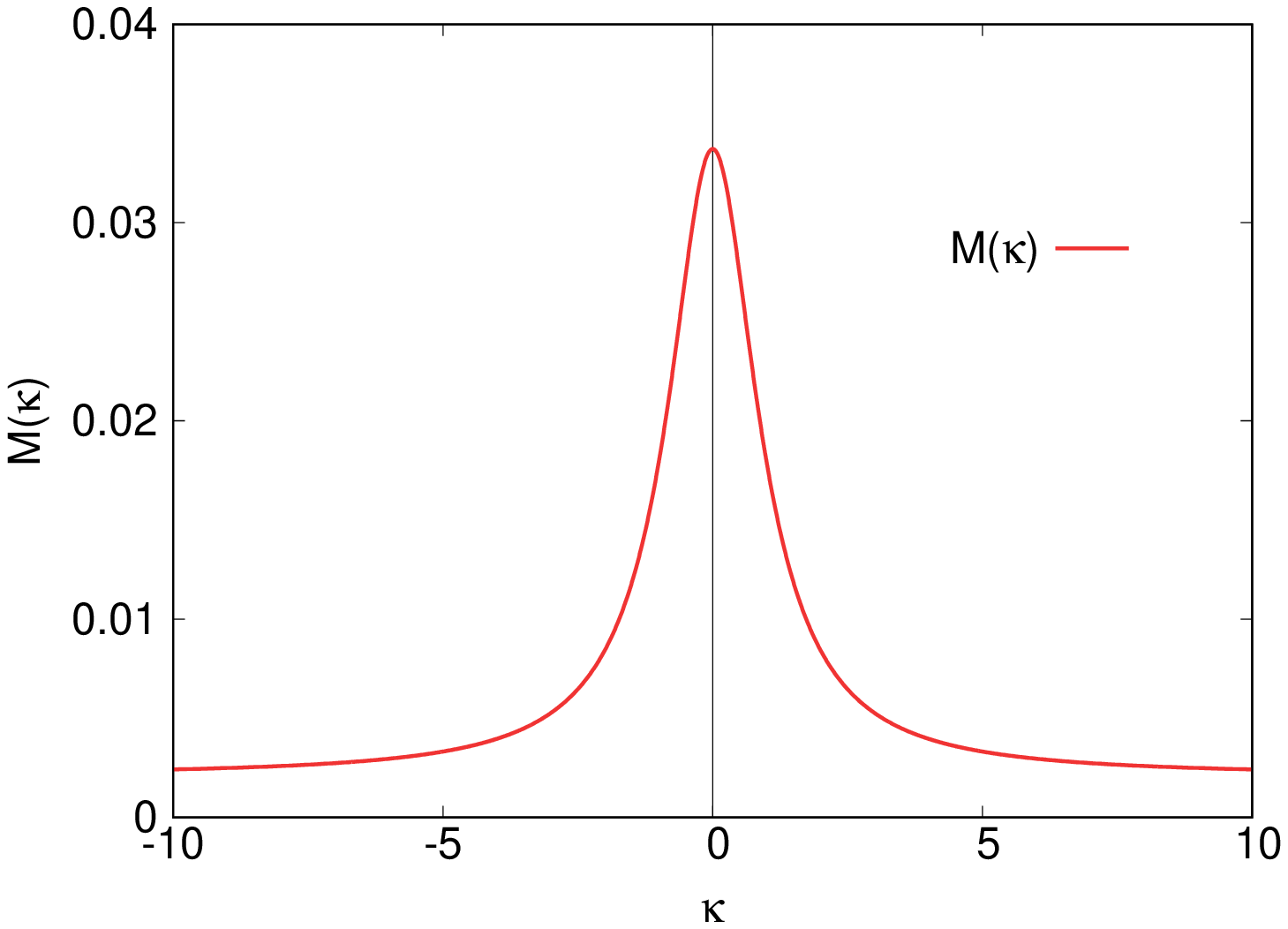}
\caption{The Mercer's  quantity $M$ \eqref{eq:Mercer} as a function of the polynomial parameter $\kappa$ for the polynomial Gaussian density matrix \eqref{eq:spec_pol}. Mercer weights $c_k$ and positions $(x_k,y_k,z_k)$ are:  $c_k= (k-i)^k$, $(x_k,y_k,z_k)=(1/k^k)\cdot(0.1,0.8,-0.2)$  (left panel), and $c_k= (k+i)^k$, $(x_k,y_k,z_k)=(-1/k)^k\cdot(0.1,0.8,-0.2)$ (right panel). $i$ is the imaginary unit and $k=1,\ldots,10$, GKLC parameters are taken as in Eq.\eqref{eq:standard_pars}.}
\label{fig:mercer_wo_reflections}
\end{figure}

We found that the density matrix \eqref{eq:spec_pol} is NPT  entangled for $\kappa \neq 0$ by applying the Peres-Horodecki criterion \eqref{t:PH}. We note that the reflection in $p_3 \to -p_3$ in the Wigner function requires the replacements $(x_3,y_3) \to (y_3,x_3)$ in the density matrix arguments. By choosing the same Mercer parameters as in Fig.~\ref{fig:mercer_wo_reflections} after partial reflection in the third coordinates the Mercer parameter $M(\kappa)$ is negative either for positive or negative $\kappa$ values as depicted in Fig.~\ref{fig:measuring_with_mercer_3}.

\begin{figure}
\includegraphics[angle=270, width=0.9\textwidth]{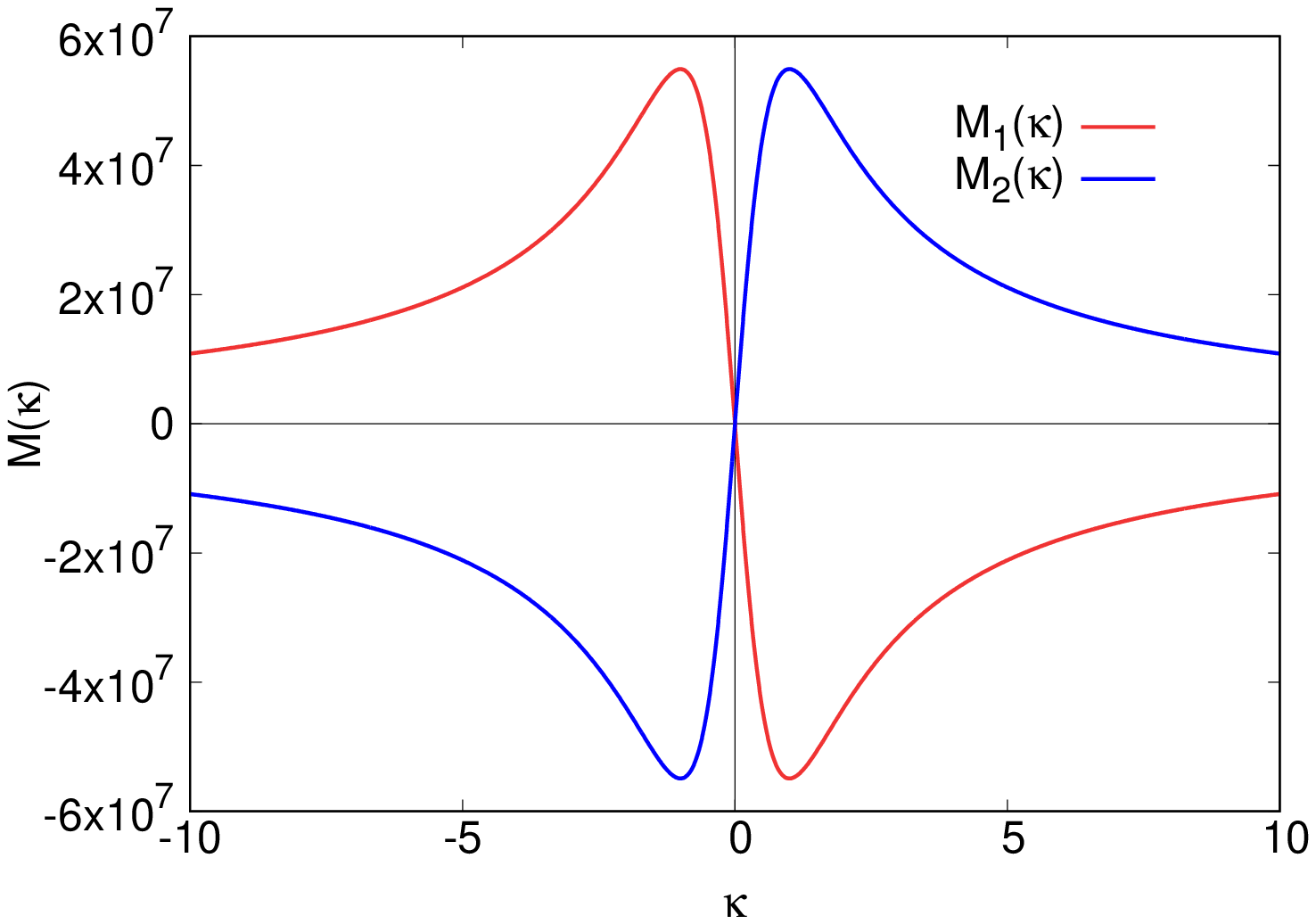}
\caption{The Mercer's quantity $M$ after partial reflection in the third coordinates as a function of the polynomial parameter $\kappa$  for the polynomial Gaussian density matrix \eqref{eq:spec_pol}.
Mercer weights and positions are: $c_k= (k-i)^k$, $(x_k,y_k,z_k)=(1/k^k)(0.1,0.8,-0.2)$  (red line) and  $c_k= (k+i)^k$, $(x_k,y_k,z_k)=(-1/k)^k (0.1,0.8,-0.2)$  (blue line). $i$ is the imaginary unit and $k=1,\ldots,10$, GKLC parameters are taken as in Eq.~\eqref{eq:standard_pars}.}
\label{fig:measuring_with_mercer_3}
\end{figure}

After a straightforward calculation one can find the corresponding operator $\hat{\rho}_{PG}$ to any polynomial Gaussian density matrix $\rho_{PG}(\mathbf{x},\mathbf{y})$:
\[
\rho_{PG}(\mathbf{x},\mathbf{y})=\langle \mathbf{x}|\hat{\rho}_{PG}| \mathbf{y}\rangle
\]
In our examples, we use two types of creation and annihilation operators. One pair of them ($\hat{\tilde{\mathbf{a}}}^\dagger$,$\hat{\tilde{\mathbf{a}}}$) are defined by
\begin{equation}
\label{eq:tilde_creation_annihilation_def}  
\hat{\tilde{a}}_j=
          \hat{\mathbf{S}}_H^{\dagger} \hat{a}_j \hat{\mathbf{S}}_H  
          , \qquad \hat{\tilde{a}}^{\dagger}_j= 
          \hat{\mathbf{S}}_H^\dagger \hat{a}^\dagger_j \hat{\mathbf{S}} _H,
\end{equation}       
in connection with the usual ones ($\hat{\mathbf{a}}^\dagger$,$\hat{\mathbf{a}}$):
\label{eq:gen_creation_annihilation}
\begin{equation}
		\hat{a}_j=\frac{1}{\sqrt{2}}\left(\hat{x}_j+i  \hat{p}_j\right), \qquad
        \hat{a}_j^\dagger=\frac{1}{\sqrt{2}}\left(\hat{x}_j-i  \hat{p}_j\right).
\end{equation}

The polynomial Gaussian kernel form \eqref{eq:op_symbolically_1} of $\hat{\rho}_{PG}$
can be rewritten as
\begin{equation}
		\hat{\rho}_{PG}=\sum_{j=1}^{K}\frac{\tilde{P}^\dagger_j(\hat{\tilde{\mathbf{a}}}, \hat{\tilde{\mathbf{a}}}^\dagger)\hat{\rho}_G \tilde{P}_{K-j}(\hat{\tilde{\mathbf{a}}}, \hat{\tilde{\mathbf{a}}}^\dagger)}{\Tr\Bigl(\sum_{j=1}^{K}{\tilde{P}^\dagger_j(\hat{\tilde{\mathbf{a}}}, \hat{\tilde{\mathbf{a}}}^\dagger)\hat{\rho}_G \tilde{P}_{K-j}(\hat{\tilde{\mathbf{a}}}, \hat{\tilde{\mathbf{a}}}^\dagger)\Bigr)}}, \,\,\,\hat{\tilde{\mathbf{a}}}=\left( \hat{\tilde{a}}_1, \ldots, \hat{\tilde{a}}_N \right)^\top,
        \label{eq:op_symbolically}
\end{equation}
with appropriate new polynomials $\tilde{P}_j(\hat{\tilde{\mathbf{a}}}, \hat{\tilde{\mathbf{a}}}^\dagger)$.

For quadratic polynomial Gaussian \eqref{eq:quadratic_ploynomial_gaussian} the symbolic form \eqref{eq:op_symbolically} can be specified to: 
\begin{eqnarray}
\hat{\rho}'&=&c^{(0)} \hat{\rho}_G+\sum_{j=1}^N \left(\mathbf{c}^{(1)}_{j} \hat{\tilde{a}}^{\dagger}_j\hat{\rho}_G+\mathbf{c}^{(1)*}_{j} \hat{\rho}_G \hat{\tilde{a}}_j  \right) \nonumber \\
&& + \sum_{j,k=1}^N  \left(\mathbf{c}^{(2)}_{j,k} \hat{\tilde{a}}^\dagger_j \hat{\tilde{a}}^\dagger_k\hat{\rho}_G + \mathbf{c}^{(2)*}_{j,k}  \hat{\rho}_G\ \hat{\tilde{a}}_k \hat{\tilde{a}}_j \right)+  \sum_{k,j=1}^N   \mathbf{c}^{(3)}_{j,k} \hat{\tilde{a}}^\dagger_j \hat{\rho}_G\ \hat{\tilde{a}}_k ,
\label{eq:example_Fock_repr}
\end{eqnarray}
where $\mathbf{c}^{(1)}$ is a complex-valued vector,  $\mathbf{c}^{(2)}$ is a symmetric complex-valued matrix, $\mathbf{c}^{(3)}$ is a Hermitian matrix, $c_0$ is a real constant.

Operator parameters  $c_0$, $\mathbf{c}^{(1)}$, $\mathbf{c}^{(2)}$, $\mathbf{c}^{(3)}$ are in one to one correspondance with polynomial parameters $\mathbf{A}^{(p)}$, $\mathbf{B}^{(p)}$, $\mathbf{C}^{(p)}$, $\mathbf{D}^{(p)}$, $\mathbf{E}^{(p)}$ and $\mathrm{F}^{(p)}$.
For GKLC Gaussian we enumerate these relations in Appendix  \ref{sec:appandix_A}.

An alternative way to study entanglement is the direct diagonalization of the operator connected to the partial transpose of the polynomial Gaussian \eqref{eq:spec_pol}. 
Matrix elements can be directly written in terms of the operator parameters, occuring in 
\eqref{eq:example_Fock_repr}.

Recall that the \emph{negativity} of a partially transposed density  operator $\widehat{\tilde{\rho}}$ is the sum of positive eigenvalues of $-\widehat{\tilde{\rho}}$. By this method one can directly obtain the negativity of $\widehat{\tilde{\rho}}$, see Fig.~\ref{fig:negativity}.
\begin{figure}
\includegraphics[angle=270, width=0.9\textwidth]{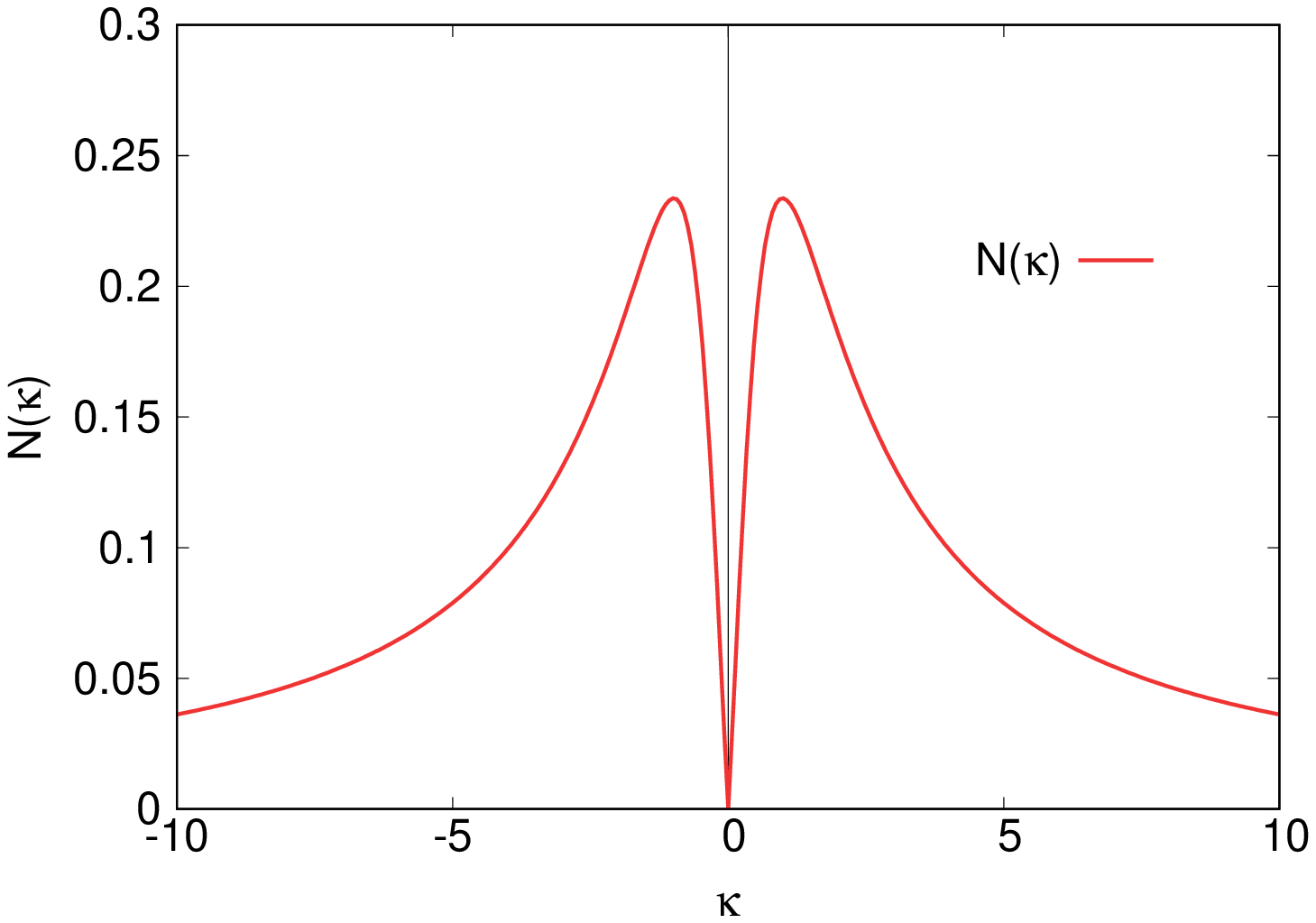}
\caption{Negativity $N$ as a function of the polynomial parameter $\kappa$  for the polynomial Gaussian density matrix \eqref{eq:spec_pol}. Partial reflection was in the third coordinates. GKLC parameters are taken as in Eq.~\eqref{eq:standard_pars}.}
\label{fig:negativity}
\end{figure}

\subsection{Single photon adding transformation}

In this section, the effect of adding a single photon to one mode of a multimode state is studied. It is show that the usual photon-adding transformation, defined by the local creation and annihilation operators, preserves both separability and PPT. Thus, a local single-photon excitation cannot turn a separable state into an entangled PPT state or an NPT state, nor can it turn a PPT state into an NPT state. However, this conclusion does not generally hold for the non-local operators $\hat{\tilde a}_j$, which can act on several modes simultaneously.

The single photon adding transformation $\hat{\rho} \rightarrow \hat{\rho}'$ is abundant in physics,. 
In Kraus representation \cite{Choi} it can be written in the following way:

\begin{equation}
\hat{\rho}'=\frac{\hat{a}_j^\dagger \hat{\rho} \hat{a}_j }{\Tr \left(\hat{a}_j^\dagger \hat{\rho} \hat{a}_j\right) }, 
\label{eq:single_photon_exc}
\end{equation}
which is equivalent to 
\begin{equation}
\rho'(\mathbf{x},\mathbf{y})=\frac{N_p}{2}\left(x_j- \frac{\partial}{\partial x_j}\right) \left(y_j- \frac{\partial}{\partial y_j}\right) \rho(\mathbf{x},\mathbf{y}),
\end{equation}
in position representation, where
\begin{equation}
N_p=\frac{1}{\Tr \left(\hat{a}_j^\dagger \hat{\rho} \hat{a}_j\right) },
\end{equation}
ensures normalization.
If the density operator $\hat{\rho}$ is PPT, then after the partial transpose $x_j \leftrightarrow y_j$ the new operator $\hat{\widetilde{\rho}}$ is also a density operator and the partial transpose of the single photon excited state in position representation is

\begin{equation}
\widetilde{\rho'}(\mathbf{x},\mathbf{y})=\frac{N_p}{2}\left(y_j- \frac{\partial}{\partial y_j}\right) \left(x_j- \frac{\partial}{\partial x_j}\right) \widetilde{\rho}(\mathbf{x},\mathbf{y}).
\end{equation}
 We can see  that  $\hat{\widetilde{\rho}'}$ is also a density operator, therefore this transformation preserves the property of PPT.

If the density operator $\hat{\rho}$ is separable according to a given partition,  then it can be obtained as a convex combination of tensor product density operators, in position representation as
\begin{equation}
\rho(\mathbf{x},\mathbf{y})= \sum_{k=1}^K p_k \prod_{l=1}^L \rho^{(k)}(\mathbf{x}_l,\mathbf{y}_l)  \nonumber
\end{equation}
where $p_k\geq 0$ for all $k \in \mathbb{N}$ and $\sum_k p_k=1$. 
After a single photon excitation it turns into the following in position representation:
\begin{equation}
\rho'(\mathbf{x},\mathbf{y})=\frac{N_p}{2}\left(x_j-\frac{\partial}{\partial x_j}\right) \left(y_j- \frac{\partial}{\partial y_j}\right) \rho(\mathbf{x},\mathbf{y}). \nonumber
\end{equation}

Let us suppose that coordinates $x_j,y_j$ occur for the index $l=m$.
After adding a single photon the excited state is 
\begin{align}
\rho'(\mathbf{x},\mathbf{y})= \frac{N_p}{2}\sum_{k=1}^K p_k  \left(x_j- \frac{\partial}{\partial x_j}\right) \left(y_j- \frac{\partial}{\partial y_j}\right) \prod_{l=1}^L \rho^{(k)}(\mathbf{x}_l,\mathbf{y}_l) \nonumber \\
=\frac{N_p}{2}\sum_{k=1}^K p_k  \left(x_j- \frac{\partial}{\partial x_j}\right) \left(y_j-  \frac{\partial}{\partial y_j}\right) \rho^{(k)}(\mathbf{x}_m,\mathbf{y}_m) \prod_{\substack{l=1 \\ l \neq m}}^L \rho^{(k)}(\mathbf{x}_l,\mathbf{y}_l) \nonumber \\
= \sum_{k=1}^K p_k  \rho'^{(k)}(\mathbf{x}_m,\mathbf{y}_m) \prod_{\substack{l=1 \\ l \neq m}}^L \rho^{(k)}(\mathbf{x}_l,\mathbf{y}_l),
\end{align}
 and  we exploited the linear property of operators acting on the density operator and properties of Kraus representation \cite{Choi}. Consequently, this single photon excitation preserves the property of separability according to a given partition.

It is very important to note that the above statements are generally not valid for the creation and annihilation operators defined in  \eqref{eq:gen_creation_annihilation}. 
In the state
\begin{equation}
\hat{\rho}''=\frac{\hat{\tilde{a}}_j^\dagger \hat{\rho} \hat{\tilde{a}}_j }{\Tr \left(\hat{\tilde{a}}_j^\dagger \hat{\rho} \hat{\tilde{a}}_j\right) }.
\end{equation}
the operators $\hat{\tilde{a}}_j$ will affect several modes simultaneously,
and
consequently, the $\hat{\rho} \rightarrow \hat{\rho}''$ operation no longer necessarily preserves separability and the PPT property.

We have thus proven that the \eqref{eq:single_photon_exc} single-photon entanglement transformation preserves the PPT property and the  separability of an arbitrary density operator. 
In other words, applying local excitations, $ \text{SEP} \rightarrow \text{PPTE} \rightarrow \text{NPT} $ and $\text{PPT} \rightarrow \text{NPT}$ transitions are impossible. Here, PPTE stands for a positive-partial-transpose entangled state, and SEP stands for a separable state.

\section{ Summary, discussion and conclusions}

In this work, we have investigated several physical, and entanglement properties of continuous-variable (CV) multipartite quantum systems.  We provide a unified perspective linking mathematics, theoretical physics, and experimental physics through the preparation of an arbitrary density operator by polynomial excitations and their unitary time evolution.

In Theorem~\ref{c:homa_tetel} we have also considered the experimentally highly relevant case where a pure state density matrix \eqref{eq:pure_density_matrix} is multiplied by a self-adjoint polynomial of the positions \eqref{eq:pure_dens_mat_with_polynom}. For this case, we have shown that if two subsystems of a multipartite  pure quantum system are initially separable, then after multiplication with polynomials, the detection of NPT  entanglement reduces to determining whether the so-called bilinear representation matrix is positive semi-definite. 
Finally, in Example~\ref{ex:wrong_with_mixed} we presented an exact analytical example showing that a polynomial with a non-positive-semidefinite bilinear representation matrix can still produce a positive-semidefinite kernel when multiplied by a kernel of a mixed Gaussian density operator.

Independently of any dynamics, in Claim \ref{c:mainfact1} we have identified a physical invariance rule during polynomial excitations stating that, if two Gaussian subsystems are initially NPT  entangled, then polynomial excitations preserve their NPT  entanglement between these subsystems. In Subsection~\ref{sub:GKLC_excitations} we have demonstrated through a non-trivial example that polynomial excitations can generate NPT  entanglement between two subsystems that are initially separable. As a consequence, we have shown in Fact~\ref{fact:mainfact2} that the number of NPT  entangled pairs of subsystems cannot decrease under polynomial excitations.

Starting from the Fock-representation of Gaussian kernels we have studied excitations of it by applying creation operators from the left and annihilation operators from the right of the Gaussian operator. We have shown that these types of excitations will lead to polynomial Gaussian forms both in position and phase space representations.  Explicit transformation rules are worked out in the Appendix~\ref{sec:appandix_A} for the quadratic excitations of the GKLC Gaussian state. 
Furthermore, we have derived a compact, dynamics-independent expression \eqref{eq:sigma_for_polygauss} for the covariance matrix of quadratic-polynomial Gaussian operators in terms of the underlying Gaussian covariance matrix and the polynomial coefficients.

Our analysis confirms that local single-photon addition preserves both the PPT property and separability of arbitrary multimode quantum states. This shows that local and generalized nonlocal creation operators affect entanglement in fundamentally different ways.

Within a modern formalism, we have solved the von Neumann equation for the time evolution of a system starting from   arbitrary initial conditions under the influence of a time-independent quadratic Hamiltonian. The solution \eqref{eq:sol_for_W} is expressed in a form that is particularly convenient for practical applications. Through a non-trivial Gaussian example  (See Fig.~\ref{fig:gaussian_entanglement_later}), we have demonstrated that unitary time evolution of the GKLC-state one can generate periods of long-lived entanglement and short-lived disentanglement for initially separable subsystems. 

Within this formalism, we have also determined the time evolution of the expectation values \eqref{eq:time_evol_d(t)} of the position and momentum operators, 
the time-dependence of the covariance matrix \eqref{eq:time_evol_Sigma(t)}.
We showed that the expectation value of the energy is constant in time, as it should be, and we found that the symplectic eigenvalues of the time-dependent covariance matrix are also time-independent.

We have connected these results with possible experimental setups and technological applications.

\subsection*{Acknowledgments} We are indebted to Aur\'el G\'abris, \'Akos Gombk\"ot\H{o}, G\'abor Helesfai, Zolt\'an Kaufmann,
 M\'aty\'as Koniorczyk and G\'eza T\'oth for some illuminating conversations. 

\subsection*{Funding} RB was supported by the National Research, Development and Innovation Office -- NKFIH, grant no.~146922, and by the J\'anos Bolyai Research Scholarship of the Hungarian Academy of Sciences.

GH and AB were supported by the 2021 Thematic Excellence Programme of the NKFIH (Project No.~TKP2021-NVA-04).

\appendix
\section{Photon addition and subtraction}
\label{Appendix_photon_add&subtract}
Recently much effort has gone into developing a scheme that exactly performs photon addition and subtraction \cite{Luo2025,Lund2024}. Often the state created in an experiment can be expressed as a finite number
of creation and annihilation operators applied to a Gaussian state.
In these cases - although the system is in an infinite-dimensional Hilbert space -
the tools developed in this article may help determine the presence of quantum entanglement
between modes. The results in section \ref{sub:GKLC_excitations}
(cf. eq \eqref{eq:quadratic_ploynomial_gaussian} -\eqref{eq:example_Fock_repr})
and Appendix~\ref{sec:appandix_A} help
express a state with photon additions and subtractions as
a Gaussian times a polynomial in position representation variables $\mathbf x$ and $\mathbf y$,
then with the help of theorems in Section  \ref{sec:theorems_for_entanglement} it can be determined if the state is NPT
entangled or not, so an experimenter can adjust parameters
when designing a new setup.

For photon substraction a part of an incoming beam (i-th mode) is directed to
the photon number resolving detector. If a single photon is detected in this arm,
then the photon substraction operator was applied on the outgoing beam:
\begin{equation}
\hat\rho_{\mathrm{out}} = 
  \frac{\hat a_i\hat\rho_{\mathrm{in}}\hat a_i^\dagger}
  {\Tr\hat a_i\hat\rho_{\mathrm{in}}\hat a_i^\dagger }.
\end{equation}
If a photon number resolving detector is not available, then only a very small
fraction of the beam must be directed towards the photodetector, so the probability
of detecting more than one photon be negligible.

For photon addition, a more complicated setup is needed. The incoming beam must be sent
to a nondegenerate optical parametric amplifier at the input port for the signal
(and vacuum is entered at the idler port). If a single photon is detected at
the idler (output) mode that means that a photon is added to the signal
mode as well \cite{lvovsky2020productionapplicationsnongaussianquantum}. This implements the photon addition operator:
\begin{equation}
\hat\rho_{\mathrm{out}} = 
  \frac{\hat a_i^\dagger\hat\rho_{\mathrm{in}}\hat a_i}
  {\Tr\hat a_i^\dagger\hat\rho_{\mathrm{in}}\hat a_i}.
\end{equation}
With photon number resolving detectors, multiple photons can be added or subtracted at once
(See e.g. \cite{Jiri2024quantoptExperimentalPreparationmultiphoton}).

Photon addition/subtraction with the combination of other optical elements implementing any symplectic transformation can realize these transformations
(with $\hat{\tilde{a}}$ instead of $\hat a$, see eq. \eqref{eq:tilde_creation_annihilation_def}):
\begin{equation}
\hat\rho^{\mathrm{add}}_{\mathrm{out}} = 
  \frac{\hat{\tilde{a}}_i^\dagger\hat\rho_{\mathrm{in}}\hat{\tilde{a}}_i}
  {\Tr\hat{\tilde{a}}_i^\dagger\hat\rho_{\mathrm{in}}\hat{\tilde{a}}_i},
\quad
\hat\rho^{\mathrm{sub}}_{\mathrm{out}} = 
  \frac{\hat{\tilde{a}}_i\hat\rho_{\mathrm{in}}\hat{\tilde{a}}_i^\dagger}
  {\Tr\hat{\tilde{a}}_i\hat\rho_{\mathrm{in}}\hat{\tilde{a}}_i^\dagger}.
\end{equation}

Repeated application of photon addition/subtraction interspersed
with Gaussian optical elements can realize in principle any physical transformation
in the form of \eqref{eq:op_symbolically_1} (\eqref{eq:op_symbolically_1} also
includes transformations with a non positive semidefinite matrix which are unphysical.). In experiments the requirement of renormalizing
the resulting density operator manifests itself as measurements postselected on the condition
of successful detection of the photons heralding the addition/substraction
operations.

\section{Time evolution of the Wigner function under quadratic Hamiltonians}
\label{Appendix_time_evolution}

\subsection{The Hamiltonian operator}

The general quadratic Hamiltonian for an $N$-mode quantum system can be written as

\begin{equation}
\hat H = \frac{1}{2}
\hat{\mathbf{r}}^\top \mathbf{H} \hat{\mathbf{r}},
\label{eq:quadratic_hamiltonian}
\end{equation}

where $\mathbf{H}$ is a real symmetric positive definite matrix.

\begin{equation}
\mathbf{H}= \mathbf{H}^\top =\begin{pmatrix} \mathbf{H}_{\mathbf{xx}} & \mathbf{H}_{\mathbf{xp}} \\ \mathbf{H}_{\mathbf{xp}}^\top & \mathbf{H}_{\mathbf{pp}} \end{pmatrix} \,\, \text{and} \,\, \hat{\mathbf{r}}=\begin{pmatrix} \hat{\mathbf x} \\ \hat{\mathbf p} \end{pmatrix}.
\label{eq:quadratic_hamiltonian_matrix}
\end{equation}
Here
\begin{equation}
\hat{\mathbf x} = (\hat x_1, \dots, \hat x_N)^\top, \qquad
\hat{\mathbf p} = (\hat p_1, \dots, \hat p_N)^\top,
\end{equation}
with canonical commutation relations
\begin{equation}
[\hat x_i, \hat p_j] = i \hbar \delta_{ij},
\end{equation}
where $[,]$ stands for commutator. 
The blocks $\mathbf{H}_{\mathbf{xx}}$, $\mathbf{H}_{\mathbf{pp}}$ and $\mathbf{H}_{\mathbf{xp}}$ are $N \times N$ matrices.

\subsection{Time evolution of an arbitrary density operator}
The time evolution of the quantum density operator $\hat \rho(t)$ is governed by the von Neumann equation:
\begin{equation}
\frac{d\hat \rho}{dt} = -\frac{i}{\hbar} [\hat H, \hat \rho].
\label{eq:dynamics_of_rho_op}
\end{equation}
With time-independent Hamiltonian  such as \eqref{eq:quadratic_hamiltonian} $\hat{\rho}$ evolves unitarily in time
\[
\hat{\rho}(t) =e^{-i \hat{H}t/\hbar}\hat{\rho}(0)e^{i \hat{H}t/\hbar},
\]
ensuring the property that all the eigenvalues of $\hat{\rho}(t)$ are time-independent.

The associated kernel $K(\mathbf{x},\mathbf{y})=\langle \mathbf{x} | \hat{\rho} | \mathbf{y} \rangle $ is the density matrix.  
The Wigner function \cite[Eq.~5]{japcsi} is
\begin{equation}
W(\mathbf{x},\mathbf{p},t) = \frac{1}{(2\pi \hbar)^N} 
\int_{\mathbb{R}^N} d\mathbf{y}\, e^{-i \mathbf{p}^\top \mathbf{y}/\hbar} 
\left\langle \mathbf{x} + \frac{\mathbf{y}}{2} \Big| \hat \rho(t) \Big| \mathbf{x} - \frac{\mathbf{y}}{2} \right\rangle.
\end{equation}

Taking the time derivative gives
\begin{align}
\frac{\partial W}{\partial t} 
&= \frac{1}{(2\pi \hbar)^N} \int_{\mathbb{R}^N} d\mathbf{y}\, e^{-i \mathbf{p}^\top \mathbf{y}/\hbar} 
\left\langle \mathbf{x} + \frac{\mathbf{y}}{2} \Big| \frac{d \hat \rho}{dt} \Big| \mathbf{x} - \frac{\mathbf{y}}{2} \right\rangle \nonumber \\
&= -\frac{i}{\hbar} \frac{1}{(2\pi \hbar)^N} \int_{\mathbb{R}^N} d\mathbf{y}\, e^{-i \mathbf{p}^\top \mathbf{y}/\hbar} 
\left\langle \mathbf{x} + \frac{\mathbf{y}}{2} \Big| [\hat H, \hat \rho] \Big| \mathbf{x} - \frac{\mathbf{y}}{2} \right\rangle.
\end{align}

Using properties of the commutator and Fourier transforms, this can be written in terms of the Moyal bracket:
\begin{equation}
\frac{\partial W}{\partial t} = \{ H, W \}_\star
= \frac{2}{\hbar} H(\mathbf{x},\mathbf{p}) \,
\sin \Bigg[ \frac{\hbar}{2} 
\big( \overleftarrow{\nabla}_\mathbf{x}^\top \overrightarrow{\nabla}_\mathbf{p} 
- \overleftarrow{\nabla}_\mathbf{p}^\top \overrightarrow{\nabla}_\mathbf{x} \big) 
\Bigg] W(\mathbf{x},\mathbf{p},t).
\end{equation}

Here, $\{H,W\}_\star$ is the Moyal bracket, the phase-space analogue of the quantum Poisson bracket \cite{PAGANI2023169361}. $H(\mathbf{x},\mathbf{p})$ is the classical analogue of the Hamiltonian $\hat H$.

\subsection{Time evolution of an arbitrary Wigner function in case of quadratic Hamiltonian operator}

For the quadratic Hamiltonian \eqref{eq:quadratic_hamiltonian}
\begin{equation}
H=H(\mathbf{x},\mathbf{p}) \equiv H(\mathbf r)
=
\frac{1}{2}\,
\mathbf r^{\top}\mathbf H\,\mathbf r, \quad \text{and} \,\,\,\mathbf r
=
\begin{pmatrix}
\mathbf x \\
\mathbf p
\end{pmatrix}.
\end{equation}
By expanding the sine function as a power series in $\hbar$ it is clear that only the $O(\hbar)$ term remains, $\overleftarrow{\nabla}_\mathbf{x}^\top$ or $\overleftarrow{\nabla}_\mathbf{p}^\top$ cannot act more than twice onto a quadratic $H(\mathbf{x},\mathbf{p})$
:
\begin{equation}
    \frac{\partial W}{\partial t}=\left(\nabla_\mathbf{x} H, \nabla_\mathbf{p} H \right) \, \pmb{\Omega} \,
    \begin{pmatrix} \nabla_\mathbf{x} W \\ \nabla_\mathbf{p} W \end{pmatrix}\equiv \left( \nabla_\mathbf{r} H \right)^\top \pmb{\Omega}  \left(\nabla_\mathbf{r} W \right) ,
\end{equation}
Here
\begin{equation}
\pmb{\Omega}=
\begin{pmatrix}
0 & \mathbbm{1}_N \\
-\mathbbm{1}_N & 0
\end{pmatrix},
\end{equation}
where $\mathbbm{1}_N$ is the $N \times N$ identity matrix. As $\left( \nabla_\mathbf{r} H \right)^\top =\mathbf{r}^\top \mathbf{H} $
\begin{equation}
\frac{\partial W}{\partial t}-\mathbf{r}^\top \mathbf{H}\, \pmb{\Omega}  \left(\nabla_\mathbf{r} W \right)=0.
\label{eq:PDE_for_W}
\end{equation}
Let us suppose that the Wigner-function at time zero is given by a prescribed function
$W_0(\mathbf{r})$, thus the initial condition of \eqref{eq:PDE_for_W} is
\begin{equation}
W(\mathbf{r},t=0)=W_0(\mathbf{r}).
\label{eq:init_cond_for_W}
\end{equation}

The first-order linear PDE \eqref{eq:PDE_for_W} can be solved using the method of characteristics. The left-hand side is a total derivative if one identifies the following equation for the characteristics:
\[
\frac{d \mathbf{r}}{ d t}= \left( -\mathbf{r}^\top \mathbf{H} \pmb{\Omega} \right)^\top=\pmb{\Omega} \mathbf{H} \mathbf{r}.
\]
Its solutions is 
\begin{equation}
\mathbf{r}(s)=\exp\left( s\, \pmb{\Omega} \mathbf{H}\right) \mathbf{r}(0)
\end{equation}
As the total derivative of $W$ with respect to $t$ is zero, $W$ is constant along the characteristics. The next step is to choose such initial conditions where the characteristics $\mathbf{r}(s)$ ends on the phase space point $\mathbf{r}$ at time $t$. Thus, $\mathbf{r}(0)$ should be chosen as $\mathbf{r}(0)=\exp\left( -t \,\pmb{\Omega} \mathbf{H}\right) \mathbf{r}$.

To summarize: the solution of \eqref{eq:PDE_for_W} is
\begin{equation}
W(\mathbf{r},t)=W_0\biggl( \exp\bigl( -t \,\pmb{\Omega} \mathbf{H}\bigr) \mathbf{r}\biggr).
\label{eq:sol_for_W}
\end{equation}
One can verify by direct calculation that \eqref{eq:sol_for_W} satisfies \eqref{eq:PDE_for_W} with initial condition \eqref{eq:init_cond_for_W}. Taking a real symmetric $\mathbf{H}$, $2N \times 2N$ matrix the matrix-exponential in  \eqref{eq:sol_for_W} is a symplectic $2N \times 2N$ real matrix:
\begin{equation}
\mathbf{S}(t)=\exp\bigl( -t \,\pmb{\Omega} \mathbf{H}\bigr), \qquad \mathbf{S}^\top(t) \, \pmb{\Omega} \, \mathbf{S}(t)= \pmb{\Omega}.
\label{eq:simplectic_S(t)}
\end{equation}

Example \ref{unitary_dyn} explicitly shows the unitary time evolution of a multipartite Gaussian together with the entanglements of the subsystems as a function of time. This  example for Eq.~\eqref{eq:sol_for_W} and Claim \ref{c:mainfact1}, 
confirms that the Gaussian part of the polynomial Gaussian Wigner function 
largely determines the entanglement dynamics.

\subsection{The expectation value $\mathbf{d}(t)$ and the covariance matrix $\pmb{\sigma}(t)$}
The time-dependent expectation value of the operator $\hat{\mathbf{r}}$ is defined as
\[
\mathbf{d}(t)=\langle \hat{\mathbf{r}} \rangle_{\hat{\rho}(t)}= \mathrm{Tr}\, \left( \hat{\mathbf{r}} \hat{\rho}(t)\right),
\]
and can be expressed in phase-space as
\begin{equation}
\mathbf{d}(t)=\int_{\mathbb{R}^{2N}} d\mathbf{r} \, \mathbf{r} \,W(\mathbf{r},t). 
\label{eq:def_average}
\end{equation}
We have determined the time evolution of $W(\mathbf{r},t)$ in \eqref{eq:sol_for_W} and showed that the matrix in Eq.~\eqref{eq:simplectic_S(t)} is symplectic with determinant $\det{\mathbf{S}(t)}=1$, thus, by performing a changement of integration variables to $\mathbf{z}=\mathbf{S}(t)\mathbf{r}$ we get the time evolution of $\mathbf{d}(t)$ for the quadratic Hamiltonian as
\begin{equation}
\mathbf{d}(t)=\int_{\mathbb{R}^{2N}} d\mathbf{r} \, \mathbf{r} W_0\Bigl(\mathbf{S}(t)\mathbf{r}\Bigr)=\mathbf{S}^{-1}(t)\int_{\mathbb{R}^{2N}} d\mathbf{z} \, \mathbf{z} W_0(\mathbf{z})=\mathbf{S}^{-1}(t)\mathbf{d}(0),
\label{eq:time_evol_d(t)}
\end{equation}
with $\mathbf{S}^{-1}(t)=\exp\bigl( t \,\pmb{\Omega} \mathbf{H}\bigr)$.

The time-dependent covariance-matrix is given by
\begin{equation}
\pmb{\sigma}_{i,j}(t)=\langle \left\{\hat{\mathbf{r}}_i-\mathbf{d}_i(t),\hat{\mathbf{r}}_j-\mathbf{d}_j(t)\right\} \rangle_{\hat{\rho}(t)}= \mathrm{Tr}\, \left(  \left\{\hat{\mathbf{r}}_i-\mathbf{d}_i(t),\hat{\mathbf{r}}_j-\mathbf{d}_j(t)\right\}\hat{\rho}(t)\right),
\label{eq:def_of_cov_mat}
\end{equation}
which can be converted to phase-space expectations as
\begin{equation}
\frac12 \pmb{\sigma}_{i,j}(t)=\int_{\mathbb{R}^{2N}} d\mathbf{r} \,\Bigl(\mathbf{r}_i-\mathbf{d}_i(t)\Bigr)\Bigl(\mathbf{r}_j-\mathbf{d}_j(t)\Bigr) \, W(\mathbf{r},t).  
\label{eq:def_cov_mat}
\end{equation}
Performing the same trick as for $\mathbf{d}(t)$ and applying  \eqref{eq:time_evol_d(t)} the time evolution of $\pmb{\sigma}(t)$ with quadratic $\hat{H}$ is obtained as
\begin{eqnarray}
\frac12 \pmb{\sigma}_{i,j}(t)&=&\int_{\mathbb{R}^{2N}} d\mathbf{r} \,\Bigl(\mathbf{r}_i-\mathbf{d}_i(t)\Bigr)\Bigl(\mathbf{r}_j-\mathbf{d}_j(t)\Bigr) \, W_0\Bigl(\mathbf{S}(t)\mathbf{r}\Bigr) \nonumber\\
&=& \sum_k \sum_l \mathbf{S}^{-1}_{i,k} \mathbf{S}^{-1}_{j,l} \int_{\mathbb{R}^{2N}} d\mathbf{z} \, \Bigl(\mathbf{z}_k-\mathbf{d}_k(0)\Bigr)\Bigl(\mathbf{z}_l-\mathbf{d}_l(0)\Bigr) \, W_0(\mathbf{z}), \nonumber \\
&=& \sum_k \sum_l \mathbf{S}^{-1}_{i,k} \mathbf{S}^{-1}_{j,l} \pmb{\sigma}_{k,l}(0),
\end{eqnarray}
which  clearly shows that
\begin{equation}
\pmb{\sigma}(t)=\mathbf{S}^{-1}(t) \,\pmb{\sigma}(0) \, \mathbf{S}^{-1\top}(t),
\label{eq:time_evol_Sigma(t)}
\end{equation}
with
\begin{equation}
\mathbf{S}^{-1}(t)=\exp\left( t\pmb{\Omega}\mathbf{H}\right), \qquad \mathbf{S}^{-1\top}(t) =\exp\left(- t\mathbf{H} \pmb{\Omega}\right).
\label{eq:spec_symp_mat}
\end{equation}

In Appendix \ref{sec:appendix_A} we prove that the symplectic eigenvalues of the covariance matrix  \eqref{eq:time_evol_Sigma(t)} are time-independent during the time evolution of a closed system under the influence of a quadratic time-independent Hamiltonian and the system's energy is conserved.

\section{Energy Conservation and Time-Independent Symplectic Eigenvalues}
\label{sec:appendix_A}

\subsection{Calculation of the energy expectation value for quadratic Hamiltonian operator }

With the quadratic Hamiltonian  \eqref{eq:quadratic_hamiltonian} the energy expectation value is
\begin{eqnarray}
\label{eq:energy_exp_val}
\langle\hat H\rangle_{\hat{\rho}(t)}
&=&
\frac12
\int_{\mathbb{R}^{2N}} d\mathbf r \,
\Bigl(\mathbf r^\top \mathbf H \,\mathbf r \Bigr) \, W(\mathbf r,t)=\frac12 \sum_{i,j=1}^N \mathbf{H}_{i,j}\int_{\mathbb{R}^{2N}} d\mathbf r \, \mathbf{r}_i\mathbf{r}_j W(\mathbf r,t) \nonumber \\
&=&\frac12 \sum_{i,j=1}^N \mathbf{H}_{i,j} \left( \pmb{\sigma}_{i,j}(t)+\mathbf{d}_i(t)\mathbf{d}_j(t)\right) \nonumber\\
&=&\frac12\,\mathrm{Tr}\,\left(\mathbf {H}\,\pmb{\sigma}(t)\right)
+
\frac12\,\mathbf d^\top(t) \mathbf H \mathbf d(t),
\label{eq:expect_H(t)}
\end{eqnarray}
where we used the definition of the covariance matrix \eqref{eq:def_cov_mat}

By expanding the exponentials into power-series, it is straightforward to see that 
\[
\mathbf{H}\exp\left( t \pmb{\Omega} \mathbf{H}\right)=\exp\left( t \mathbf{H} \pmb{\Omega} \right) \mathbf{H},
\]
Thus, we have the identity 
\begin{equation}
\exp\left(- t \mathbf{H} \pmb{\Omega} \right) \mathbf{H} \exp\left( t \pmb{\Omega} \mathbf{H}\right) = \mathbf{H}.
\label{eq:identity_2}
\end{equation}
Let us see the first term in Eq.~\eqref{eq:expect_H(t)} and let us use \eqref{eq:time_evol_Sigma(t)} together with \eqref{eq:spec_symp_mat}, and the cyclic property of the trace:
\[
\frac12 \mathrm{Tr}\,\left(\mathbf {H}\,\pmb{\sigma}(t)\right)=\frac12 \mathrm{Tr}\,\Bigl( \exp(-t \mathbf{H}\pmb{\Omega}) \mathbf{H}  \exp(t \pmb{\Omega} \mathbf{H}) \pmb{\sigma}(0)) \Bigr)=\frac12 \mathrm{Tr}\,\left(\mathbf {H}\,\pmb{\sigma}(0)\right),
\]
where in the last step we used the previous identity \eqref{eq:identity_2}. In the second term of \eqref{eq:expect_H(t)} with \eqref{eq:time_evol_d(t)} we have
\[
\frac12 \,\mathbf d^\top(t) \mathbf H \mathbf d(t)= \frac12 \,\mathbf d^\top(0) \exp\left(- t\mathbf{H} \pmb{\Omega}  \right)\mathbf {H} \exp\left(t \pmb{\Omega}\mathbf{H} \right) \mathbf d(0) = \frac12 \,\mathbf d^\top(0) \mathbf H \mathbf d(0).
\]
These expressions show that the expectation value of the energy is conserved.

In the following we apply units of $\hbar=m=1$ with characteristic frequency $\omega=1$ for simplicity.

\subsection{Symplectic eigenvalues of the covariance matrix $\pmb{\sigma}(t)$}

For a general quadratic Hamiltonian \eqref{eq:quadratic_hamiltonian}  the covariance-matrix shows time dependence as the time $t$ evolves. However,the symplectic eigenvalues \cite{serafini2024quantum} of  $\pmb{\sigma}(t)$ are time-independent. To prove this fact let us consider the equation for the characteristic polynomial of the symplectic eigenvalues:
\[
0=\det \left(i \pmb{\Omega} \pmb{\sigma}(t) - \nu \mathbbm{1}_N\right).
\]
The right-hand-side can be transformed by Eq.~\eqref{eq:time_evol_Sigma(t)} as 
\begin{eqnarray}
\det \left(i \pmb{\Omega} \pmb{\sigma}(t) - \nu \mathbbm{1}_N\right)&=&\det \left(i \pmb{\Omega} \mathbf{S}^{-1}(t)\pmb{\sigma}(0)\mathbf{S}^{-1\top}(t) - \nu \mathbbm{1}_N\right) \nonumber \\
&=& \det \Bigl( \mathbf{S}^{\top}(t) \left(i  \mathbf{S}^{-1\top}(t) \pmb{\Omega} \mathbf{S}^{-1}(t) \pmb{\sigma}(0) -\nu \mathbbm{1}_N\right) \mathbf{S}^{-1\top}(t) \Bigr) \nonumber \\
&=& \det\Bigl( \mathbf{S}^{\top}(t) \Bigr) \det \Bigl( i   \pmb{\Omega}\pmb{\sigma}(0) -\nu \mathbbm{1}_N \Bigr) \det\Bigl( \mathbf{S}^{-1\top}(t) \Bigr) \nonumber \\
&=&\det \left(i \pmb{\Omega} \pmb{\sigma}(0) - \nu \mathbbm{1}_N\right),
\label{eq:symplectic_eigenvalues(t)}
\end{eqnarray}
where we used the fact that if $\mathbf{S}(t)$ is symplectic then $\mathbf{S}^{-1\top}(t)$ is also symplectic, and thus, $\mathbf{S}^{-1\top}(t) \pmb{\Omega} \mathbf{S}^{-1}(t)=\pmb{\Omega}$. Eq.~\eqref{eq:symplectic_eigenvalues(t)} clearly shows that the symplectic eigenvalues of $\pmb{\sigma}(t)$ agree with those of $\pmb{\sigma}(0)$, therefore the symplectic eigenvalues of the covariance matrix are time independent. 

\section{Connections between polynomial and operator parameters}
\label{sec:appandix_A}

This appendix is devoted to polynomials of at most second order. As the number of independent coefficients of operator parameters $c_0$, $\mathbf{c}^{(1)}$, $\mathbf{c}^{(2)}$, $\mathbf{c}^{(3)}$ are the same as of the set of the  polynomial parameters $\mathbf{A}^{(p)}$, $\mathbf{B}^{(p)}$, $\mathbf{C}^{(p)}$, $\mathbf{D}^{(p)}$, $\mathbf{E}^{(p)}$ and $\mathrm{F}^{(p)}$ one can expect linear relations between them. These relations, however, are strongly influenced by the parameters of the Gaussian state. These relations can be derived in full generality. For now we focus on GKLC Gaussian parameters, which have the following special properties (See \eqref{eq:GKLC_G_density_matrix}): i) No linear terms in the exponent, ii) The covariance matrix $\pmb{\sigma}_{GKLC}$ is locally symmetric \cite{serafini2024quantum}, which leads to locally symmetric coefficient matrices $\mathbf{A}$ and $\mathbf{C}$, having identical diagonal and identical off-diagonal elements (See \eqref{eq:GKLC_A_and_C}), parametrized by $p$ and $q$. 

Local symmetry in matrices $\mathbf{A}$ and $\mathbf{C}$ induces permutation symmetries in the polynomial and operator parameters: e.g., if one has an expression for $\mathbf{A}_{1,1}$, one can get the corresponding relation for  $\mathbf{A}_{2,2}$ with a permutation of the matrix indices on the right hand side $1 \to 2 \to 3 \to 1$ and applying the symmetry properties of the operator parameters, if necessary, e.g., $\textrm{Re}\, \mathbf{c}_{1,3}^{(2)} \to \textrm{Re} \,\mathbf{c}_{2,1}^{(2)} = \textrm{Re} \,\mathbf{c}_{1,2}^{(2)}$,  or $\textrm{Im}\, \mathbf{c}_{1,3}^{(3)} \to \textrm{Im} \,\mathbf{c}_{2,1}^{(3)} = -\textrm{Im} \,\mathbf{c}_{1,2}^{(3)}$, etc.
Permutation symmetry is also present for the linear and constant coefficients. 

The first set of nontrivial relations is as follows:

\begin{eqnarray}
\mathbf{F}^{(p)}&=&\textbf{c}^{(0)}-\frac{\left(1+\sqrt{pq}\right)}{4 \sqrt{pq}} \textrm{Re}\left[ 2 \left(1+\sqrt{pq}\right) \left(\mathbf{c}^{(2)}_{1,1}+\mathbf{c}^{(2)}_{2,2}+\mathbf{c}^{(2)}_{3,3}\right) \right. \nonumber \\ 
&&\qquad\qquad\qquad \left.+\left(1-\sqrt{pq}\right)\left(\mathbf{c}^{(3)}_{1,1}+\mathbf{c}^{(3)}_{2,2}+\mathbf{c}^{(3)}_{3,3}\right) \right]
\end{eqnarray}

\begin{eqnarray}
\mathbf{E}^{(p)}_{1}&=& \frac{\left(1+\sqrt{pq}\right) \sqrt[4]{p q}}{3\sqrt{2} p q} \textrm{Re}\left[  \left(\sqrt{p}+2 \sqrt{q}\right)\mathbf{c}^{(1)}_{1}+\left(\sqrt{p}-\sqrt{q}\right)\left(\mathbf{c}^{(1)}_{2}+\mathbf{c}^{(1)}_{3}\right) \right], \nonumber\\
\end{eqnarray}

\begin{eqnarray}
\mathbf{D}^{(p)}_{1} &=& \frac{\left(1+ \sqrt{pq}\right)}{3 \sqrt{2}\sqrt[4]{pq}} \textrm{Im}\left[   \left(\sqrt{p}+2 \sqrt{q}\right)\mathbf{c}^{(1)}_{1}+\left(\sqrt{p}-\sqrt{q}\right)\left(\mathbf{c}^{(1)}_{2}+\mathbf{c}^{(1)}_{3}\right) \right],
\end{eqnarray}

\begin{eqnarray}
\mathbf{A}^{(p)}_{1,1}
&=&\frac{\left(1+\sqrt{pq}\right)^2 }{72 \sqrt{p q}}\textrm{Re}\left[
\left(\sqrt{p}+2 \sqrt{q}\right)^2 \left(2 \mathbf{c}^{(2)}_{1,1}-\mathbf{c}^{(3)}_{1,1}\right) \right.\nonumber \\
&& \left.+\left(\sqrt{p}-\sqrt{q}\right) \left(\sqrt{p}+2 \sqrt{q}\right) 
\left(4 \mathbf{c}^{(2)}_{1,2}+4 \mathbf{c}^{(2)}_{1,3}-2 \mathbf{c}^{(3)}_{1,2}-2 \mathbf{c}^{(3)}_{1,3}\right) \right. \nonumber\\ 
&& \left. +\left(\sqrt{p}-\sqrt{q}\right)^2
\left(2 \mathbf{c}^{(2)}_{2,2}+4 \mathbf{c}^{(2)}_{2,3}+2 \mathbf{c}^{(2)}_{3,3}-\mathbf{c}^{(3)}_{2,2}-2 \mathbf{c}^{(3)}_{2,3}-\mathbf{c}^{(3)}_{3,3}\right) \right],
\end{eqnarray}

\begin{eqnarray}
\mathbf{A}^{(p)}_{1,2}&=&
\frac{\left(1+\sqrt{p q}\right)^2}{72 \sqrt{p q}} \textrm{Re}\left[
\left(2 p+5 q+2 \sqrt{p} \sqrt{q}\right) \left(2 \mathbf{c}^{(2)}_{1,2}-\mathbf{c}^{(3)}_{1,2}\right) \right. \nonumber \\
&& \left.
+\left(\sqrt{p}-\sqrt{q}\right)\left(\sqrt{p}+2 \sqrt{q}\right)
\left(2 \mathbf{c}^{(2)}_{1,1}+2 \mathbf{c}^{(2)}_{2,2}-\mathbf{c}^{(3)}_{1,1}-\mathbf{c}^{(3)}_{2,2}\right) \right. \nonumber \\
&& \left.
+\left(\sqrt{p}-\sqrt{q}\right)\left(2 \sqrt{p}+\sqrt{q}\right) \left(2 \mathbf{c}^{(2)}_{1,3}+2 \mathbf{c}^{(2)}_{2,3}-\mathbf{c}^{(3)}_{1,3}-\mathbf{c}^{(3)}_{2,3}\right) \right. \nonumber \\
&& \left.
+\left(\sqrt{p}-\sqrt{q}\right)^2 \left(2 \mathbf{c}^{(2)}_{3,3}-\mathbf{c}^{(3)}_{3,3}\right) 
\right],
\end{eqnarray}

\begin{eqnarray}
\mathbf{B}^{(p)}_{1,1}&=& 
\frac{\left(1+\sqrt{p q}\right)^2}{18 p q} \textrm{Im}\left[
\left(\sqrt{p}+2 \sqrt{q}\right)^2 \mathbf{c}^{(2)}_{1,1} \right.
\nonumber \\
&&\left. +2\left(\sqrt{p}-\sqrt{q}\right)\left(\sqrt{p}+2 \sqrt{q}\right) \left( \mathbf{c}^{(2)}_{1,2}+ \mathbf{c}^{(2)}_{1,3}\right) \right. \nonumber \\
&&\left. +\left(\sqrt{p}-\sqrt{q}\right)^2 \left(\mathbf{c}^{(2)}_{2,2} + 2\mathbf{c}^{(2)}_{2,3}+\mathbf{c}^{(2)}_{3,3}\right) 
\right],
\end{eqnarray}

\begin{eqnarray}
\mathbf{B}^{(p)}_{1,2}
&=& \frac{\left(1+\sqrt{p q}\right)^2}{36 p q} \textrm{Im} \left[ 
2 \left(2 p+5 q+2 \sqrt{p} \sqrt{q}\right) \mathbf{c}^{(2)}_{1,2} \right. \nonumber \\
&&+\left. 2\left(\sqrt{p}-\sqrt{q}\right)\left(\sqrt{p}+2 \sqrt{q}\right) \left(\mathbf{c}^{(2)}_{1,1}+\mathbf{c}^{(2)}_{2,2}\right)  \right.
\nonumber \\
&&\left. +2 \left(\sqrt{p}-\sqrt{q}\right) \left(2 \sqrt{p}+\sqrt{q}\right)\left( \mathbf{c}^{(2)}_{1,3}+ \mathbf{c}^{(2)}_{2,3}\right) \right.
\nonumber \\
&&\left. +2 \left(\sqrt{p}-\sqrt{q}\right)^2\mathbf{c}^{(2)}_{3,3} 
+3 \sqrt{q} \left(2 \sqrt{p}+\sqrt{q}\right)\mathbf{c}^{(3)}_{1,2}  \right.
\nonumber \\
&&\left. +3 \sqrt{q} \left(\sqrt{p}-\sqrt{q}\right)\left( \mathbf{c}^{(3)}_{1,3}- \mathbf{c}^{(3)}_{2,3}\right) \right],
\end{eqnarray}

\begin{eqnarray}
\mathbf{B}^{(p)}_{2,1}
&=& \frac{\left(1+\sqrt{p q}\right)^2}{36 p q} \textrm{Im} \left[ 
2 \left(2 p+5 q+2 \sqrt{p} \sqrt{q}\right) \mathbf{c}^{(2)}_{1,2} \right. \nonumber \\
&&+\left. 2\left(\sqrt{p}-\sqrt{q}\right)\left(\sqrt{p}+2 \sqrt{q}\right) \left(\mathbf{c}^{(2)}_{1,1}+\mathbf{c}^{(2)}_{2,2}\right)  \right.
\nonumber \\
&&\left. +2 \left(\sqrt{p}-\sqrt{q}\right) \left(2 \sqrt{p}+\sqrt{q}\right)\left( \mathbf{c}^{(2)}_{1,3}+ \mathbf{c}^{(2)}_{2,3}\right) \right.
\nonumber \\
&&\left. +2 \left(\sqrt{p}-\sqrt{q}\right)^2\mathbf{c}^{(2)}_{3,3} 
- 3\sqrt{q} \left(2 \sqrt{p}+\sqrt{q}\right)\mathbf{c}^{(3)}_{1,2}  \right.
\nonumber \\
&&\left. -3 \sqrt{q} \left(\sqrt{p}-\sqrt{q}\right)\left( \mathbf{c}^{(3)}_{1,3}- \mathbf{c}^{(3)}_{2,3}\right) \right],
\end{eqnarray}

\begin{eqnarray}
\mathbf{C}^{(p)}_{1,1}
&=&\frac{\left(1+\sqrt{p q}\right)^2}{72 p^{3/2} q^{3/2}}
\textrm{Re}\left[
\left(\sqrt{p}+2 \sqrt{q}\right)^2 \left(2 \mathbf{c}^{(2)}_{1,1}+\mathbf{c}^{(3)}_{1,1}\right) \right.\nonumber \\
&& \left.+\left(\sqrt{p}-\sqrt{q}\right) \left(\sqrt{p}+2 \sqrt{q}\right) 
\left(4 \mathbf{c}^{(2)}_{1,2}+4 \mathbf{c}^{(2)}_{1,3}+2 \mathbf{c}^{(3)}_{1,2}+2 \mathbf{c}^{(3)}_{1,3}\right) \right. \nonumber\\ 
&& \left. +\left(\sqrt{p}-\sqrt{q}\right)^2
\left(2 \mathbf{c}^{(2)}_{2,2}+4 \mathbf{c}^{(2)}_{2,3}+2 \mathbf{c}^{(2)}_{3,3}+\mathbf{c}^{(3)}_{2,2}+2 \mathbf{c}^{(3)}_{2,3}+\mathbf{c}^{(3)}_{3,3}\right) \right],
\end{eqnarray}

\begin{eqnarray}
\mathbf{C}^{(p)}_{1,2}
&=&\frac{\left(1+\sqrt{p q}\right)^2}{72 p^{3/2} q^{3/2}}
\textrm{Re}\left[
\left(2 p+5 q+2 \sqrt{p} \sqrt{q}\right) \left(2 \mathbf{c}^{(2)}_{1,2}+\mathbf{c}^{(3)}_{1,2}\right) \right. \nonumber \\
&& \left.
+\left(\sqrt{p}-\sqrt{q}\right)\left(\sqrt{p}+2 \sqrt{q}\right)
\left(2 \mathbf{c}^{(2)}_{1,1}+2 \mathbf{c}^{(2)}_{2,2}+\mathbf{c}^{(3)}_{1,1}+\mathbf{c}^{(3)}_{2,2}\right) \right. \nonumber \\
&& \left.
+\left(\sqrt{p}-\sqrt{q}\right)\left(2 \sqrt{p}+\sqrt{q}\right) \left(2 \mathbf{c}^{(2)}_{1,3}+2 \mathbf{c}^{(2)}_{2,3}+\mathbf{c}^{(3)}_{1,3}+\mathbf{c}^{(3)}_{2,3}\right) \right. \nonumber \\
&& \left.
+\left(\sqrt{p}-\sqrt{q}\right)^2 \left(2 \mathbf{c}^{(2)}_{3,3}+\mathbf{c}^{(3)}_{3,3}\right) 
\right].
\end{eqnarray}

The inverse relations are as follows:

\begin{eqnarray}
\mathbf{c}^{(0)}&=&\mathrm{F}^{(p)}+\frac{2}{3 \sqrt{pq} \left(1+\sqrt{p q}\right)} \left[
(2 p+q) \left(\mathbf{A}^{(p)}_{1,1}+\mathbf{A}^{(p)}_{2,2}+\mathbf{A}^{(p)}_{3,3}\right) \right. \nonumber \\
&& \left.
-2 (p-q) (\mathbf{A}^{(p)}_{1,2}+\mathbf{A}^{(p)}_{1,3}+\mathbf{A}^{(p)}_{2,3})-2 \sqrt{pq} (p-q) (\mathbf{C}^{(p)}_{1,2}+\mathbf{C}^{(p)}_{1,3}+\mathbf{C}^{(p)}_{2,3})
 \right. \nonumber \\
&&\left. +\sqrt{pq}(2 p+q) (\mathbf{C}^{(p)}_{1,1}+\mathbf{C}^{(p)}_{2,2}+\mathbf{C}^{(p)}_{3,3}) \right],
\end{eqnarray}

\begin{eqnarray}
\textrm{Re}\left(\mathbf{c}^{(1)}_{1}\right)=\frac{\sqrt{2}\sqrt[4]{pq}}{3 \left(1+\sqrt{p q}\right)} \left[\left(2 \sqrt{p}+\sqrt{q}\right) \mathbf{E}^{(p)}_{1}-\left(\sqrt{p}-\sqrt{q}\right) \left(\mathbf{E}^{(p)}_{2}+\mathbf{E}^{(p)}_{3}\right)\right], \nonumber \\
\end{eqnarray}

\begin{eqnarray}
\textrm{Im}\left(\mathbf{c}^{(1)}_1\right)=\frac{\sqrt{2}}{3 \sqrt[4]{p q} \left(1+\sqrt{p q}\right)} \left[\left(2 \sqrt{p}+\sqrt{q}\right) \mathbf{D}^{(p)}_{1}-\left(\sqrt{p}-\sqrt{q}\right) (\mathbf{D}^{(p)}_{2}+\mathbf{D}^{(p)}_{3})\right], \nonumber \\
\end{eqnarray}

\begin{eqnarray}
\textrm{Re}\left(\mathbf{c}^{(2)}_{1,1}\right)&=&\frac{2}{9 \sqrt{pq}\left( 1+ \sqrt{pq}\right)^2 } \left[ 
\left(2 \sqrt{p}+\sqrt{q}\right)^2 \mathbf{A}^{(p)}_{1,1} 
\right.\nonumber \\
&& \left. -2 \left(\sqrt{p}-\sqrt{q}\right) \left(2 \sqrt{p}+\sqrt{q}\right) \left( \mathbf{A}^{(p)}_{1,2}+\mathbf{A}^{(p)}_{1,3}\right) \right. \nonumber \\
&& \left. +\left(\sqrt{p}-\sqrt{q}\right)^2 \left(\mathbf{A}^{(p)}_{2,2}+2 \mathbf{A}^{(p)}_{2,3}+\mathbf{A}^{(p)}_{3,3}\right) \right. \nonumber \\
&& \left. +p q \left(2 \sqrt{p}+\sqrt{q}\right)^2 \mathbf{C}^{(p)}_{1,1}  \right. 
\left. -2 p q\left(\sqrt{p}-\sqrt{q}\right)\left(2 \sqrt{p}+\sqrt{q}\right)  (\mathbf{C}^{(p)}_{1,2}+\mathbf{C}^{(p)}_{1,3}) \right. \nonumber \\
&& \left. +p q \left(\sqrt{p}-\sqrt{q}\right)^2(\mathbf{C}^{(p)}_{2,2}+2 \mathbf{C}^{(p)}_{2,3}+\mathbf{C}^{(p)}_{3,3}) \right],
\end{eqnarray}

\begin{eqnarray}
\textrm{Re}\left(\mathbf{c}^{(2)}_{1,2}\right)
&=&\frac{2}{9 \sqrt{pq} \left(1+ \sqrt{p q}\right)^2 }
\left[
\left(5 p+2 q+2 \sqrt{p} \sqrt{q}\right) \mathbf{A}^{(p)}_{1,2} \right.\nonumber \\
&&
\left. -\left(\sqrt{p}-\sqrt{q}\right) \left(2 \sqrt{p}+\sqrt{q}\right) \left(\mathbf{A}^{(p)}_{1,1}+\mathbf{A}^{(p)}_{2,2}\right)
\right. \nonumber \\
&& \left.
-\left(\sqrt{p}-\sqrt{q}\right)\left(\sqrt{p}+2 \sqrt{q}\right) \left(\mathbf{A}^{(p)}_{1,3}+\mathbf{A}^{(p)}_{2,3}\right) 
+\left(\sqrt{p}-\sqrt{q}\right)^2\mathbf{A}^{(p)}_{3,3} \right. \nonumber \\
&&
\left. +p q \left(5 p+2 q+2 \sqrt{p} \sqrt{q}\right) \mathbf{C}^{(p)}_{1,2} \right. \nonumber \\
&& \left.-pq \left(\sqrt{p}-\sqrt{q}\right) \left(2 \sqrt{p}+\sqrt{q}\right) 
\left(\mathbf{C}^{(p)}_{1,1}+\mathbf{C}^{(p)}_{2,2}\right) \right. \nonumber \\
&& \left.-pq \left(\sqrt{p}-\sqrt{q}\right) \left(\sqrt{p}+2 \sqrt{q}\right)  \left(\mathbf{C}^{(p)}_{1,3}+\mathbf{C}^{(p)}_{2,3}\right) 
+p q \left(\sqrt{p}-\sqrt{q}\right)^2 \mathbf{C}^{(p)}_{3,3} 
\right],\nonumber\\
\end{eqnarray}

\begin{eqnarray}
\textrm{Im}\left(\mathbf{c}^{(2)}_{1,1}\right)
&=&\frac{2}{9 \left(1+\sqrt{pq}\right)^2}
\left[
\left(2 \sqrt{p}+\sqrt{q}\right)^2\mathbf{B}^{(p)}_{1,1}  \right. \nonumber \\
&& \left.
-\left(\sqrt{p}-\sqrt{q}\right) \left(2 \sqrt{p}+\sqrt{q}\right) \left(\mathbf{B}^{(p)}_{1,2}+\mathbf{B}^{(p)}_{1,3}+\mathbf{B}^{(p)}_{2,1}+\mathbf{B}^{(p)}_{3,1}\right) \right. \nonumber \\
&& \left.
+\left(\sqrt{p}-\sqrt{q}\right)^2\left(\mathbf{B}^{(p)}_{2,2}+\mathbf{B}^{(p)}_{2,3}+\mathbf{B}^{(p)}_{3,2}+\mathbf{B}^{(p)}_{3,3}\right) 
\right],
\end{eqnarray}

\begin{eqnarray}
\textrm{Im}\left(\mathbf{c}^{(2)}_{1,2}\right)
&=&
\frac{2}{18 \left(1+\sqrt{p q}\right)^2} \left[
\left(5 p+2 q+2 \sqrt{pq}\right) \left(\mathbf{B}^{(p)}_{1,2}+\mathbf{B}^{(p)}_{2,1}\right) \right. \nonumber \\
&&\left.-2 \left(\sqrt{p}-\sqrt{q}\right)\left(2 \sqrt{p}+\sqrt{q}\right)  (\mathbf{B}^{(p)}_{1,1}+\mathbf{B}^{(p)}_{2,2}) \right. \nonumber \\
&&\left. -\left(\sqrt{p}-\sqrt{q}\right)\left(\sqrt{p}+2 \sqrt{q}\right) \left(\mathbf{B}^{(p)}_{1,3}+\mathbf{B}^{(p)}_{2,3}+\mathbf{B}^{(p)}_{3,1}+\mathbf{B}^{(p)}_{3,2}\right) \right. \nonumber \\
&&  \left. +2 \mathbf{B}^{(p)}_{3,3} \left(\sqrt{p}-\sqrt{q}\right)^2
\right],
\end{eqnarray}

\begin{eqnarray}
\textrm{Re}\left(\mathbf{c}^{(3)}_{1,1}\right)	
&=&\frac{4}{9 \sqrt{pq} \left(1+\sqrt{p} \sqrt{q}\right)^2}
\left[
-\left(2 \sqrt{p}+\sqrt{q}\right)^2\mathbf{A}^{(p)}_{1,1} \right.\nonumber \\
&& \left.+2 \left(\sqrt{p}-\sqrt{q}\right) \left(2 \sqrt{p}+\sqrt{q}\right) \left(\mathbf{A}^{(p)}_{1,2}+\mathbf{A}^{(p)}_{1,3} \right)  \right. \nonumber \\
&& \left.
-\left(\sqrt{p}-\sqrt{q}\right)^2 \left(\mathbf{A}^{(p)}_{2,2}+2 \mathbf{A}^{(p)}_{2,3}+\mathbf{A}^{(p)}_{3,3}\right)  \right. \nonumber \\
&& \left. +p q \left(2 \sqrt{p}+\sqrt{q}\right)^2 \mathbf{C}^{(p)}_{1,1} 
-2 pq  \left(\sqrt{p}-\sqrt{q}\right) \left(2 \sqrt{p}+\sqrt{q}\right)  \left(\mathbf{C}^{(p)}_{1,2}+\mathbf{C}^{(p)}_{1,3} \right) \right. \nonumber \\
&& \left. +p q \left(\sqrt{p}-\sqrt{q}\right)^2
\left(\mathbf{C}^{(p)}_{2,2}+2 \mathbf{C}^{(p)}_{2,3}+\mathbf{C}^{(p)}_{3,3}\right) 
\right],
\end{eqnarray}

\begin{eqnarray}
\textrm{Re}\left(\mathbf{c}^{(3)}_{1,2}\right)
&=&\frac{4}{9 \sqrt{pq} \left(1+\sqrt{p q}\right)^2 } 
\left[
\left(\sqrt{p}-\sqrt{q}\right)\left(2 \sqrt{p}+\sqrt{q}\right) \left(\mathbf{A}^{(p)}_{1,1}+\mathbf{A}^{(p)}_{2,2}\right) \right. \nonumber \\
&&
\left. -\left(5 p+2 q+2 \sqrt{p q}\right) \mathbf{A}^{(p)}_{1,2}
+\left(\sqrt{p}-\sqrt{q}\right)\left(\sqrt{p}+2 \sqrt{q}\right) \left(\mathbf{A}^{(p)}_{1,3}+\mathbf{A}^{(p)}_{2,3}\right)  \right. \nonumber \\
&&
\left.-\left(\sqrt{p}-\sqrt{q}\right)^2 \mathbf{A}^{(p)}_{3,3} 
-pq \left(\sqrt{p}-\sqrt{q}\right)\left(2 \sqrt{p}+\sqrt{q}\right)  \left(\mathbf{C}^{(p)}_{1,1}+\mathbf{C}^{(p)}_{2,2}\right)  \right. \nonumber \\ 
&&
\left. +p q \left(5 p+2 q+2 \sqrt{p q}\right) \mathbf{C}^{(p)}_{1,2}
-pq \left(\sqrt{p}-\sqrt{q}\right)\left(\sqrt{p}+2 \sqrt{q}\right)(\mathbf{C}^{(p)}_{1,3}+\mathbf{C}^{(p)}_{2,3})  \right. \nonumber \\
&&
\left. +p q \left(\sqrt{p}-\sqrt{q}\right)^2\mathbf{C}^{(p)}_{3,3} 
\right],
\end{eqnarray}

\begin{eqnarray}
\textrm{Im}\left(\mathbf{c}^{(3)}_{1,2}\right)
&=&\frac{2\sqrt{p}}{3 \left(1+\sqrt{p q}\right)^2} \left[
\left(\sqrt{p}+2 \sqrt{q}\right)\left(\mathbf{B}^{(p)}_{1,2}-\mathbf{B}^{(p)}_{2,1}\right) \right. \nonumber \\
&& \left. +\left(\sqrt{p}-\sqrt{q}\right) \left(\mathbf{B}^{(p)}_{3,1}-\mathbf{B}^{(p)}_{1,3}+\mathbf{B}^{(p)}_{2,3}-\mathbf{B}^{(p)}_{3,2}\right) 
 \right].
\end{eqnarray}

\printbibliography

@book{Neumann,
author = {Neumann, J.~v.},
publisher = {Springer},
title = {Mathematische Grundlagen der Quantenmechanik},
year = {1932},
}

@article{HPZ,
  title = {Quantum Brownian motion in a general environment: Exact master equation with nonlocal dissipation and colored noise},
  author = {Hu, B. L. and Paz, Juan Pablo and Zhang, Yuhong},
  journal = {Phys. Rev. D},
  volume = {45},
  issue = {8},
  pages = {2843--2861},
  numpages = {0},
  year = {1992},
  publisher = {American Physical Society},
  doi = {10.1103/PhysRevD.45.2843},
 }

@article{H-Y,
  title = {Alternative derivation of the Hu-Paz-Zhang master equation of quantum Brownian motion},
  author = {Halliwell, J. J. and Yu, T.},
  journal = {Phys. Rev. D},
  volume = {53},
  issue = {4},
  pages = {2012--2019},
  numpages = {0},
  year = {1996},
  publisher = {American Physical Society},
  doi = {10.1103/PhysRevD.53.2012},
 }

@article{HBCSCS,
  title = {Range of applicability of the Hu-Paz-Zhang master equation},
  author = {Homa, G. and Csord\'as, A. and Csirik, M.~A. and Bern\'ad, J.~Z.},
  journal = {Phys. Rev. A},
  volume = {102},
  issue = {2},
  pages = {022206},
  numpages = {13},
  year = {2020},
 publisher = {American Physical Society},
  doi = {10.1103/PhysRevA.102.022206},
 }

@article{Newton,
doi = {10.1088/1751-8121/acc147},
year = {2023},
publisher = {IOP Publishing},
volume = {56},
issue = {14},
pages = {145203},
author = {G.~Homa and R.~Balka and J.~Z.~Bern\'ad and M.~K\'aroly and A.~Csord\'as},
title = {Newton’s identities and positivity of trace class integral operators},
journal = {J.~Phys.~A: Math.~Theor.}
}

@article{HCSB,
  title = {Analytical evaluation of the coefficients of the Hu-Paz-Zhang master equation: Ohmic spectral density, zero temperature, and consistency check},
  author = {Homa, G. and Bern\'ad, J. Z. and Csord\'as, A.},
  journal = {Phys.~Rev.~A},
  volume = {108},
  issue = {1},
  pages = {012210},
  numpages = {15},
  year = {2023},
  publisher = {American Physical Society},
  doi = {10.1103/PhysRevA.108.012210},
  }

@article{SimonR ,
  title = {Peres-Horodecki Separability Criterion for Continuous Variable Systems},
  author = {Simon, R.},
  journal = {Phys. Rev. Lett.},
  volume = {84},
  issue = {12},
  pages = {2726--2729},
  numpages = {0},
  year = {2000},
  publisher = {American Physical Society},
  doi = {10.1103/PhysRevLett.84.2726},
  }

@Article{Bernád2018,
author={Bern{\'a}d, J.~Z.
and Homa, G.
and Csirik, M.~A.},
title={An entropy production based method for determining the position diffusion's coefficient of a quantum Brownian motion},
journal={Eur.~Phys.~J.~D},
year={2018},
volume={72},
issue={12},
pages={212},
doi={10.1140/epjd/e2018-90476-0},
}

@book{Stein,
author= "Stein, E.~M. and Shakarchi, R.",
title= "{Real analysis: measure theory, integration, and Hilbert
                       spaces}",
publisher= "Princeton Univ. Press",
series        = "Princeton lectures in analysis",
year          = "2005",
}

@Book{Duflo,
author={Duflo, M.},
title={G{\'e}n{\'e}ralit{\'e}s sur les repr{\'e}sentations induites },
year={1972},
pages = {93-119},
publisher={Groupes de Lie \newline R{\'e}solubles, Monographies de la Soc.~Math.~de France, vol.~4, Dunod, Paris }
}

@book{deGosson_book,
title= {Symplectic Geometry and Quantum Mechanics},
author={de Gosson, M.~A.},
 lccn={2016346359},
  series={Operator Theory:  Advances and Applications},
doi={10.1007/3-7643-7575-2},
   year={ 2006 },
  publisher={Birkh{\"a}user Basel}
}

@article{Perescrit,
  title = {Separability Criterion for Density Matrices},
  author = {Peres, A.},
  journal = {Phys.~Rev. Lett.},
  volume = {77},
  issue = {8},
  pages = {1413--1415},
  numpages = {0},
  year = {1996},
  publisher = {American Physical Society},
  doi = {10.1103/PhysRevLett.77.1413},
 }

@article{Neumann1927,
author = {Neumann, J.~v.},
journal = {Nachrichten von der Gesellschaft der Wissenschaften zu Göttingen, Mathematisch-Physikalische Klasse},
pages = {245-272},
title = {Wahrscheinlichkeitstheoretischer Aufbau der \\ Quantenmechanik},
volume = {1927},
year = {1927},
}

@book{lax2002functional,
  title={Functional Analysis},
  author={Lax, P.~D.},
  lccn={01046547},
  series={Pure and Applied Mathematics: A Wiley Series of Texts, Monographs and Tracts},
  year={2002},
  publisher={Wiley}
}

@article{weedbrook2012gaussian,
  title={Gaussian quantum information},
  author={Weedbrook, Christian and Pirandola, Stefano and Garc{\'\i}a-Patr{\'o}n, Ra{\'u}l and Cerf, Nicolas J. and Ralph, Timothy C. and Shapiro, Jeffrey H. and Lloyd, Seth},
  journal={Reviews of Modern Physics},
  volume={84},
  pages={621},
  year={2012}
}

@review{adesso2014continuous,
  title={Continuous variable quantum information: Gaussian states and beyond},
  author={Adesso, Gerardo and Ragy, Sammy and Lee, Antony R.},
  journal={Open Systems \& Information Dynamics},
  volume={21},
  year={2014}
}

@article{adesso2007entanglement,
  title={Entanglement of Gaussian states},
  author={Adesso, Gerardo and Illuminati, Fabrizio},
  journal={Journal of Physics A},
  volume={40},
  year={2007}
}

@article{eisert2002distilling,
  title={Distilling Gaussian states with Gaussian operations is impossible},
  author={Eisert, Jens and Scheel, Stefan and Plenio, Martin B.},
  journal={Physical Review Letters},
  volume={89},
  year={2002}
}

@article{fiurasek2002gaussian,
  title={Gaussian transformations and distillation of entangled Gaussian states},
  author={Fiur{\'a}{\v{s}}ek, Jarom{\'\i}r},
  journal={Physical Review Letters},
  volume={89},
  year={2002}
}

@article{giedke2002characterization,
  title={Characterization of Gaussian operations and distillation of Gaussian states},
  author={Giedke, G. and Cirac, J. I.},
  journal={Physical Review A},
  volume={66},
  year={2002}
}

@article{braunstein2005quantum,
  title={Quantum information with continuous variables},
  author={Braunstein, Samuel L. and van Loock, Peter},
  journal={Reviews of Modern Physics},
  volume={77},
  year={2005}
}

@article{zavatta2004quantum,
  title={Quantum-to-classical transition with single-photon-added coherent states},
  author={Zavatta, Alessandro and Viciani, Silvia and Bellini, Marco},
  journal={Science},
  volume={306},
  year={2004}
}

@article{parigi2007probing,
  title={Probing quantum commutation rules by addition and subtraction of single photons},
  author={Parigi, Valentina and Zavatta, Alessandro and Kim, M. S. and Bellini, Marco},
  journal={Science},
  volume={317},
  year={2007}
}

@article{kirchmair2013observation,
  title={Observation of quantum state collapse and revival due to the single-photon Kerr effect},
  author={Kirchmair, Gerhard and others},
  journal={Nature},
  volume={495},
  year={2013}
}

@article{filip2005measurement,
  title={Measurement-induced continuous-variable quantum interactions},
  author={Filip, Radim and Marek, Petr},
  journal={Physical Review A},
  volume={71},
  year={2005}
}

@article{fiurasek2009engineering,
  title={Engineering quantum operations on traveling light beams by multiple photon addition and subtraction},
  author={Fiur{\'a}{\v{s}}ek, Jarom{\'\i}r},
  journal={Physical Review A},
  volume={80},
  year={2009}
}

@article{genoni2010quantifying,
  title={Quantifying non-Gaussianity for quantum information},
  author={Genoni, Marco G. and Paris, Matteo G. A. and Banaszek, Konrad},
  journal={Physical Review A},
  volume={82},
  year={2010}
}

@article{horodecki1996separability,
  title={Separability of mixed states: necessary and sufficient conditions},
  author={Horodecki, Micha{\l} and Horodecki, Pawe{\l} and Horodecki, Ryszard},
  journal={Physics Letters A},
  volume={223},
  year={1996}
}

@article{werner2001bound,
  title={Bound entangled Gaussian states},
  author={Werner, Reinhard F. and Wolf, Michael M.},
  journal={Physical Review Letters},
  volume={86},
  year={2001}
}

@article{rabl2011photon,
  title={Photon blockade effect in optomechanical systems},
  author={Rabl, Peter},
  journal={Physical Review Letters},
  volume={107},
  number={6},
  pages={063601},
  year={2011}
}

@article{kienzler2015quantum,
  title={Quantum harmonic oscillator state synthesis by reservoir engineering},
  author={Kienzler, Dominik and Lo, Hoi-Yin and Keitch, Brenden and de Clercq, Leslie and Leupold, Fabian and Lindenfelser, Florian and Marinelli, Mario and Negnevitsky, Vlad and Home, Jonathan P.},
  journal={Science},
  volume={347},
  number={6217},
  pages={53--56},
  year={2015}
}

@book{serafini2024quantum,
  title={Quantum Continuous Variables: A Primer of Theoretical Methods},
  author={Serafini, Alessio},
  publisher={CRC Press},
  year={2024}
}

@book{walls2008quantum,
  title={Quantum Optics},
  author={Walls, Daniel F. and Milburn, Gerard J.},
  publisher={Springer},
  year={2008}
}

@article{japcsi,
    author = {Balka, Richárd and Csordás, András and Homa, Gábor},
    title = {Positivity and Entanglement of Polynomial Gaussian Integral Operators},
    journal = {Progress of Theoretical and Experimental Physics},
    volume = {2024},
    number = {10},
    pages = {103A01},
    year = {2024},
    month = {09},
   issn = {2050-3911}
   }

@article{japcsi2osc,
    author = {Homa, Gábor and Hamar, Dávid and Bernád, József Zsolt and Adam, Peter and Csordás, András},
    title = {An Exact Markovian Decoherence Dynamics of Two Interacting Harmonic Oscillators Coupled to a Bosonic Heat Bath},
    journal = {Progress of Theoretical and Experimental Physics},
    volume = {2025},
    number = {12},
    pages = {123A02},
    year = {2025},
    month = {12},
    issn = {2050-3911}
    }

@book{sakurai2017modern,
  title={Modern Quantum Mechanics},
  author={Sakurai, J. J. and Napolitano, J.},
  year={2017},
  publisher={Cambridge University Press}
}

@book{weinberg1995quantum,
  title={The Quantum Theory of Fields, Vol. I},
  author={Weinberg, S.},
  year={1995},
  publisher={Cambridge University Press}
}

@book{haake2010quantum,
  title={Quantum Signatures of Chaos},
  author={Haake, F.},
  year={2010},
  publisher={Springer}
}

@article{girardeau1960relationship,
  title={Relationship between systems of impenetrable bosons and fermions in one dimension},
  author={Girardeau, M.},
  journal={J. Math. Phys.},
  volume={1},
  pages={516--523},
  year={1960}
}

@book{peskin2018introduction,
  title={An Introduction to Quantum Field Theory},
  author={Peskin, M. E. and Schroeder, D. V.},
  year={2018},
  publisher={CRC Press}
}

@book{fetter1971quantum,
  title={Quantum Theory of Many-Particle Systems},
  author={Fetter, A. L. and Walecka, J. D.},
  year={2003},
  publisher={Dover Publications}
}

@book{altland2010condensed,
  title={Condensed Matter Field Theory},
  author={Altland, A. and Simons, B.},
  year={2010},
  publisher={Cambridge University Press}
}

@book{zinn2002quantum,
  title={Quantum Field Theory and Critical Phenomena},
  author={Zinn-Justin, J.},
  year={2002},
  publisher={Oxford University Press}
}

@article{leibfried2003quantum,
  title={Quantum dynamics of single trapped ions},
  author={Leibfried, D. and Blatt, R. and Monroe, C. and Wineland, D.},
  journal={Rev. Mod. Phys.},
  volume={75},
  pages={281--324},
  year={2003}
}

@article{bloch2008many,
  title={Many-body physics with ultracold gases},
  author={Bloch, I. and Dalibard, J. and Zwerger, W.},
  journal={Rev. Mod. Phys.},
  volume={80},
  pages={885--964},
  year={2008}
}

@article{aspelmeyer2014cavity,
  title={Cavity optomechanics},
  author={Aspelmeyer, M. and Kippenberg, T. J. and Marquardt, F.},
  journal={Rev. Mod. Phys.},
  volume={86},
  pages={1391--1452},
  year={2014}
}

@article{gu2017microwave,
  title={Microwave photonics with superconducting quantum circuits},
  author={Gu, X. and Kockum, A. F. and Miranowicz, A. and Liu, Y. X. and Nori, F.},
  journal={Physics Reports},
  volume={718--719},
  pages={1--102},
  year={2017}
}

@book{haroche2006exploring,
  title={Exploring the Quantum: Atoms, Cavities, and Photons},
  author={Haroche, S. and Raimond, J.-M.},
  year={2006},
  publisher={Oxford University Press}
}

@article{caves1981quantum,
  title={Quantum limits on noise in linear amplifiers},
  author={Caves, C. M.},
  journal={Phys. Rev. D},
  volume={26},
  pages={1817--1839},
  year={1982}
}

@article{Giedke,
 title = {Separability properties of three-mode Gaussian states},
  author = {Giedke, G. and Kraus, B. and Lewenstein, M. and Cirac, J. I.},
  journal = {Phys. Rev. A},
  volume = {64},
  issue = {5},
  pages = {052303},
  numpages = {10},
  year = {2001},
  month = {Oct},
  publisher = {American Physical Society}
}

@book{Abramowitz1964,
  title     = {Handbook of Mathematical Functions with Formulas, Graphs, and Mathematical Tables},
  author    = {Milton Abramowitz and Irene A. Stegun},
  year      = {1964},
  publisher = {Dover Publications},
  address   = {New York},
  series    = {National Bureau of Standards Applied Mathematics Series},
  volume    = {55}

}

@article{PAGANI2023169361,
title = {Time evolution of density matrices as a theory of random surfaces},
journal = {Annals of Physics},
volume = {455},
pages = {169361},
year = {2023},
%issn = {0003-4916},
%doi = {https://doi.org/10.1016/j.aop.2023.169361},
%url = {https://www.sciencedirect.com/science/article/pii/S0003491623001471},
author = {Carlo Pagani and Martin Reuter},
}

@article{Choi,
title = {Choi representation of completely positive maps in brief},
title = {},
author = {Gábor Homa and Adrian Ortega and Mátyás Koniorczyk},
pages = {1123--1133},
volume = {79},
number = {12},
journal = {Zeitschrift für Naturforschung A},
year = {2024},
}

@article{Mercer,
author = {Mercer, J.},
title = {XVI.~Functions of positive and negative type, and their connection with the theory of integral equations},
journal = {Philos.~Trans.~R.~Soc.~A},
volume = {209},
pages = {415-446},
year = {1909},

}

@misc{lvovsky2020productionapplicationsnongaussianquantum,
      title={Production and applications of non-Gaussian quantum states of light}, 
      author={A. I. Lvovsky and Philippe Grangier and Alexei Ourjoumtsev and Valentina Parigi and Masahide Sasaki and Rosa Tualle-Brouri},
      year={2020},
      eprint={2006.16985},
      archivePrefix={arXiv},
      primaryClass={quant-ph},
     
}

@article{Fiurasek2009,
  author  = {Fiur\'a\v{s}ek, Jarom\'ir},
  title   = {Engineering Quantum Operations on Traveling Light Beams by Multiple Photon Addition and Subtraction},
  journal = {Physical Review A},
  volume  = {80},
  pages   = {053822},
  year    = {2009},
  doi     = {10.1103/PhysRevA.80.053822}
}

@article{Ourjoumtsev2007,
  author  = {Ourjoumtsev, A. and Dantan, A. and Tualle-Brouri, R. and Grangier, P.},
  title   = {Increasing Entanglement between Gaussian States by Coherent Photon Subtraction},
  journal = {Physical Review Letters},
  volume  = {98},
  pages   = {030502},
  year    = {2007},
  doi     = {10.1103/PhysRevLett.98.030502}
}

@article{Lund2024,
  author  = {Lund, M. M. and Yang, F. and Christiansen, V. R. and Kornovan, D. and M{\o}lmer, K.},
  title   = {Subtraction and Addition of Propagating Photons by Two-Level Emitters},
  journal = {Physical Review Letters},
  volume  = {133},
  pages   = {103601},
  year    = {2024},
  doi     = {10.1103/PhysRevLett.133.103601}
}

@article{Lee2011,
  author  = {Lee, Su-Yong and Ji, Se-Wan and Kim, Ho-Joon and Nha, Hyunchul},
  title   = {Enhancing Quantum Entanglement for Continuous Variables by a Coherent Superposition of Photon Subtraction and Addition},
  journal = {Physical Review A},
  volume  = {84},
  pages   = {012302},
  year    = {2011},
  doi     = {10.1103/PhysRevA.84.012302},
  eprint  = {1107.2771}
}

@article{Noel2025nonlinear,
  author   = {No{\"e}l, Paul and Schlitz, Richard and Karad{\v{z}}a, Emir and Lambert, Charles-Henri and Nessi, Luca and Binda, Federico and Gambardella, Pietro},
  title    = {Nonlinear longitudinal and transverse magnetoresistances due to current-induced magnon creation-annihilation processes},
  journal  = {Physical Review Letters},
  volume   = {134},
  number   = {14},
  pages    = {146701},
  year     = {2025},
  publisher= {American Physical Society},
  doi      = {10.1103/PhysRevLett.134.146701}
}

@article{Enzian2021,
  title = {Single-Phonon Addition and Subtraction to a Mechanical Thermal State},
  author = {Enzian, G. and Price, J. J. and Freisem, L. and Nunn, J. and Janousek, J. and Buchler, B. C. and Lam, P. K. and Vanner, M. R.},
  journal = {Phys. Rev. Lett.},
  volume = {126},
  issue = {3},
  pages = {033601},
  numpages = {6},
  year = {2021},
  month = {Jan},
  publisher = {American Physical Society},
  doi = {10.1103/PhysRevLett.126.033601},
  url = {https://link.aps.org/doi/10.1103/PhysRevLett.126.033601}
}

@article{Fluhmann2019encoding,
  author   = {Fl{\"u}hmann, C. and Nguyen, T. L. and Marinelli, M. and Negnevitsky, V. and Mehta, K. K. and Home, J. P.},
  title    = {Encoding a qubit in a trapped-ion mechanical oscillator},
  journal  = {Nature},
  volume   = {566},
  number   = {7745},
  pages    = {513--517},
  year     = {2019},
  publisher= {Nature Publishing Group},
  doi      = {10.1038/s41586-019-0960-6}
}

@article{Meekhof1996,
  title = {Generation of Nonclassical Motional States of a Trapped Atom},
  author = {Meekhof, D. M. and Monroe, C. and King, B. E. and Itano, W. M. and Wineland, D. J.},
  journal = {Phys. Rev. Lett.},
  volume = {76},
  issue = {11},
  pages = {1796--1799},
  numpages = {0},
  year = {1996},
  month = {Mar},
  publisher = {American Physical Society},
  doi = {10.1103/PhysRevLett.76.1796},
  url = {https://link.aps.org/doi/10.1103/PhysRevLett.76.1796}
}

@article{Schneider2012,
doi = {10.1088/0034-4885/75/2/024401},
url = {https://doi.org/10.1088/0034-4885/75/2/024401},
year = {2012},
month = {jan},
publisher = {},
volume = {75},
number = {2},
pages = {024401},
author = {Schneider, Christian and Porras, Diego and Schaetz, Tobias},
title = {Experimental quantum simulations of many-body physics with trapped ions},
journal = {Reports on Progress in Physics}
}

@article{Cao2026,
  author       = {Cao, Yun-Long and Xu, Xiao-Ye and Li, Chuan-Feng and Guo, Guang-Can},
  title        = {Gaussian Resource Based Heralded Entangled State Generation Enhanced by Photon Addition and Subtraction},
  journal      = {Advanced Quantum Technologies},
  year         = {2026},
  note         = {In press},
  eprint       = {2602.24077}
}

@article{Dodonov2009,
  author  = {Dodonov, A. V. and Mizrahi, S. S.},
  title   = {Smooth Quantum-Classical Transition in Photon Subtraction and Addition Processes},
  journal = {Physical Review A},
  volume  = {79},
  pages   = {023821},
  year    = {2009},
  doi     = {10.1103/PhysRevA.79.023821},
  eprint  = {0901.4477}
}

@article{Luo2025,
  author       = {Luo, Haoyuan and Shah, Parth S. and Yang, Frank and Mirhosseini, Mohammad and Mahmoodian, Sahand},
  title        = {Dynamic Stimulated Emission for Deterministic Addition and Subtraction of Propagating Photons},
  journal      = {arXiv preprint},
  eprint       = {2512.09711},
  archivePrefix= {arXiv},
  year         = {2025}
}

@article{Lee2010,
  author  = {Lee, Su-Yong and Nha, Hyunchul},
  title   = {Quantum State Engineering by a Coherent Superposition of Photon Subtraction and Addition},
  journal = {Physical Review A},
  volume  = {82},
  pages   = {053812},
  year    = {2010},
  doi     = {10.1103/PhysRevA.82.053812},
  eprint  = {1009.2390}
}

@article{Walschaers2017,
  author  = {Walschaers, Mattia and Fabre, Claude and Parigi, Valentina and Treps, Nicolas},
  title   = {Statistical Signatures of Multimode Single-Photon Added and Subtracted States of Light},
  journal = {Physical Review A},
  volume  = {96},
  pages   = {053835},
  year    = {2017},
  doi     = {10.1103/PhysRevA.96.053835},
  eprint  = {1708.08412}
}

@article{Thapliyal2024,
  author  = {Thapliyal, Kishore and Pe\v{r}ina, Jan and Haderka, Ond\v{r}ej and Mich\'alek, V\'aclav and Machulka, Radek},
  title   = {Experimental Photon Addition and Subtraction in Multi-Mode and Entangled Optical Fields},
  journal = {Optics Letters},
  volume  = {49},
  pages   = {4521--4524},
  year    = {2024},
  doi     = {10.1364/OL.532242},
  eprint  = {2407.09269}
}

@article{Filippov2019,
  author  = {Filippov, S. N.},
  title   = {On Quantum Operations of Photon Subtraction and Photon Addition},
  journal = {Lobachevskii Journal of Mathematics},
  volume  = {40},
  pages   = {1470--1478},
  year    = {2019},
  doi     = {10.1134/S199508021910010X},
  eprint  = {1908.02207}
}

@article{Sharma2026,
  author       = {Sharma, Priyanka and Paris, Matteo G. A. and Olivares, Stefano},
  title        = {In Situ Characterization of a Photon-Subtraction Device via Heralding Counts and Homodyne Detection},
  journal      = {arXiv preprint},
  eprint       = {2609.10100},
  archivePrefix= {arXiv},
  year         = {2026}
}

@article{Takase2026,
  author       = {Takase, Kan and Endo, Mamoru and Hanamura, Fumiya and Hirota, Kazuki and Yabuno, Masahiro and Terai, Hirotaka and Miki, Shigehito and Kashiwazaki, Takahiro and Inoue, Asuka and Umeki, Takeshi and Marek, Petr and Filip, Radim and Asavanant, Warit and Furusawa, Akira},
  title        = {Tuning Wave-Particle Duality of Quantum Light by Generalized Photon Subtraction},
  journal      = {arXiv preprint},
  eprint       = {2602.21629},
  archivePrefix= {arXiv},
  year         = {2026}
}

@article{NeergaardNielsen2011b,
  author  = {Neergaard-Nielsen, Jonas S. and Takeuchi, Makoto and Wakui, Kentaro and Takahashi, Hiroki and Hayasaka, Kazuhiro and Takeoka, Masahiro and Sasaki, Masahide},
  title   = {Photon Subtraction from Traveling Fields---Recent Experimental Demonstrations},
  journal = {Progress in Informatics},
  volume  = {8},
  pages   = {5--18},
  year    = {2011},
  eprint  = {1104.4189}
}

@article{Olivares2006,
  author  = {Olivares, Stefano and Paris, Matteo G. A.},
  title   = {De-{G}aussification by Inconclusive Photon Subtraction},
  journal = {Laser Physics},
  volume  = {16},
  pages   = {1533--1550},
  year    = {2006},
  doi     = {10.1134/S1054660X06110077},
  eprint  = {quant-ph/0610160}
}

@article{Walschaers2021PRXQuantum.2.030204,
  title = {Non-Gaussian Quantum States and Where to Find Them},
  author = {Walschaers, Mattia},
  journal = {PRX Quantum},
  volume = {2},
  issue = {3},
  pages = {030204},
  numpages = {68},
  year = {2021},
  month = {Sep},
  publisher = {American Physical Society},
  
}

@article{Ra2020natureNonGaussianQuantum,
	author = {Young-Sik Ra and Adrien Dufour and Mattia Walschaers and Clément Jacquard and Thibault Michel and Claude Fabre and Nicolas Treps },
	title = {{N}on-{G}aussian quantum states of a multimode light field},
	journal = {{N}ature {P}hysics},
	volume = {16},
    issue = {3},
    pages = {144–147},
    year = {2020},
	note = {[Accessed 08-07-2026]},
}

@article{Jiri2024quantoptExperimentalPreparationmultiphoton,
	author = {Jiří Fadrný and Michal Neset and Martin Bielak and Miroslav Ježek and Jan Bílek and Jaromír Fiurášek},
	title = {Experimental preparation of multiphoton-added coherent states of light},
	journal = { npj Quantum Inf },
	  pages = {89},
    year = {2024},
	note = {[Accessed 13-07-2026]},
}

@article{Ansari2018opticaTailoringNonlinearProcesses,
author = {Vahid Ansari and John M. Donohue and Benjamin Brecht and Christine Silberhorn},
journal = {Optica},
number = {5},
pages = {534--550},
publisher = {Optica Publishing Group},
title = {Tailoring nonlinear processes for quantum optics with pulsed temporal-mode encodings},
volume = {5},
month = {May},
year = {2018},

}

@article{Briegel2001PhysRevLettClusterStates,
  title = {Persistent Entanglement in Arrays of Interacting Particles},
  author = {Briegel, Hans J. and Raussendorf, Robert},
  journal = {Phys. Rev. Lett.},
  volume = {86},
  issue = {5},
  pages = {910--913},
  numpages = {0},
  year = {2001},
  month = {Jan},
  publisher = {American Physical Society},
}

@article{Mika2022npjQuantumInfSingleModeQuantumNonGaussianLight,
  title = {Single-mode quantum non-Gaussian light from warm atoms},
  author = {Jaromír Mika and Lukáš Lachman and Tomáš Lamich and Radim Filip and Lukáš Slodička},
  journal = {npj Quantum Inf},
  volume = {8},
  pages = {123},
  year = {2022},
 }

@article{Deleglise2008NatureReconstructionOfNonClassicalCavityFieldStates,
  title = {Reconstruction of non-classical cavity field states with snapshots of their decoherence},
  author = {Samuel Deléglise and Igor Dotsenko and Clément Sayrin and Julien Bernu and Michel Brune and Jean-Michel Raimond and Serge Haroche },
  journal = {Nature},
  volume = {455},
  pages = {510-514},
  year = {2008},
 
}

@article{Nogues2000PhysRevAMeasurementOfANegativeValueForTheWignerFunction,
  title = {Measurement of a negative value for the Wigner function of radiation},
  author = {Nogues, G. and Rauschenbeutel, A. and Osnaghi, S. and Bertet, P. and Brune, M. and Raimond, J. M. and Haroche, S. and Lutterbach, L. G. and Davidovich, L.},
  journal = {Phys. Rev. A},
  volume = {62},
  issue = {5},
  pages = {054101},
  numpages = {4},
  year = {2000},
  month = {Oct},
  publisher = {American Physical Society},
 }

@article{Hofheinz2008NatureGenerationOfFockStates,
  title = {Generation of Fock states in a superconducting quantum circuit},
  author = {Max Hofheinz and E. M. Weig and M. Ansmann and Radoslaw C. Bialczak and Erik Lucero and M. Neeley and A. D. O’Connell and H. Wang and John M. Martinis and A. N. Cleland},
  journal = {Nature},
  volume = {454},
  pages = {310–314},
  year = {2008},
 
}

@article{
Vlastakis2013ScienceDeterministicallyEncodingQuantumInformation,
author = {Brian Vlastakis  and Gerhard Kirchmair  and Zaki Leghtas  and Simon E. Nigg  and Luigi Frunzio  and S. M. Girvin  and Mazyar Mirrahimi  and M. H. Devoret  and R. J. Schoelkopf },
title = {Deterministically Encoding Quantum Information Using 100-Photon Schrödinger Cat States},
journal = {Science},
volume = {342},
number = {6158},
pages = {607-610},
year = {2013},
}

@article{ChangPhysRevXObservationOfThreePhotonSpontaneousParametricDownConversion,
  title = {Observation of Three-Photon Spontaneous Parametric Down-Conversion in a Superconducting Parametric Cavity},
  author = {Chang, C. W. Sandbo and Sab\'{\i}n, Carlos and Forn-D\'{\i}az, P. and Quijandr\'{\i}a, Fernando and Vadiraj, A. M. and Nsanzineza, I. and Johansson, G. and Wilson, C. M.},
  journal = {Phys. Rev. X},
  volume = {10},
  issue = {1},
  pages = {011011},
  numpages = {16},
  year = {2020},
  month = {Jan},
  publisher = {American Physical Society},
  }

@article{Rakhubovsky2024ProgressInQuantumElectronicsQuantumNonGaussianOptomechanics,
author = {Andrey A. Rakhubovsky and Darren W. Moore and Radim Filip},
title = {Quantum non-Gaussian optomechanics and electromechanics},
journal = {Progress in Quantum Electronics},
volume = {93},
pages = {100495},
year = {2024},
issn = {0079-6727},

}

@article{Chen2006PhysRevASimultaneousAmplitudeAndPhaseDamping,
  title = {Simultaneous amplitude and phase damping of $x\text{\ensuremath{-}}p$ symmetric Gaussian states and their separability},
  author = {Chen, Xiao-yu},
  journal = {Phys. Rev. A},
  volume = {73},
  issue = {2},
  pages = {022307},
  numpages = {5},
  year = {2006},
  month = {Feb},
  publisher = {American Physical Society},
  doi = {10.1103/PhysRevA.73.022307},
  url = {https://link.aps.org/doi/10.1103/PhysRevA.73.022307}
}

@article{Hiroshima2001PhysRevADecoherence,
  title = {Decoherence and entanglement in two-mode squeezed vacuum states},
  author = {Hiroshima, Tohya},
  journal = {Phys. Rev. A},
  volume = {63},
  issue = {2},
  pages = {022305},
  numpages = {4},
  year = {2001},
  month = {Jan},
  publisher = {American Physical Society},
  doi = {10.1103/PhysRevA.63.022305},
  url = {https://link.aps.org/doi/10.1103/PhysRevA.63.022305}
}

@book{BreuerPetruccione2002OpenQuantumSystems,
  address = {Great Clarendon Street},
  author = {Breuer, H. P. and Petruccione, F.},
  nota = {GK: excellent textbook!},
  publisher = {Oxford University Press},
  title = {The theory of open quantum systems},
  year = 2002
}

@article{Bartlett2002ClassicalSimulation,
  title = {Efficient Classical Simulation of Continuous Variable Quantum Information Processes},
  author = {Bartlett, Stephen D. and Sanders, Barry C. and Braunstein, Samuel L. and Nemoto, Kae},
  journal = {Phys. Rev. Lett.},
  volume = {88},
  issue = {9},
  pages = {097904},
  numpages = {4},
  year = {2002},
  month = {Feb},
  publisher = {American Physical Society},
  doi = {10.1103/PhysRevLett.88.097904},
  url = {https://link.aps.org/doi/10.1103/PhysRevLett.88.097904}
}

@article{Lloyd1999QuantumComputation,
  title = {Quantum Computation over Continuous Variables},
  author = {Lloyd, Seth and Braunstein, Samuel L.},
  journal = {Phys. Rev. Lett.},
  volume = {82},
  issue = {8},
  pages = {1784--1787},
  numpages = {0},
  year = {1999},
  month = {Feb},
  publisher = {American Physical Society},
  doi = {10.1103/PhysRevLett.82.1784},
  url = {https://link.aps.org/doi/10.1103/PhysRevLett.82.1784}
}

@article{Mogyorosi2019PhysRevA.100.013851,
  title = {Single-step quantum state engineering in traveling optical fields},
  author = {Mogyorósi, Gábor and Adam, Peter and Molnar, Emese and Mechler, Matyas},
  journal = {Phys. Rev. A},
  volume = {100},
  issue = {1},
  pages = {013851},
  numpages = {9},
  year = {2019},
  month = {Jul},
  publisher = {American Physical Society},
  doi = {10.1103/PhysRevA.100.013851}
}

@article{BIAGI2022100414,
title = {Photon-by-photon quantum light state engineering},
journal = {Progress in Quantum Electronics},
volume = {84},
pages = {100414},
year = {2022},
issn = {0079-6727},
doi = {https://doi.org/10.1016/j.pquantelec.2022.100414},
author = {Nicola Biagi and Saverio Francesconi and Alessandro Zavatta and Marco Bellini}
}

@article{Liu_2015,
doi = {10.1088/1612-2011/12/4/045201},
year = {2015},
month = {feb},
publisher = {IOP Publishing},
volume = {12},
number = {4},
pages = {045201},
author = {Liu, Shi-You and Li, Ya-Zhou and Hu, Li-Yun and Huang, Jie-Hui and Xu, Xue-Xiang and Tao, Xiang-Yang},
title = {Nonclassical properties of Hermite polynomial excitation on squeezed vacuum and its decoherence in phase-sensitive reservoirs},
journal = {Laser Physics Letters},
}

@article{Xu_2015,
   title={Generating Hermite polynomial excited squeezed states by means of conditional measurements on a beam splitter},
   volume={32},
   ISSN={1520-8540},
   DOI={10.1364/josab.32.001146},
   number={6},
   journal={Journal of the Optical Society of America B},
   publisher={Optica Publishing Group},
   author={Xu, Xue-xiang and Yuan, Hong-chun and Fan, Hong-yi},
   year={2015},
   month=May, pages={1146}
}

@article{Tang_2015,
doi = {10.1088/1674-1056/24/3/034208},
year = {2015},
month = {mar},
publisher = {IOP Publishing},
volume = {24},
number = {3},
pages = {034208},
author = {Tang, Xu-Bing and Gao, Fang and Wang, Yao-Xiong and Kuang, Sen and Shuang, Feng},
title = {Non-Gaussian quantum states generation and robust quantum non-Gaussianity via squeezing field*},
journal = {Chinese Physics B}
}

@article{Saidi2026,
author = {Saidi, H. and Slaoui, A. and El Hadfi, H. and Ahl Laamara, R.},
title = {Multiple phase estimation with photon-added multi-mode coherent states of GHZ-type},
journal = {International Journal of Modern Physics A},
volume = {41},
number = {08},
pages = {2650036},
year = {2026}
}

@article{He2026,
author = {He, Ze-Shan and Zhao, Yukuan and Tian, Hao-Shu and Sun, Kai and Xu, Xiao-Ye and Li, Chuan-Feng and Guo, Guang-Can},
year = {2026},
month = {04},
pages = {},
title = {Experimental detection of entanglement in multimode Gaussian states from high-order intensity correlation moments},
doi = {10.48550/arXiv.2604.27567}
}

@article{Tatsi2022PhysRevA.105.053701,
  title = {Manipulating thermal light via displaced-photon subtraction},
  author = {Tatsi, G. and Canning, D. W. and Zanforlin, U. and Mazzarella, L. and Jeffers, J. and Buller, G. S.},
  journal = {Phys. Rev. A},
  volume = {105},
  issue = {5},
  pages = {053701},
  numpages = {10},
  year = {2022},
  month = {May},
  publisher = {American Physical Society},
  doi = {10.1103/PhysRevA.105.053701}
}

@article{Roh2026,
      title={Experimental Quantum Tomography of Multimode Gaussian States}, 
      author={Chan Roh and Geunhee Gwak and Young-Do Yoon and Young-Sik Ra},
      year={2026},
      eprint={2603.21380},
      archivePrefix={arXiv},
      primaryClass={quant-ph},
      url={https://arxiv.org/abs/2603.21380},
      doi = {10.48550/arXiv.2603.21380}
}

@Article{universe10080306,
AUTHOR = {Hsiang, Jen-Tsung and Cho, Hing-Tong and Hu, Bei-Lok},
TITLE = {Graviton Physics: A Concise Tutorial on the Quantum Field Theory of Gravitons, Graviton Noise, and Gravitational Decoherence},
JOURNAL = {Universe},
VOLUME = {10},
YEAR = {2024},
NUMBER = {8},
ARTICLE-NUMBER = {306},
ISSN = {2218-1997}

}

@article{giedke2001distillability,
  title={Distillability Criterion for all Bipartite Gaussian States},
  author={Giedke, G. and Duan, L.-M. and Cirac, J. I. and Zoller, P.},
  journal={Quantum Information \& Computation},
  volume={1},
  number={3},
  pages={79--86},
  year={2001},
  publisher={Rinton Press},
  eprint={quant-ph/0104072},
  archivePrefix={arXiv}
}

@article{Horodecki2009,
  title = {Quantum entanglement},
  author = {Horodecki, Ryszard and Horodecki, Pawe{\l} and
            Horodecki, Micha{\l} and Horodecki, Karol},
  journal = {Rev. Mod. Phys.},
  volume = {81},
  pages = {865--942},
  year = {2009},
  doi = {10.1103/RevModPhys.81.865}
}

@article{vernon,
    author = {Richard P Feynman and Frank L Vernon},
    title = {The theory of a general quantum system interacting with a linear dissipative system},
    journal = {Annals of Physics},
volume={24},
    year = {1963},
pages={118--173},
}

\end{document}